\documentclass[conference]{IEEEtran}

\usepackage[utf8]{inputenc}
\usepackage{amssymb}
\usepackage{url}
\usepackage[font=small]{caption}
\usepackage{graphicx}
\usepackage{subcaption}
\usepackage{lipsum}
\usepackage{mwe}
\usepackage{multirow}
\usepackage{xspace}
\usepackage{titlesec}
\usepackage[draft]{changes}
\usepackage{listings}
\usepackage{fancyvrb}
\usepackage{framed}
\usepackage{amsthm}
\usepackage[listings,skins]{tcolorbox}
\usepackage{amsfonts}
\usepackage{amsmath}
\usepackage[normalem]{ulem}
\usepackage{filecontents}
\usepackage[normalem]{ulem}
\usepackage[framemethod=TikZ]{mdframed}
\usepackage{setspace}

\usepackage{pifont}%

\definechangesauthor[color=red]{HM}
\definechangesauthor[color=blue]{K.K.}

\colorlet{dark-green}{green!50!black}
\colorlet{dark-cyan}{cyan!50!black}

\newif\ifshownotes
\shownotestrue
\ifshownotes
\newcommand{\knote}[1] {{$\langle${\textcolor{blue}{K.K.: \textbf{#1}}}$\rangle$}}
\newcommand{\hnote}[1] {{$\langle${\textcolor{dark-cyan}{HM: \textbf{#1}}}$\rangle$}}
\newcommand{\jnote}[1] {{$\langle${\textcolor{dark-green}{Jia: \textbf{#1}}}$\rangle$}}
\newcommand{\snote}[1] {{$\langle${\textcolor{orange}{Silas: \textbf{#1}}}$\rangle$}}
\else
\newcommand{\knote}[1] {\xspace}
\newcommand{\hnote}[1] {\xspace}
\newcommand{\jnote}[1] {\xspace}
\newcommand{\snote}[1] {\xspace}
\fi

\newcommand{\hide}[1]{}
\newcommand{\eg}{{\em e.g., }}

\newcommand{\vs}{{\em vs. }}

\newcommand{\calU}{\mathcal{U}}
\newcommand{\msg}{{\sf msg}}
\newcommand{\calA}{\mathcal{A}}
\newcommand{\calB}{\mathcal{B}}
\newcommand{\calC}{\mathcal{C}}
\newcommand{\pke}{{\sf PKE}}
\newcommand{\calN}{\mathcal{N}}
\newcommand{\pk}{{\sf pk}}
\newcommand{\G}{\mathbb{G}}
\newcommand{\bbB}{\mathbb{B}}
\newcommand{\veb}{{\sf VEB}}
\newcommand{\bM}{{\bf M}}

\newtheorem{thm}{Theorem}

\newtheorem{clm}{Claim}
\newcommand{\abe}{{\sf ABE}}

\newcommand{\setup}{{\sf Setup}}

\newcommand{\keygen}{{\sf KeyGen}}
\newcommand{\enc}{{\sf Enc}}
\newcommand{\dec}{{\sf Dec}}

\newcommand{\mpk}{{\sf mpk}}
\newcommand{\msk}{{\sf msk}}

\newcommand{\sk}{{\sf sk}}
\newcommand{\ct}{{\sf ct}}

\newcommand{\sfspan}{{\sf span}}

\titlespacing*{\section}
{0pt}{4pt plus 1pt minus 4pt}{4.3pt plus 4pt}
\titlespacing*{\subsection}
{0pt}{4pt plus 1pt minus 4pt}{4.3pt plus 4pt}
\makeatletter
\renewenvironment{itemize} {
\begin{list}{$\bullet$}
    {
    \setlength{\itemsep}{2pt}
     \setlength{\parsep}{2pt}
     \setlength{\topsep}{2pt}
     \setlength{\partopsep}{0pt}
     \setlength{\leftmargin}{0.9em}
     \setlength{\labelwidth}{0.6em}
     \setlength{\labelsep}{0.4em}
    }
}
{\end{list}}
\makeatother

\makeatletter
\def\@IEEEsectpunct{\ \,}
\def\paragraph{\@startsection{paragraph}{4}{\z@}{1.5ex plus 3ex minus 3ex}%
{-.01em}{\normalfont\normalsize\bfseries}}
\makeatother

\newtheoremstyle{noindentdefn}   %
  {3pt}                          %
  {3pt}                          %
  {\itshape}                    %
  {}                            %
  {\bfseries}                  %
  {}                           %
  {.5em}                        %
  {\thmname{#1}~\thmnumber{#2} \thmnote{ \textnormal{#3}}}                            %

\theoremstyle{noindentdefn}

\begin{document}

\newcommand{\name}{\textit{ABEAT}\xspace}
\title{\name: Efficient and Anonymous Encryption for ABE-based Dynamic Group Communication\vspace{-5mm}}

\author{\IEEEauthorblockN
{Hongmiao Yu\IEEEauthorrefmark{1}, Silas Richelson\IEEEauthorrefmark{1}, Jiachen Chen\IEEEauthorrefmark{2}, K. K. Ramakrishnan\IEEEauthorrefmark{1}}
\IEEEauthorblockA{\IEEEauthorrefmark{1}
Department of Computer Science and Engineering, 
University of California Riverside, Riverside, USA}
\IEEEauthorblockA{\IEEEauthorrefmark{2} WINLAB, 
Rutgers University, North Brunswick, NJ, USA.}

\vspace{-13mm}}

\thispagestyle{plain}
\pagestyle{plain}
\maketitle

\begin{abstract}

Confidential communication among a dynamic group of participants that  ensures flexible and efficient many-to-many communication is highly desired capability.  
We leverage attribute-based encryption (ABE) for confidential group communication and enhance it by a graph-based namespace to create an efficient framework that allows groups to be formed and changed dynamically. 
In this paper, we focus on the important additional need to maintain the anonymity of recipients of a message, when using ABE for group communication for a variety of usage scenarios (e.g., emergency response). 

We propose \name, an efficient and anonymous dynamic group communication system that also minimizes overhead on receivers who are not the intended recipients of a message. 
In \name, we propose a new anonymous KP-ABE approach to maintain recipient anonymity. \name hides the clear attribute in the ciphertext of KP-ABE, and also prevents several attacks that seek to break anonymity. 
\name provides fast recipient verification, dramatically lowering  the decryption overhead for non-recipients by more than a factor of 90 versus the current state of the art such as hidden vector encryption (HVE). In fact, it is 
even 40\% less than FABEO, which offers no anonymity.
 
\end{abstract}

\section{Introduction}
Group communication is critical to our society, being utilized for immediate interactive communication and notification among people and organizations. While commercial platforms such as WhatsApp and Signal provide large-scale solutions for secure communication, many specialized use cases require additional features which are not provided by the big platforms.  Consider, for example, situations such as emergency response, military tactical operations, or corporate communications.  Since group management (creation and modification) is costly for these large platforms, they are problematic in situations where the group composition changes dynamically.  Additionally, end-to-end encryption (usually considered the ``gold standard'' for security in group communication) actually precludes features such as auditing which might be desirable. Indeed, during an emergency, first responders from different departments might collaborate, forming task-oriented groups for specific missions.  Membership of these groups may need to change on the fly, without requiring costly group management operations every time.  Moreover, while communication transcripts must be encrypted to ensure confidentiality against malicious parties, they would also need to be accessible to internal departments to ensure accountability, and allow for post-incident forensic analysis. Another requirement is that the underlying network infrastructure during emergencies may experience disruptions, resulting in loss, delay, and  
messages being delivered out-of-order, rather than delivering messages (especially application control messages) reliably.

{\bf Use of Namespaces for Groups. }Previous work has shown that namespaces are an effective mechanism for managing dynamic group communication for emergency response~\cite{POISE, SAFE}. Namespaces facilitate the dynamic formation of groups among recipients and are typically used in a publish/subscribe framework structured according to the role-hierarchy of the participating members. Before an operation begins, members subscribe to names that correspond to their assigned roles. During the operation, a sender publishes messages to one or more roles rather than to specific individuals. 
Furthermore, namespaces can be flexibly modified, or names (or even sub-trees of the namespace) added during an emergency response, supporting highly dynamic group membership changes. As described in ~\cite{SAFE}, Attribute-Based Encryption (ABE) can be an effective mechanism to ensure confidentiality for such namespace-based dynamic group communication. ABE provides several advantages: (1) No need for group key establishment: ABE eliminates the requirement for a shared group key among members. This is especially valuable in emergency or tactical scenarios where some participants may be unreachable intermittently due to damaged infrastructure~\cite{POISE}. By treating namespaces as attributes, senders can target specific roles at the time of sending a message, freely adjusting the recipients for each transmission. In contrast, group messaging systems such as Message Layer Security (MLS) require ordered, lossless message delivery, and relatively stable membership, making them unsuitable for highly dynamic groups and unreliable communication environments. (2) Fine-grained access control: ABE supports flexible policies  
with attributes, enabling mechanisms such as time-based access control (e.g., using timestamps as attributes to determine when a message becomes accessible). Other namespace-based encryption approaches that rely on traditional PKE—such as NDN trust schemes~\cite{TrustScheme}—lack this level of expressiveness.
(3) Centralized key issuance for accountability: ABE relies on a central authority to issue decryption keys. While centralized key management may raise privacy concerns in general-purpose consumer focused messaging applications, it is advantageous in specialized operations (first responder or corporate group communications). The key issuer can support auditing, oversight, and post-operation analysis by appropriate entities with authorized access.

{\bf Anonymous ABE.} Anonymity is a security notion which guarantees that nobody is able to tell who a message is intended for, unless it has been sent to them.  Since ABE attributes are typically not hidden by ABE ciphertexts, our plan to use the attributes to identify the target recipients raises a security concern which would be extremely problematic in an emergency response scenario.  For example, imagine a hostage situation: if law enforcement is communicating non-anonymously, then the perpetrators might notice if the task force HQ sends a series of messages to a particular unit.  This could reveal the rescue plan to perpetrators, \emph{even if they are unable to read the actual messages}.

Prior work has looked at strengthening ABE security by incorporating various notions of anonymity, and existing solutions can be broadly classified into two types.  There are \emph{general purpose solutions} which use exotic cryptographic primitives to provide full anonymity in ABE~\cite{HVE,HideCP19,HCP15,HCP16}.  However, the resulting schemes are very inefficient for practical use; \emph{e.g.}, they are orders of magnitude slower than state-of-the-art ABE without anonymity, as we show in our evaluations(\S\ref{sec:eval}). 
There are also \emph{application-specific solutions} that sacrifice full anonymity to meet application-specific requirements in exchange for higher efficiency~\cite{FEASE,PHCP18}. A practical design that achieves both full anonymity and high efficiency is therefore needed.

\paragraph*{Our Contributions.} In this work, we premier \name $-$ ABE that is Anonymous and Tunes out non-recipients $-$ which enables lightweight anonymous group communication for a variety of situations involving dynamic groups. \name is effective even when messages may not be delivered in sequence or in a timely manner, and end-to-end communication may be disrupted, such as during a disaster response.  Specifically, \name is an ABE-based group communication scheme with the following attractive features.

\begin{itemize}
    \item[$\bullet$ {\bf Seamless Group Management:}] \name harnesses the ``one-to-many'' encryption capability of ABE to make group management completely trivial.  \name has no explicit group formation, using instead self-describing per-message groups.  In other words, for each message the sender simply specifies the group of intended recipients during encryption.  This stands in contrast to typical group communication systems where group management involves a non-trivial cost in terms of control plane messaging (whether it uses multicast or is server-based), especially when recipient anonymity is desired. 

    \item[$\bullet$ {\bf Recipient Anonymity:}] The most novel security feature of \name is recipient anonymity, meaning that messages sent across the network in \name do not reveal who they are intended for.  Our solution is efficient and ``application specific'' (in the context of the previous discussion), taking advantage of the fact that in our setting, the attributes we wish to hide correspond to users who are registered into our system before communicating.  By inserting a special key generation procedure into the registration process, we achieve anonymity for essentially the same cost as standard (non-anonymous) ABE.

    \item[$\bullet$ {\bf Recipient Verification:}] One unintended side effect of introducing recipient anonymity, is that users cannot tell whether a particular message is intended for them, other than by attempting decryption.  Obviously, requiring every user to decrypt every message (whether that user is a recipient of the message or not) would constitute a massive waste of system resources, and would be untenable.  Instead, \name includes a novel recipient verification module so users can inexpensively tell whether a message is meant for them, without performing the expensive ABE decryption. This check is roughly 80x less costly than full decryption.

    \item[$\bullet$ {\bf Dual authorities:}] Many modern group chat applications provide ``end-to-end encryption'' so that messages sent using the platform can only be decrypted by the intended receiver; in particular, the platform itself cannot decrypt and read the messages. One issue with using ABE for group messaging is that ABE systems produce a master secret key.  This key is typically held by a central key authority who can use the master key to decrypt all messages.  \name instead splits the central key authority into two (or more) ``dual key authorities'', each of which manages only a share of the master secret key.  \name provides end-to-end encryption as long as the authorities do not collude with one another.  Moreover, \name enables message auditing if and when the authorities cooperate to decrypt selected messages (e.g., for an audit of first responder operations).    

    \item[$\bullet$ {\bf Agnostic of Infrastructure:}] \name can use any form of underlying message delivery, including broadcast.  The cost of encryption for each party in \name is not substantially higher than the standard one-to-one public key encryption.

    \item[$\bullet$ {\bf Fine-Grained Access Control:}]  As an ABE scheme, \name is compatible with prior work in ABE-based group communication~\cite{GroupABE1,GroupABE2,GroupABE3,SAFE}.  In particular, \name supports the efficient revocation procedure of~\cite{SAFE}.

    \item[$\bullet$ {\bf Working Dynamicity:}] \name supports dynamic groups by utilizing publish/subscribe on namespaces (see Fig.~\ref{fig:namespace} for an example). We discuss the details in~\ref{sec:overview_model}. 
\end{itemize}
\vspace{-2mm}
\paragraph*{Evaluation} Our proposed anonymous ABE takes around 211.47ms to finish encryption for 100 attributes (recipients), while the decryption time is around 35.32ms. In contrast, HVE~\cite{hve08} takes 4480 ms for encryption and 68 ms for decryption. These experiments conservatively limited the total number of users (in the namespace) to 2000.  Moreover, our efficient recipient verification module cuts decryption time for unintended recipients by roughly 80×.

\section{Background and Related Work}
\label{sec:overview}

\paragraph*{Group Communication using Namespaces} 
Users 
in a group communication environment may be organized into a namespace structured as a directed acyclic graph~\cite{POISE}.  Each node in the graph has a name, and an edge between names indicates a heredity relationship (Fig.~\ref{fig:namespace}).  
The use of namespaces (whether as a strict tree or a more general graph) has been adopted in numerous domains (Domain Name System (DNS)~\cite{DNS}, Information Centric Networks (ICN)~\cite{NDN, POISE} etc.). Using a namespace framework naturally synchronizes dynamic group creation (\emph{e.g.}, when a team of first responders is `dispatched' to an emergency~\cite{POISE}) and other changes across the system by publishing the updated namespace to all subscribers.  

\begin{figure*}[t]
\begin{minipage}[b]{.4\linewidth}
    \captionsetup{justification=centering}
    \centering
    \includegraphics[width=\linewidth]{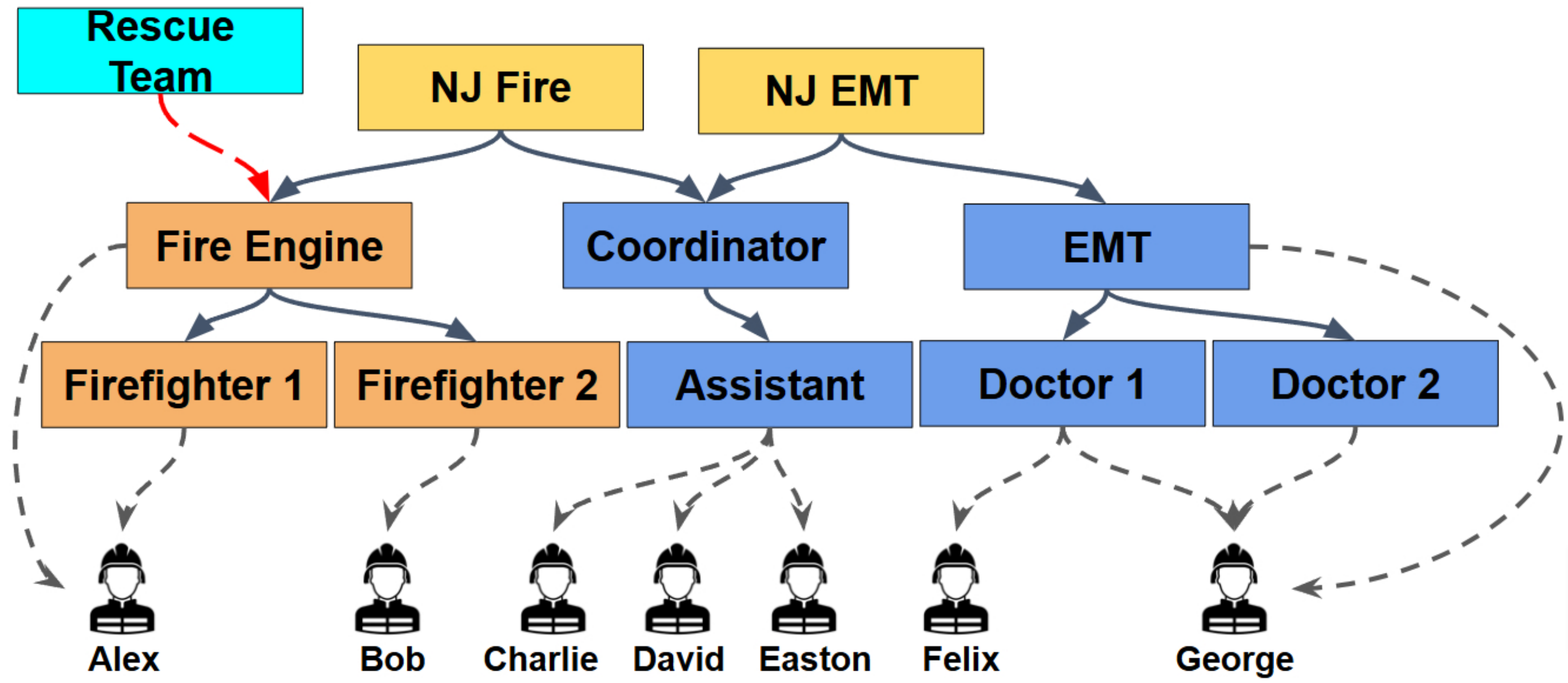}
    \captionof{figure}{{\bf A Namespace Graph for First Responders}}
    \vspace{-2mm}
    \label{fig:namespace}
\end{minipage}\hfill
\begin{minipage}[b]{.6\linewidth}
    \captionsetup{justification=centering}
    \centering
    \includegraphics[width=\linewidth]{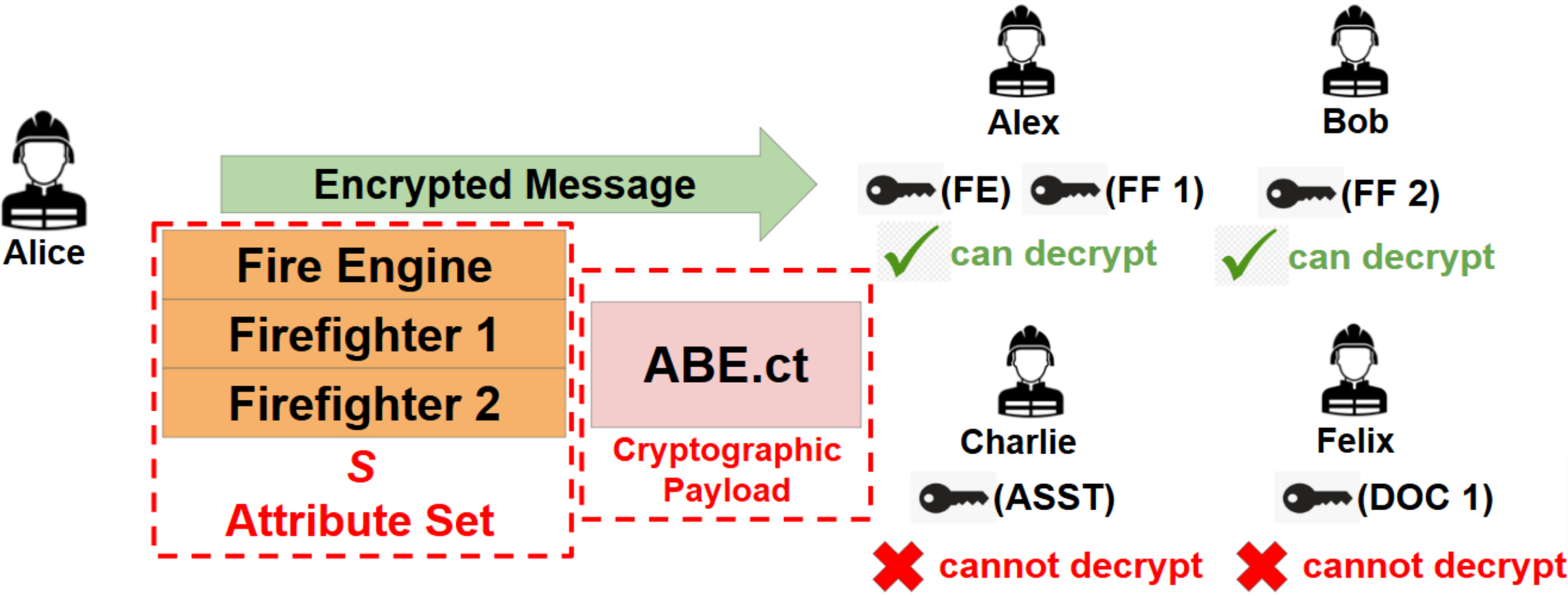}
    \captionof{figure}{ {\bf KP-ABE Group Communication utilizing Namespace}} 
    \label{fig:ABE_group}
    \vspace{-2mm}
\end{minipage}
\end{figure*}

\paragraph*{Dynamicity in group communication:}
The requirement that users interface with ABE via the namespace graph recognizes the varying degrees of ``dynamic'' changes that occur in the system that may impact the ABE scheme. %
\begin{itemize}
    \item \textbf{Changes to the Set of Intended Recipients:} 
    When a group of communicating users need to add a user to their group, this is convenient, as \emph {every message} in our system specifies its own set of intended recipients. %
    A sender simply adds the user to the list of intended recipients of each new message.
    
    \item \textbf{Changes to the Set of Users Subscribing to a Name:} Another type of change occurs when a user subscribes to a new name.  The user simply receives the secret key from the key authorities for that corresponding name the user subscribes to. The user can then decrypt all messages addressed to that name. No other party needs to be involved.
    
    \item \textbf{Changes to the Namespace Graph:} When the namespace is modified, \eg when a new name hierarchy is formed (\eg when a new incident response group of first responders is created for a particular disaster), then the namespace needs to be updated for all users. This enables potential senders (any user) to send to a group identified by the new name (and all the names below that subtree with the new name as the root). Impact on key generation and distribution is controlled and limited. Only those subscribers in the sub-tree rooted at the node that changes in the namespace will need to receive new/additional keys, as can be expected. No other subscribers or senders require key updates. %
\end{itemize}

\vspace{-1mm}
\paragraph*{Namespace Deployment During Disasters.}
Our earlier studies have explored the deployment of namespaces for
publish/subscribe-based group communication~\cite{POISE,RediCom,CNS,
conice,SAFE}. A higher-level authority can deploy and manage a shared
namespace for first responders across different departments. Devices can then synchronize namespace updates using the regional approach in~\cite{conice}.

\subsection{Related work}
\label{sec:related_work}

\paragraph*{Improvements to ABE} Several recent works have made advances to efficient ABE implemented in bilinear groups~\cite{FAME,FABEO,ABGW17}. Other works focus on adding the guarantee of anonymity to ABE~\cite{HideCP19, zhang2024, yang2016, FEASE, FABESA, HVE} for various applications. These schemes either do not provide any anonymity~\cite{FAME,FABEO,ABGW17}, only partially hide the attributes~\cite{FEASE,FABESA,zhang2024}, or incur high performance costs~\cite{yang2016,HideCP19,HVE}. 

\paragraph*{Multi-Authority ABE} 
Multi-authority ABE limits single-authority compromise by assigning
attribute subsets to separate authorities~\cite{
chase07,lewko11,isabella,RW2013MAABE}. In contrast, \name splits each
receiver's decryption key between two authorities, so neither can
generate a complete key alone. %

\paragraph*{ABE-Based Group Communication}
Previous studies have applied ABE to secure group
communication~\cite{cheung2007collusion,traynor2008realizing,
yu2008attribute,kapadia2007attribute,rao2013recipient,SAFE}.
However, some do not provide recipient anonymity~\cite{
cheung2007collusion,traynor2008realizing,SAFE}, while others require
costly cryptographic operations~\cite{yu2008attribute}, an additional
trusted intermediary~\cite{kapadia2007attribute}, or restricted access
policies~\cite{rao2013recipient}. In contrast, \name combines a lightweight PKE-based method for anonymity and a graph-based namespace for flexible group management.

\paragraph*{Other Solutions for Secure Group Communication} 
Non-ABE Secure group messaging (SGM) and continuous group key agreement (CGKA) protocols, including MLS~\cite{mls,MLSSecurity,CGKA}, use TreeKEM~\cite{TreeKEM1,TreeKEM2,treekempaper} to manage group members. However, maintaining a consistent tree requires ordered and
timely delivery of control updates, making these protocols less suitable for disruption-prone environments.

\section{Preliminaries for \name}

\subsection{Our Model}
\label{sec:overview_model}
\paragraph*{Network Layer's role in \name} Let $\calN$ denote the set of names in the namespace (\emph{i.e.}, the vertices of the namespace graph). Messages in our group communication system are addressed to a set $P\subset\calN$ of \emph{intended recipients}.  We use attribute-based encryption to ensure message confidentiality.  So our system delivers ciphertexts to the intended recipients in $P$, who recover the message via decryption.  We intend \name to be anonymous, namely to hide the set $P$, even from an adversary who is monitoring the network traffic. For this purpose, we intend that messages be delivered to a much larger set of \emph{receivers} $Q\subset\calN$, even though only the intended recipients in $P\subset Q$ will possess the keys necessary to decrypt.  $Q$ serves to obscure $P$ and prevent revealing the intended recipients to anyone monitoring network traffic.

Note that the size of $Q$ represents a trade-off between anonymity and system efficiency.  One extreme would be to let $Q=\calN$ (\emph{i.e.}, broadcast every ciphertext to all users), though this is likely to be inappropriate in many settings (\emph{e.g.}, if there are millions of users) as it is too wasteful.  On the other hand, setting $Q=P$ does not introduce system overhead but, of course, completely reveals the intended recipients to anyone monitoring the network.  Thus, $P\subsetneq Q\subsetneq\calN$ would typically be the case, if the network used some form of multicast (\eg IP multicast or overlay multicast).  This is a common scenario used for secure multiparty communication~\cite{BE, OptimalBE20, OptimalBE22}, where messages intended for selected recipients are encrypted within a larger set of receivers to whom the message is delivered.  Our intention is to decouple the communication mechanism at the network layer (\emph{e.g.}, broadcast, multicast, multiple overlays, or even unicast) from how confidentiality is achieved using ABE in this group setting.  We discuss various choices for $|Q|$ compared to $|P|$ in Section~\ref{sec:eval}.

Finally, we mention here the importance of efficient recipient verification.  Since setting $|Q|$ to be large enables recipient anonymity, an extremely important element of our system is the ability of receivers in $Q$ to efficiently check whether they are an intended recipient in $P$ or not. A receiver would only perform the expensive ABE decryption if they are in $P$.  Without this, the total computation of our system would be dominated by unintended recipients trying and failing to decrypt messages that are not addressed to them. 

\paragraph*{Threat and Security Model for \name}
\label{sec:security_threat}

Our target adversary is one who is able to intercept the network traffic in our system (possibly by sniffing, gaining access to network links, routers or switches) and seeks to identify the intended recipients of a message.  Our goal is to hide the identities of the intended recipients from such an adversary. Naturally, we also require message confidentiality, which is inherently provided by ABE.

We also consider another type of attack where the master key is leaked. Our dual key authority approach, splitting the single authority into two (or more), achieves security even when the master secret key of one of the authorities is leaked.  

Our primary goal is to provide recipient anonymity. We acknowledge that a complete group communication solution should also address sender anonymity and message authenticity. Existing techniques for anonymizing the sender (\eg onion routing~\cite{Tor}) and ensuring message authenticity (\eg digital signatures~\cite{dss}) can be readily integrated with our approach.

Our work builds on top of the efficient
FABEO encryption scheme [16], where security is proved in the generic group model (GGM). We inherit these assumptions and focus on confidentiality and anonymity, excluding message blocking, in-transit modification, and denial-of-service attacks.

\subsection{Group Communication using ABE without Anonymity}
\label{sec:overview_naive}
In this subsection, we describe a scheme that uses ABE to build secure group communication without anonymity, as the simplest building block (shown in Fig.~\ref{fig:ABE_group}). It demonstrates syntactically how ABE connects with the namespace graph for the purposes of achieving group communication. 
For the remainder of this section, we let $\abe.\enc$ denote the encryption procedure of an ABE scheme, and so it takes/gives:
\begin{flushright}
\flushleft{\bf Input:} a tuple $(\mpk,S,\msg)$, where $\mpk$ is the master public key, $S$ is a set of attributes, and $\msg$ is a message;
\flushleft{\bf Output:} a pair $(S,\abe.\ct)$, where $S$ is the attribute set and $\abe.\ct$ is the remainder of the ABE ciphertext which we will call the \emph{cryptographic payload}.
\footnote{This is consistent with the terminology in~\cite{FEASE}.}
\end{flushright}

This syntax is standard for the ``key-policy'' variant of ABE (KP-ABE).  We also remind the reader that the key generation procedure for KP-ABE is handled by a central key-authority who takes an access structure $\mathbb{A}$ as input, and (using the master secret key) outputs a secret key $\sk_{\mathbb{A}}$ for $\mathbb{A}$.  Decryption allows anyone with a $\sk_{\mathbb{A}}$ to recover $\msg$ from $(S,\abe.\ct)$ when $\mathbb{A}(S)=1$ holds. 

Suppose Alice wishes to send a message to a set $P\subset\calN$ of intended recipients in the namespace, she can simply have the attribute set $S$ contain the names in $P$, 
and then can publish \[(S,\abe.\ct)\sim\abe.\enc(\mpk,S,\msg).\]  Any user in the system who subscribes to a name $p\in\calN$ will hold a secret key $\sk_{\mathbb{A}_p}$ for the access structure $\mathbb{A}_p$ that takes a set $S\subset\calN$ (the attribute should be names in the namespace) as input and outputs $1$ if $p\in S$, and $0$ if not.  So in this way, any user subscribing to some $p\in P$ will be able to decrypt Alice's message.  See Fig.~\ref{fig:ABE_group} for a visual description.

\paragraph*{Fine-Grained Access Control of ABE.} In this scheme, ABE is used as a means to conveniently interface with the namespace graph.  This alone does not warrant the use of ABE, as other crypto primitives (\emph{e.g.}, hierarchical identity based encryption) can be used for this purpose.  However, the fine-grained access control afforded by ABE allows implementing additional features.  For example, prior work~\cite{SAFE} has implemented an ABE-based ``timeout-based'' revocation procedure, by which compromised or deprecated users can be unsubscribed from a name at the end of a time period.  Our final scheme is compatible with this, and in Section~\ref{sec:eval} we measure the cost of implementing this useful feature.

\begin{figure*}[t]
\begin{minipage}[b]{.45\linewidth}
    \captionsetup{justification=centering}
    \centering
    \includegraphics[width=\linewidth]{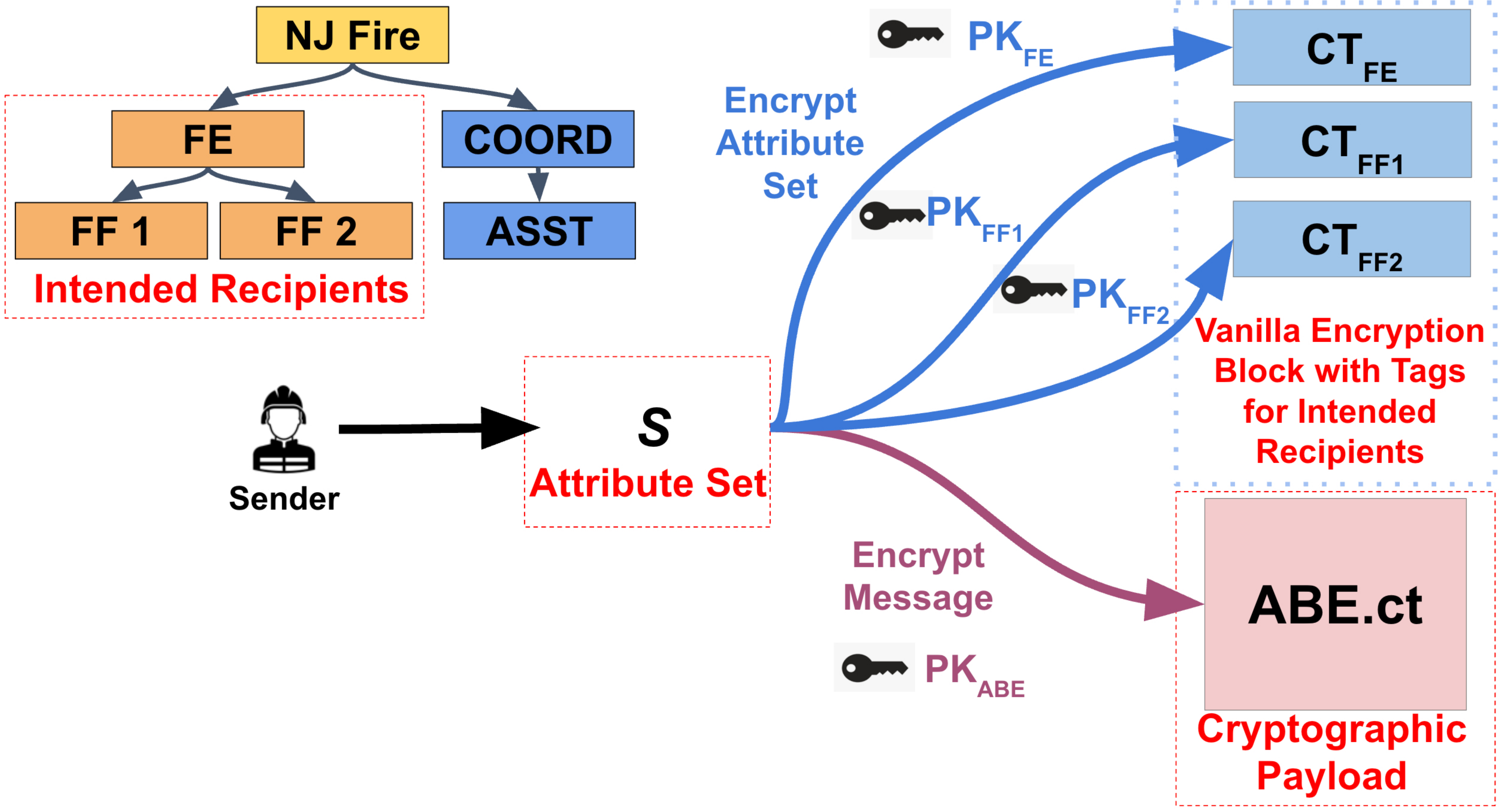}
    \subcaption{{\bf Sender Flow during Encryption}}
    \label{fig:ABEAT_flow_enc}
\end{minipage}\hfill
\begin{minipage}[b]{.5\linewidth}
    \captionsetup{justification=centering}
    \centering
    \includegraphics[width=\linewidth]{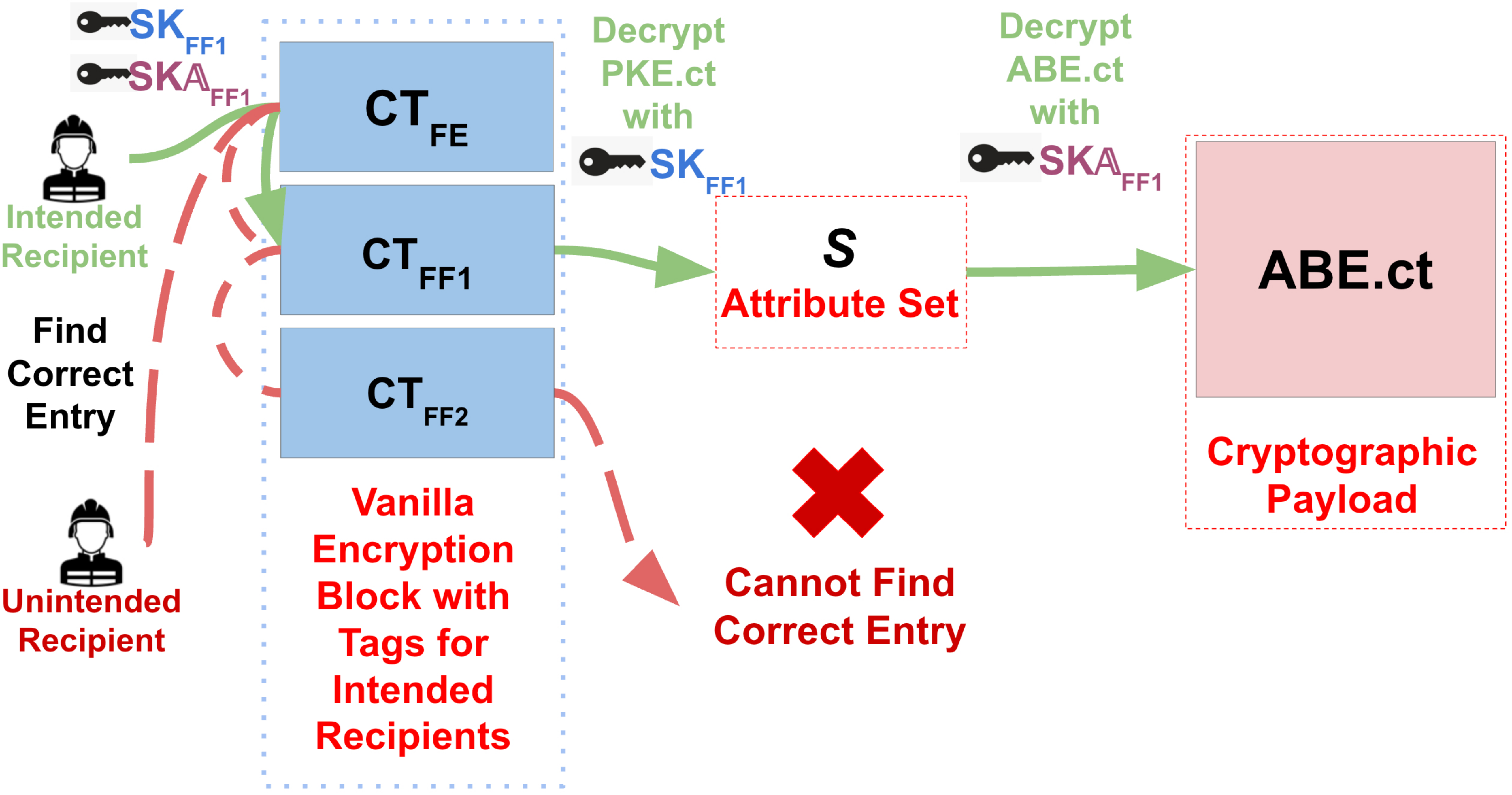}
    \subcaption{{\bf Recipient (both Intended and Unintended) Flow during Decryption}}
    \label{fig:ABEAT_flow_dec}
\end{minipage}
\captionsetup{justification=centering}
\caption{{\bf Example of Encryption and Decryption in \name: Attribute set is encypted in the vanilla encryption block separately from the ciphertext; Tags help the intended recipient to find the correct vanilla encryption block to recover the attribute set}}
\label{fig:ABEAT_flow}
\end{figure*}

\subsection{Adding Recipient Anonymity for ABE-based Group Communication}
We now introduce recipient anonymity for ABE's group communication by encrypting the attributes. Two conditions on the ciphertext pair $(S, \abe.\ct)$ have to be met to be able to provide recipient anonymity:
\begin{itemize}
    \item \textbf{Attribute anonymity:}  
    The attribute set $S$ must remain hidden from the adversary, since the attributes in $S$ are names of the intended recipients when ABE is used for secure group communication.

    \item \textbf{Payload anonymity:}  
    Cryptographic payload $\abe.\ct$ must not leak information enabling an adversary to recover $S$.
\end{itemize}

Payload anonymity has been addressed in previous work like~\cite{FEASE}. In later sections, we will discuss how we extend this scheme to support dual authorities and other desirable features. In this section, we give an overview of our approach for achieving attribute anonymity.
 
\name's key idea %
is to use standard (or \emph{vanilla}) encryption to encrypt each attribute in $S$.  Specifically, we assume every name $p\in\calN$ is associated with some key pair $(\pk_p,\sk_p)$ of a vanilla encryption scheme, where the public key $\pk_p$ is known to all users in the system, while the secret key $\sk_p$ is known only to the users who subscribe to $p$.  With this setup, we modify the previous scheme by having Alice publish $(\mathbb{B}, \abe.\ct)$,
where $\mathbb{B}$ is a collection of vanilla ciphertexts which we call the \emph{vanilla encryption block with tags for intended recipients} (or just \emph{vanilla encryption block}, for short), and $\abe.\ct$ is the cryptographic payload that achieve payload anonymity (confidentiality).  This is shown in Fig.~\ref{fig:ABEAT_flow}.  
The important things about the vanilla block are 1) the intended recipients need to be able to recover the attribute set $S$, to use in decrypting $\abe.\ct$; and 2) other parties (unintended recipients or external adversaries) cannot learn who the intended recipients are.

Note, some nuance is required here. For example, setting $\bbB=\big\{(p,\ct_p)\big\}_{p\in P}$ where for all $p\in P$ (intended recipients), $\ct_p\sim\pke.\enc(\pk_p,S)$, so that each user subscribing to $p\in P$ can use $\sk_p$ to decrypt $\ct_p$ and recover $S$.  Note that while the ciphertexts $\{\ct_p\}_{p\in P}$ are indeed hiding $S$ (hence $P$), any party who sees $\bbB$ learns who the intended recipients are by simply reading them off the indices.  Our solution is to set $\bbB=\big\{(t_p,\ct_p)\big\}_{p\in P}$, where the $t_p$ are cryptographically generated ``tags''.  At a high level, any user subscribed to $p$ can, using the secret key $\sk_p$, compute $t_p$ and find their ciphertext $\ct_p$; while $t_p$ appears random to anyone who does not have $\sk_p$.

\label{sec:properties}

\section{Our Main Contribution: \name}

\subsection{Overview of \name}
Fig~\ref{fig:ABEAT_Process1} shows how the key authorities, sender, and recipients execute the algorithms of \name. 
\begin{itemize}
\item{\textbf{Setup}} This function serves as the global setup for \name and is executed once during system initialization. Its purpose is to generate the public parameters used by subsequent functions, as well as several important global keys: (1) the master public key ($mpk$), which is publicly available and used for encryption; (2) the master secret key ($msk$), which is used to generate decryption keys and must remain confidential; and (3) a public/secret key pair for each name in the namespace ($\{(\pk_p,\sk_p)\}_{p\in\calN}$), which are used in the vanilla encryption block. The public keys are publicly accessible, while the corresponding secret keys are distributed only to subscribers of each name. A key distinction from standard ABE is that \name employs split key authorities to mitigate the risk of master secret key leakage. Specifically, each authority independently generates a share of the master secret key ($msk^{(1)}$ and $msk^{(2)}$). As a result, the setup function is executed jointly by both authorities. They first collaborate to generate the shared parameters (\eg public parameters and the public/secret key pairs for each name), and then each authority generates its respective share of the master secret key (\textcircled{1} in the figure).
\item{\textbf{KeyGen}} This function is used for generating decryption keys for subscribers. Since \name splits key authorities, each key authority (KA) runs this function independently with its part of the master secret key ($msk^{(1)}$ or $msk^{(2)}$) to generate a partial decryption key ($sk_{\mathbb{A}}^{(1)}$ and $sk_{\mathbb{A}}^{(2)}$). Also, the KAs must distribute the secret key of each subscribed name ($\sk_p$) to the corresponding subscriber for use in the vanilla encryption block. All the partial decryption keys and the secret key of each subscribed name are then  transferred to the subscriber, and the subscriber combines them to have the decryption key ($\sk$) (\textcircled{2} in the figure).

\item{\textbf{Enc}} This function is used for encryption when the sender transmits messages. It represents the core component of our design, incorporating the vanilla encryption block. The sender takes the public keys of the target recipient names ($\{\pk_p\big\}_{p\in P}$) as input to construct the vanilla encryption block ($\mathbb{B}$). This block hides the attribute set ($S$), thereby providing attribute anonymity. In addition, the ``tag'' embedded in the vanilla encryption block enables fast recipient verification, allowing unintended recipients to quickly determine that they are not the intended targets, without incurring further decryption cost.
To ensure payload anonymity and message confidentiality, a KP-ABE ciphertext is also generated. Since standard KP-ABE operates under a single authority and provides only message confidentiality (but not payload anonymity), we design a new scheme, \emph{KP-ABE with split key authorities and payload anonymity}. The resulting KP-ABE ciphertext ($\ct_{PL}$) is combined with the vanilla encryption block to form the final ciphertext ($ct$), which is then delivered to the recipients (\textcircled{3} in the figure).
\item{\textbf{Dec}} Upon receiving the ciphertext, each recipient executes \textsf{Dec} to recover the original message. The \textsf{Dec} function first performs recipient verification using the secret key of the subscribed name ($\sk_p$) on the vanilla encryption block to determine whether the recipient is an intended recipient. If not, the process terminates early to avoid unnecessary decryption overhead. Otherwise, the recipient decrypts the vanilla encryption block to obtain the attribute set and then decrypt the KP-ABE ciphertext (\textcircled{4} in the figure).
\end{itemize}

\begin{figure}[t]
\centering
\includegraphics[width=\linewidth]{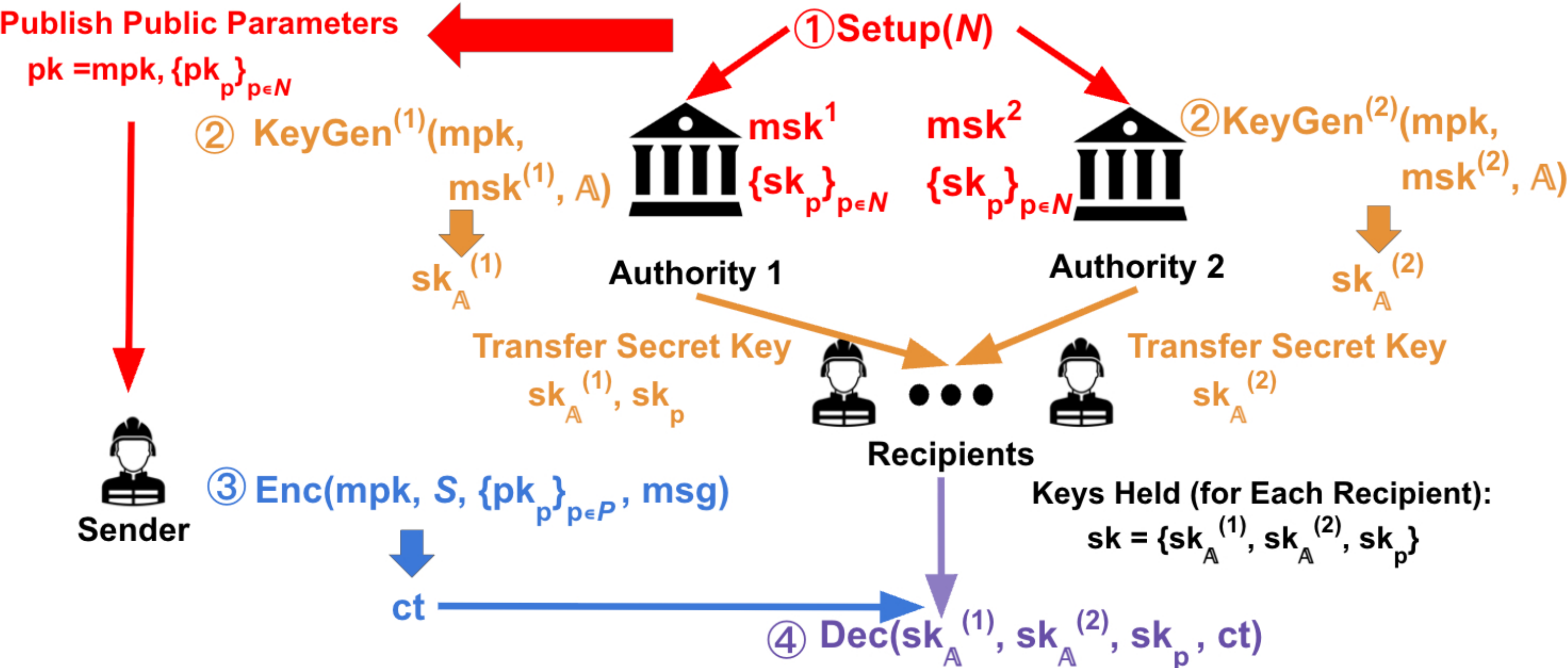}
\caption{\bf Process for \name Algorithms}
\label{fig:ABEAT_Process1}
\end{figure}

As stated before, \name is built on top of two key building blocks: the \emph{KP-ABE scheme with split key-authority and payload anonymity}, and the \emph{vanilla encryption block} scheme. 
First, we describe the building block syntax and how it constructs \name. We then detail the implementation of the building blocks and provide the security proof.

\subsection{Construction of \name}
\label{sec:bulidblocks}
\paragraph*{KP-ABE with Split Key-Authority and Payload Anonymity.} 
We say that the $5$ component algorithms \[(\setup,\keygen^{(1)},\keygen^{(2)},\enc,\dec)\] \emph{have the syntax of a key-policy ABE scheme with the split key authorities (KA)} if they satisfy the input/output of Fig.~\ref{fig:abesyntax2}.

\vspace{-2mm}
\begin{figure}[!h]
    \centering
    \begin{mdframed}[font=\small]
\begin{itemize}
    \item[$\bullet$ \underline{$\abe.\setup(1^\lambda)${\bf :}}] outputs keys $(\mpk,\msk^{(1)},\msk^{(2)})$ 
    \begin{itemize}
        \item[$\cdot$] $\setup$ can also optionally output public information which will be used implicitly by the remaining algorithms.
    \end{itemize}
    \item[$\bullet$ \underline{$\abe.\keygen^{(j)}(\mpk, \msk^{(j)},\mathbb{A})$ for $j=1,2$ {\bf :}}] outputs $\sk^{(j)}_{\mathbb{A}}$.    
    \item[$\bullet$ \underline{$\abe.\enc(\mpk,S,\msg)${\bf :}}] outputs $\ct$.   
    \item[$\bullet$ \underline{$\abe.\dec(\sk_{\mathbb{A}}^{(1)},\sk_{\mathbb{A}}^{(2)},\ct)${\bf :}}] outputs $\msg'$ or the failure symbol $\bot$.
\end{itemize}
    \end{mdframed}
    \vspace{-2mm}
    \caption{Syntax of Key-Policy ABE w/Split KA ($\mathbb{A}$ is access structure)}
    \vspace{-3mm}
    \label{fig:abesyntax2}
\end{figure}

\paragraph*{Block of Vanilla Encryptions.} 
We now define our second building block: a scheme for generating a \emph{vanilla encryption block}. This is the gadget we use for converting an ABE with payload anonymity into an ABE with anonymity for group messaging which supports efficient recipient verification.  
The syntax for the vanilla encryption block scheme is in Fig.~\ref{fig:vanillasyntax2}.  
\vspace{-2mm}
\begin{figure}[!h]
    \centering
    \begin{mdframed}[font=\small]
\begin{itemize}
    \item[$\bullet$ \underline{$\veb.\keygen(\calN)${\bf :}}] outputs a family of key pairs $\big\{(\pk_p,\sk_p)\big\}_{p\in\calN}$.  
    \item[$\bullet$ \underline{$\veb.\enc\bigl(P,\{\pk_p\}_{p\in P},\msg\bigr)${\bf :}}] outputs $\bbB$; here $P\subset\calN$ is a subset.    
    \item[$\bullet$ \underline{$\veb.\dec(\sk_p,\bbB)${\bf :}}] outputs a message $\msg'$, or the failure symbol $\bot$.
\end{itemize}
    \end{mdframed}
    \vspace{-2mm}
    \caption{Vanilla Encryption Block Syntax}
    \vspace{-4mm}
    \label{fig:vanillasyntax2}
\end{figure}

\paragraph*{Construction for \name}
We now give the construction for \name on top of the two key building blocks in Fig.~\ref{fig:our_approach}. 
Specifically, the Setup algorithm invokes $\abe.\setup$ and $\veb.\keygen$. The KeyGen algorithm calls $\abe.\keygen$. The Enc algorithm invokes $\abe.\enc$ and $\veb.\enc$ to produce the ciphertext. 
The Dec algorithm runs $\veb.\dec$ and $\abe.\dec$ to decrypt the ciphertext. 
The detailed construction of the building blocks is described next. 

\begin{figure}[!t]
    \centering
    \begin{mdframed}[font=\small]
    \begin{spacing}{0.1}
    \noindent

$\bullet$ \underline{\bf Building Blocks:} \\
\begin{itemize}
    \item[$\cdot$] a scheme for generating a block of vanilla encryptions \[\veb=\bigl(\veb.\keygen,\veb.\enc,\veb.\dec\bigr);\]
    \item[$\cdot$] a KP-ABE scheme with split key-authority and payload anonymity {\scriptsize\begin{align*}
    \abe=\bigl(\abe.\setup,\{\abe.\keygen^{(j)}\}_{j=1,2},  \abe.\enc,\abe.\dec\bigr).\end{align*}}
\end{itemize}
\vskip 1mm

$\bullet$ \underline{$\setup(\calN)${\bf :}} \\
\begin{itemize}
    \item[$\cdot$] draw $\bigl(\mpk,\msk^{(1)},\msk^{(2)}\bigr)\sim\abe.\setup(1^\lambda)$;
    \item[$\cdot$] draw $\big\{(\pk_p,\sk_p)\big\}_{p\in\calN}\sim\veb.\keygen(\calN)$;
    \item[$\cdot$] output $\bigl(\mpk,\msk^{(1)},\msk^{(2)},\big\{(\pk_p,\sk_p)\big\}_{p\in\calN}\bigr)$.    
\end{itemize}
\vskip 2mm

$\bullet$ \underline{$\keygen^{(j)}(\mpk, \msk^{(j)},\mathbb{A})${\bf :}} for $j=1,2$\\
\begin{itemize}
    \item[$\cdot$] draw and output $\sk_{\mathbb{A}}^{(j)}\sim\abe.\keygen^{(j)}(\mpk, \msk^{(j)},\mathbb{A})$.
\end{itemize}
\vskip 2mm

$\bullet$ \underline{$\enc(\mpk,S,\{\pk_p\}_{p\in P},\msg)${\bf :}} \\
\begin{itemize}
    \item[$\cdot$] draw $(S,\ct_{\sf PL})\sim\abe.\enc(\mpk,S,\msg)$;
    \item[$\cdot$] draw $\bbB\sim\veb.\enc\bigl(\{\pk_p\}_{p\in P},S\bigr)$;
    \item[$\cdot$] output $\ct=(\bbB,\ct_{\sf PL})$.
\end{itemize}
\vskip 2mm

$\bullet$ \underline{$\dec(\sk^{(1)}_{\mathbb{A}},\sk^{(2)}_{\mathbb{A}},\sk_p,\ct)${\bf :}} \\
\begin{itemize}
    \item[$\cdot$] parse $\ct=(\bbB,\ct_{PL})$;
    \item[$\cdot$] compute $S'=\veb.\dec(\sk_p,\bbB)$; if this decryption fails, output the failure symbol $\bot$ and halt;
    \item[$\cdot$] otherwise output $\msg'=\abe.\dec(\sk^{(1)}_{\mathbb{A}},\sk^{(2)}_{\mathbb{A}},S',\ct_{\sf PL})$.  
\end{itemize}
    \end{spacing}
    \end{mdframed}
    \vspace{-3mm}
    \caption{Overall construction of \name}
    \vspace{-2mm}
    \label{fig:our_approach}
\end{figure}

\subsection{Constructing the Building Blocks}

\paragraph*{KP-ABE with Split Key-Authority and Payload Anonymity.} 
We build on top of the KP-ABE construction of~\cite{FABEO}, using ideas similar to those in~\cite{FEASE} for achieving payload anonymity.  Roughly speaking, in~\cite{FEASE}, payload anonymity is achieved by splitting the sender's encryption randomness into two parts, essentially adding an extra degree of freedom which eliminates certain ``anonymity attacks'' which are possible on~\cite{FABEO}.  

Our main idea is to split not just the sender's encryption randomness but the entire key authority.  The result is two independent authorities who each control a share of the master secret key.  In addition to payload anonymity, splitting of the authorities alleviates the risk that the master secret key is leaked (since both shares would now have to be leaked). In the KP-ABE constructions of~\cite{FABEO} and~\cite{FEASE}, a secret value ($\alpha$) is used to provide message confidentiality which must remain confidential. In \name, we require both authorities to collaboratively generate this secret at the beginning of the setup phase. The details of our construction is in Fig.~\ref{fig:abescheme}.

\newcommand{\pprotocol}[5]{{\begin{figure}[#3]
\begin{center}
\fbox{
\hbox{\quad
\begin{minipage}{#4\textwidth}
\small
#5
\end{minipage}
\quad} }
\vspace{-1mm}
\caption{\label{#2} #1}
\vspace{-4mm}
\end{center}
\end{figure} } }

\newcommand{\myprotocol}[4]{\pprotocol{#1}{#2}{h!}{#3}{#4}}

\newcommand{\abeschemeboxcontent}{
\flushleft$\text{ }\bullet\underline{{\bf Public}\text{ }{\bf Parameters:}}$ A pairing group $(\G_1,\G_2,\G_T,g_1,g_2,e)$.

\vskip 2mm\flushleft$\text{ }\bullet\underline{\setup(1^\lambda)}{\bf :}$ A secret $\alpha\sim\mathbb{Z}_p$ is drawn jointly by both key authorities. Specifically, authority $j$ will generate $\alpha_j$, and send $g_1^{\alpha_j}$ to the other party. Both authority will then calculate $g_1^\alpha = \Pi_j\,g_1^{\alpha_j}$. Additionally, two pairs of secrets $\beta_1,\beta_2\sim\mathbb{Z}_p$, $\delta_1,\delta_2\sim\mathbb{Z}_p$ are drawn independently. Both $\beta_j, \delta_j$ are drawn by the key authority $j$.  Output is:
\vskip -1mm
\begin{itemize}
\item[$-$] $\mpk=\bigl(g_2^{\beta_1},g_2^{\beta_2},g_2^{\delta_1},g_2^{\delta_2},g_1^\alpha, e(g_1,g_2)^\alpha\bigr)$;
\item[$-$] $\msk^{(j)}=(\beta_j,\delta_j)$ for $j=1,2$.
\end{itemize}

\vskip 2mm\flushleft$\text{ }\bullet\underline{\keygen^{(j)}(\mpk, \msk^{(j)},\mathbb{A}){\bf :}}$ Parse $\mpk=\bigl(g_2^{\beta_1},g_2^{\beta_2},g_2^{\delta_1},g_2^{\delta_2},g_1^\alpha, e(g_1,g_2)^\alpha\bigr)$ and $\msk^{(j)}=(\beta_j,\delta_j)$ and let $(\bM,\pi)$ be the monotone span program computing $\mathbb{A}$.  Draw $r_j\sim\mathbb{Z}_p$, ${\bf v}_j\sim\mathbb{Z}_p^{n_2}$. 
Output $\sk_{\mathbb{A}}^{(j)}=\bigl(\{\sk_{1,i}^{(j)}\}_{i\in[n_1]},\sk_2^{(j)},\{\sk_{3,i}^{(j)}\}_{i\in[n_1]}\bigr)$:
\begin{itemize}
\item[$-$] $\sk_{1,i}^{(j)}=g_1^{{\bf M}_i\cdot\alpha\cdot{\bf v}_j/\beta_j}\cdot H(\pi(i))^{r_j/\beta_j}$, for $j=1,2$;
\item[$-$] $\sk_2^{(j)}=g_2^{r_j}$, for $j=1,2$.
\item[$-$] $\sk_{3,i}^{(j)}=H(\pi(i))^{r_j/\delta_j}$, for $j=1,2$.
\end{itemize}

\vskip 2mm\flushleft$\text{ }\bullet\underline{\enc(\mpk,S,\msg){\bf :}}$ Parse $\mpk=\bigl(g_2^{\beta_1},g_2^{\beta_2},g_2^{\delta_1},g_2^{\delta_2},e(g_1,g_2)^\alpha\bigr)$.  Draw $s_1,s_2\sim\mathbb{Z}_p$ and let $s=s_1+s_2$.  Output is $(S,\ct_{\sf PL})$, where $\ct_{\sf PL}=\Bigl(\big\{\ct_{1,u}\big\}_{u\in S},\ct_2^{(1)},\ct_2^{(2)},\ct_3^{(1)},\ct_3^{(2)},\ct_4\Bigr)$, where
\begin{itemize}
\item[$-$] $\ct_{1,u}=H(u)^{s}$;
\item[$-$] $\ct_2^{(j)}=g_2^{\beta_{j}s_j}$ for $j=1,2$;
\item[$-$] $\ct_3^{(j)}=g_2^{\delta_js_{3-j}}$ for $j=1,2$;
\item[$-$] $\ct_4=e(g_1,g_2)^{\alpha s}\cdot\msg$.
\end{itemize}

\vskip 2mm\flushleft$\text{ }\bullet\underline{\dec(\sk_{\mathbb{A}}^{(1)},\sk_{\mathbb{A}}^{(2)},S,\ct){\bf :}}$ Parse $\sk_{\mathbb{A}}^{(j)}=\bigl(\{\sk_{1,i}^{(j)}\}_{i\in[n_1]},\sk_2^{(j)},\{\sk_{3,i}^{(j)}\}_{i\in[n_1]}\bigr)$, for $j=1,2$ and $\ct=\bigl(\{\ct_{1,u}\}_{u\in S},\ct_2^{(1)},\ct_2^{(2)},\ct_3^{(1)},\ct_3^{(2)},\ct_4)$.  If $S$ satisfies $(\bM,\pi)$ then find the scalars $\{\gamma_i\}_{i\in I_S}$ such that $\sum_i\gamma_i{\bf M}_i=(1,0,\dots,0)$ and compute: \[{\sf val}:=\prod_{i\in I_S}\prod_{j=1,2}\frac{e\bigl((\sk_{1,i}^{(j)})^{\gamma_i},\ct_2^{(j)}\bigr)e\bigl((\sk_{3,i}^{(j)})^{\gamma_i},\ct_3^{(j)}\bigr)}{e\bigl((\ct_{1,\pi(i)})^{\gamma_i},\sk_2^{(j)}\bigr)}.\]  Output ~$\ct_4/{\sf val}$.
}

\newcommand{\abeschemebox}{
\myprotocol{KP-ABE with Split KA and PA}{fig:abescheme}{.4}
{\abeschemeboxcontent}
}

\abeschemebox
\paragraph*{The Block of Vanilla Encryptions.} 
The Vanilla Encryption Block can be implemented using a standard PKE scheme. Each block is generated for a specific intended recipient and encrypts the corresponding attribute set under that recipient's public key. In addition to the encrypted attribute set, each block also contains a \emph{tag}. This tag enables the intended recipient to efficiently identify which block is meant for them, while not revealing this information to unintended recipients.

Our construction builds on classical Diffie--Hellman key agreement,
where Alice and Bob derive a shared encryption secret. We derive two
shared-secret values, one for tagging and one for message encryption,
as shown in Fig.~\ref{fig:vanillaconstruction}.
\begin{figure}[!h]
\vspace{-2mm}
    \centering
    \begin{mdframed}[font=\small]
\begin{itemize}
    \item[$\bullet$ \underline{$\keygen(\calN)${\bf :}}] For each $p\in\calN$, draw a random exponent $\alpha_p$ and set $(\pk_p,\sk_p)=\bigl(g^{\alpha_p},\alpha_p\bigr)$.
    
    \item[$\bullet$ \underline{$\enc\bigl(P,\{\pk_p\}_{p\in P},\msg\bigr)${\bf :}}] Parse the public keys $\pk_p=g^{\alpha_p}$ for $p\in P$.  Choose random exponents $\beta,\gamma$ and output \[\bbB=\Bigl(g^\beta,g^\gamma,\big\{(g^{\alpha_p\beta},g^{\alpha_p\gamma}\cdot\msg)\big\}_{p\in P}\Bigr).\]
    
    \item[$\bullet$ \underline{$\dec(\sk_p,\bbB)${\bf :}}] Parse $\sk_p=\alpha_p$.  Retrieve $g^\beta$ and $g^\gamma$ from $\bbB$, and compute $g^{\alpha_p\beta}$ and $g^{\alpha_p\gamma}$.  Find in $\bbB$ the pair whose first coordinate is $g^{\alpha_p\beta}$, and let $h$ be the second coordinate of this pair (if there is no such pair, output the failure symbol $\bot$).  Output $h/g^{\alpha_p\gamma}$.
\end{itemize}
\vspace{-1mm}
    \end{mdframed}
    \vspace{-4mm}
    \caption{Generating the Vanilla Encryption Block}
    \vspace{-3mm}
    \label{fig:vanillaconstruction}
\end{figure}

\subsection{Security for \name}

In this section, we discuss the security of \name, which is based on its two main building blocks, the vanilla encryption and  KP-ABE with split key-authority and payload anonymity. We provide a high-level overview of the formal proof, with the Appendix providing the detailed proofs.

\paragraph*{Security of the Vanilla Encryption Block}
Security of our vanilla encryption block component holds assuming the Decisional Diffie–Hellman (DDH) assumption~\cite{DDH}.  Roughly speaking, the DDH assumption states that an adversary cannot distinguish a triple $(g^x,g^y,g^{xy})$ from a uniformly random triple, unless it knows $x$ or $y$.  So, nobody other than the sender (who knows $x$) and the receiver (who knows $y$) can distinguish the tag $g^{xy}$ from a random string.

\paragraph*{Security of KP-ABE with Split Key-Authority and Payload Anonymity}
The security proof of our ABE building block follows the security proof in~\cite{FABEO}, with minor modifications because of our splitting of the key-authority.  Specifically, our proof uses the symbolic security framework of prior work~\cite{FABEO,AC17,ABGW17} within the generic group model (GGM)~\cite{GGM}. Security is argued by expressing the data which is available to the adversary (including public keys, auxiliary secret keys and ciphertexts) as well as the data desired by the adversary (target secret keys) as sets of linear equations, and by showing that the linear span of the equations which the adversary has is disjoint from the linear span of the equations he desires.

\begin{figure}[t]
    \centering
    \includegraphics[width=\linewidth]{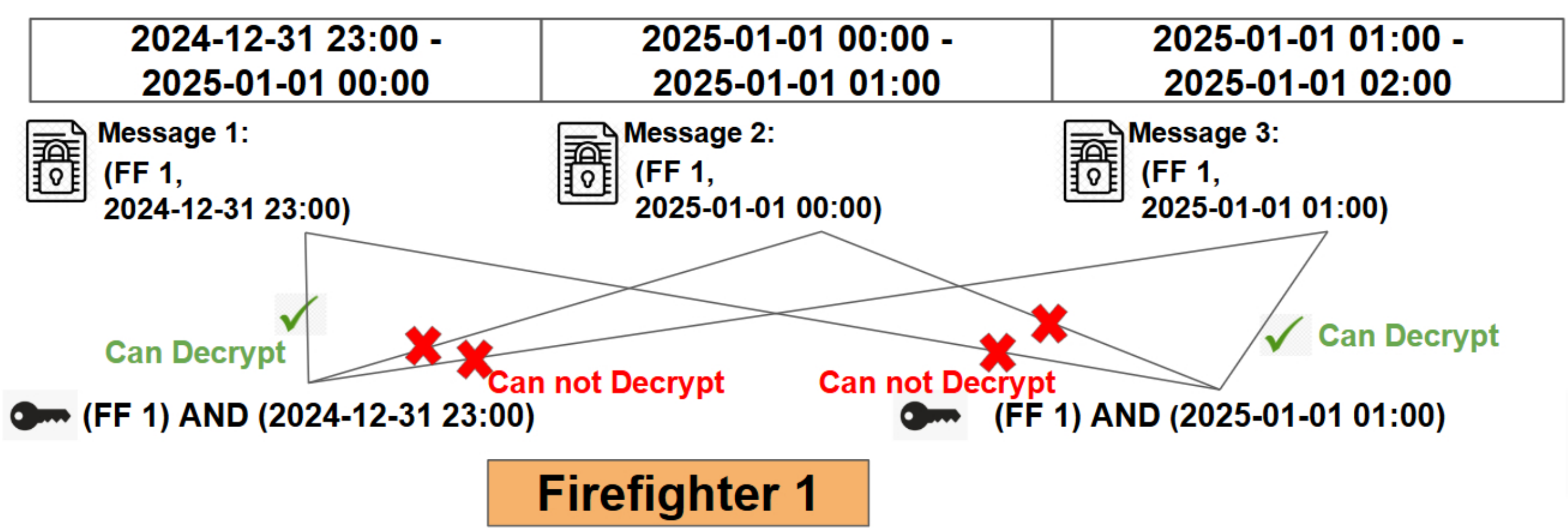}
    \caption{\bf Timeout-Based Revocation, for FF1 as an example}
    \vspace{-2mm}
    \label{fig:access-control}
\end{figure}

\paragraph*{Security of \name} 
The security proof of \name proceeds by reduction to the security of the two building blocks.  Specifically, we show that an adversary who is able to break the security of \name, must also be able to break the security of either the vanilla encryption block or the KP-ABE with split authority and payload anonymity.  This is done using a hybrid argument where a modified version of \name is considered which includes a ``dummy'' vanilla block containing random strings, rather than vanilla encryptions of the intended recipients and tags.  We then show two things: 1) any adversary that can distinguish this hybrid version of \name from the original must be able to directly break the security of the vanilla block; 2) any adversary that can break security of this hybrid version must be able to break security of the KP-ABE with split authority and payload anonymity. 
Since we already proved that the adversary cannot break the security of the vanilla block and the security of the KP-ABE with split authority and payload anonymity, \name is secure.
\paragraph*{Security Properties of \name}

We now summarize the system-level security properties and limitations of
\name:%

\begin{itemize}

    \item 
    \textbf{Identity leakage through the receiver superset.}
     An adversary observing the receiver superset $Q$ may infer the intended recipient set $P \subseteq Q$ through traffic or intersection analysis on repeated transmissions. While \name cryptographically hides $P$ within each $Q$, structural correlation across successive supersets can leak recipient identities. To mitigate this statistical leakage, the sender must construct $Q$ with sufficient size and diversity, ensuring unintended recipients remain uncorrelated with $P$.%

    \item
    \textbf{Leakage of the intended recipient-set size.}
    Because the vanilla encryption block contains one entry for each
    intended recipient name, its length may reveal the number of
    intended recipient names, i.e., $|P|$. This leakage can be
    mitigated by padding the block with randomly generated dummy
    entries that are computationally indistinguishable from valid
    entries to an adversary who does not have the corresponding recipient secret keys. The sender may also use a fixed-size vanilla
    encryption block by adding dummy entries as needed. Consequently, equal-sized blocks prevent an adversary from inferring $|P|$ from the block length.%

    \item 
    \textbf{Forward and backward secrecy.} 
    \name adopts a timeout-based revocation mechanism, as in~\cite{SAFE}.
    As shown in Fig.~\ref{fig:access-control}, time is divided into slots, and each ciphertext and decryption key is bound to a specific slot. Thus, compromising a key exposes only messages from its corresponding slot, but not messages from earlier or later slots, thereby providing backward and forward secrecy respectively. We quantitatively evaluate the cost of this approach in~\ref{sec:access-control}.

    \item %
    \textbf{Confidentiality when an authority is compromised.}
    \name splits the master secret key between two independent authorities, so compromising either authority alone does not allow payload decryption. The authorities must collude to allow for recovery of the protected messages, while recipient anonymity is provided separately by the vanilla encryption block and payload-anonymous KP-ABE.

\end{itemize}

\begin{figure*}[t]
\begin{minipage}[b]{.32\linewidth}
    \captionsetup{justification=centering}
    \centering
    \includegraphics[width=\linewidth]{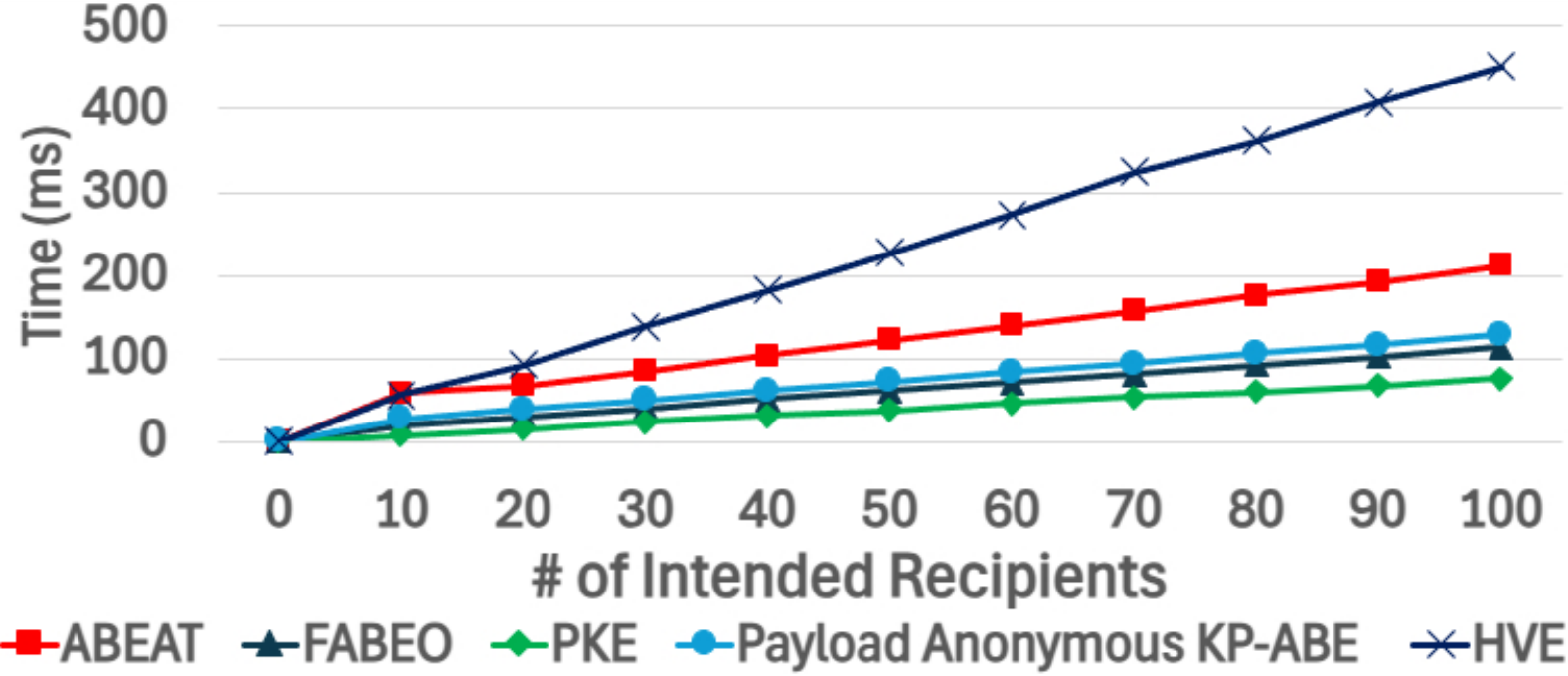}
    \captionof{figure}{{\bf Encryption Time Comparison }}
    \label{fig:enc_total}
\end{minipage}
\begin{minipage}[b]{.32\linewidth}
    \centering
    \includegraphics[width=\linewidth]{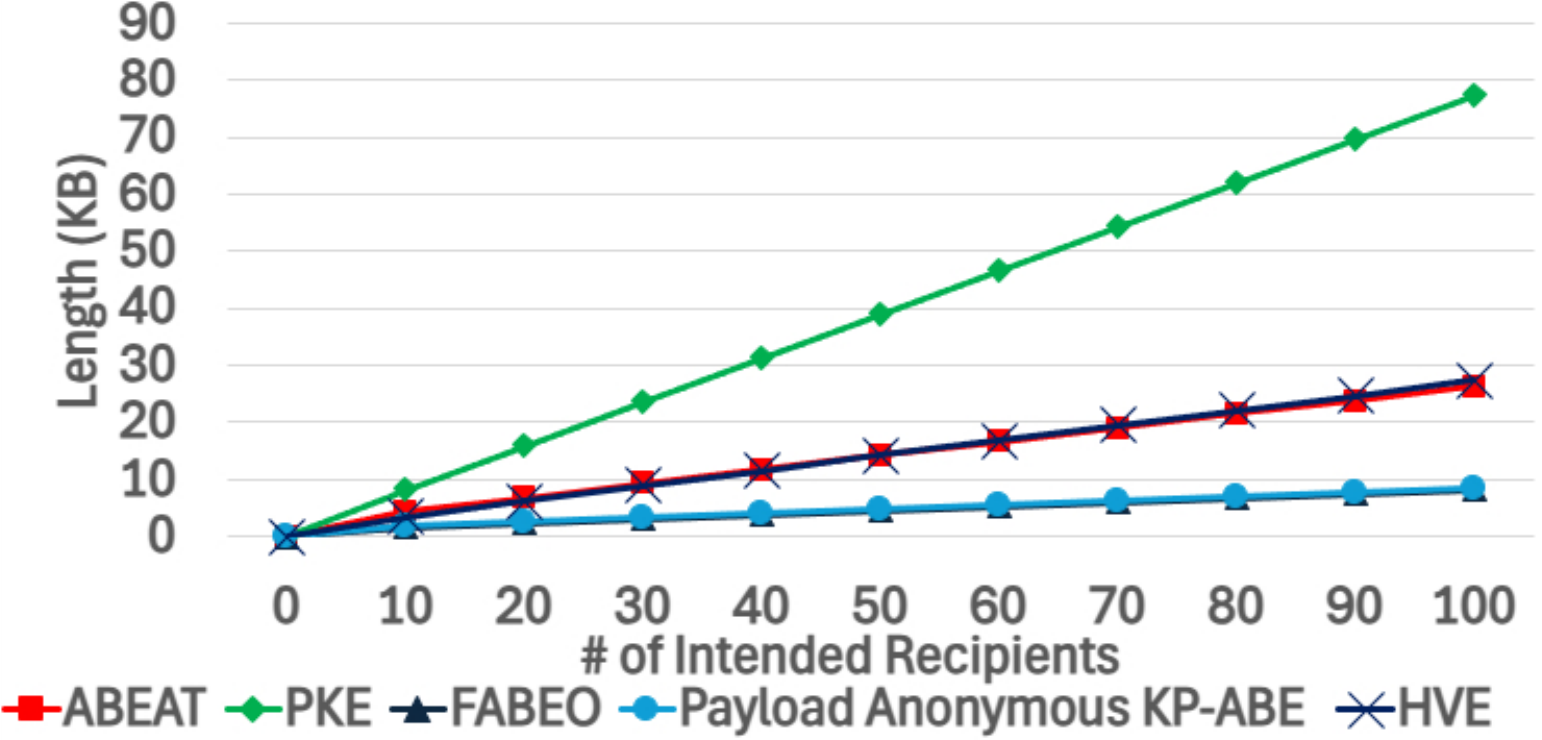}
    \caption{{\bf Ciphertext Length Comparison}}
    \label{fig:enc_cipher_len}
\end{minipage}
\begin{minipage}[b]{.32\linewidth}
    \centering
    \includegraphics[width=\linewidth]{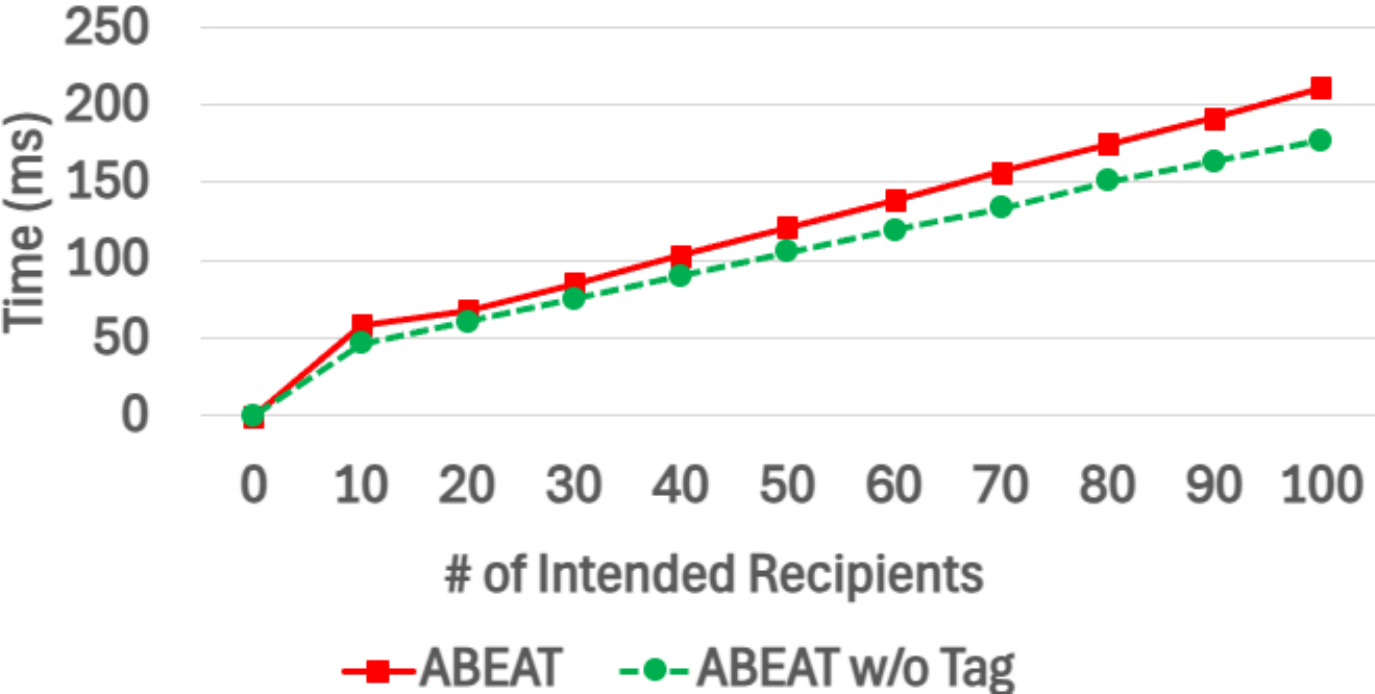}
    \caption{{\bf \name Encryption: w \& w/o Tag}}
    \label{fig:enc_no_tag}

\end{minipage}
\end{figure*}

\section{Evaluation}
\label{sec:eval}

\subsection{Implementation details}
\label{sec:implementation}

We implement \name using Charm~\cite{charm}, with MNT224 pairing groups~\cite{MNT224} and prime192v1 elliptic-curve
PKE~\cite{prime192v1}. For key encapsulation, a random group element is used to derive a symmetric payload-encryption key. All schemes are
evaluated on the same 2.1\,GHz server, with results averaged over five
runs.

\subsection{Cost of Providing Anonymity with ABE} 
We first understand the cost of supporting anonymity in \name compared to alternative approaches, including (1) HVE~\cite{hve08}; (2) KP-ABE with Split Key-Authority and Payload Anonymity. (as described in Section~\ref{sec:bulidblocks}, we call it Payload Anonymous KP-ABE for short), but with the attribute set in the clear; (3) FABEO~\cite{FABEO}; and (4) A simple concatenation of $\pke$ encryption for all intended recipients. 
Both $\pke$ and FABEO do not provide anonymity. $\pke$ also does not provide fine-grained access control. Payload Anonymous KP-ABE provides only weaker anonymity as it does not hide the attribute set, which is in the clear. HVE provides full anonymity, but the encryption overhead depends on the size of the entire namespace (all possible receivers of a message). Note that the performance of all schemes except HVE is independent of namespace size. To conservatively estimate the relative encryption cost benefit of \name, we set the HVE namespace size to be only twice the intended-recipient set.

\paragraph*{Encryption} 

Fig.~\ref{fig:enc_total} shows the encryption cost for \name. $\pke$, without anonymity is certainly the fastest, with FABEO being slightly slower than that.  
Payload Anonymous KP-ABE is also slightly faster than \name, but it 
also does not provide full anonymity (it does not hide the attribute set).
\name provides a higher anonymity level, and takes more time for encryption. The proportionate increase in latency for encryption with \name compared to the other non-anonymous approaches reduces slightly as we increase the number of intended recipients. HVE is by far the slowest.
Even when we limit the namespace to only be twice as large as the intended recipient set, HVE's overhead grows very rapidly compared to \name.  
Even with a namespace that contains just 2000 names, HVE would take around 4480 ms for encryption (with 100 intended recipients), while \name would take 211.47ms since the cost of \name does not depend on namespace size. Overall, adding anonymity with \name incurs about 88\% more overhead compared to FABEO.

\paragraph*{Decryption}

Table~\ref{tab:dec_total} compares the decryption cost for intended and unintended recipients across different approaches. For intended recipients, \name incurs about $2\times$ the decryption cost of Payload Anonymous KP-ABE. FABEO, which provides neither anonymity nor dual authorities, is faster, while $\pke$, also without anonymity, remains the fastest. HVE provides anonymity but not dual authorities, and has the highest cost.

For unintended recipients, FABEO and Payload Anonymous KP-ABE provide
no or weaker anonymity and take about $0.8$\,ms to reject a ciphertext.In contrast, $\pke$ requires trying every recipient ciphertext, taking $38.8$\,ms for $100$ intended recipients, while HVE gives intended and unintended recipients comparable costs.   
\name makes unintended-recipient
decryption about $80\times$ faster than intended-recipient decryption,
over $90\times$ faster than HVE, and $40\%$ faster than non-anonymous
FABEO, approaching the cost of non-anonymous schemes.

\paragraph*{Ciphertext Length}
Fig.~\ref{fig:enc_cipher_len} shows the total ciphertext length resulting from each approach. We use the serialization method provided in Charm to convert the ciphertext to bytes, to estimate length. Non-anonymous $\pke$ group communication has the largest ciphertext length, because the message payload has to be replicated for each recipient when using $\pke$ to send to multiple recipients. HVE and \name have similar lengths, while the ciphertext length of FABEO and Payload Anonymous KP-ABE~\cite{FEASE} is smaller. 

\begin{table}[t]
    \centering
    \captionof{table}{ {\bf Decryption Time (ms) \& Encryption CPU Cycles ($10^6$) }}
    \footnotesize
    \begin{tabular}{c|c|c|c|c|c}
        \hline
        \multirow{2}{*}{\textbf{Scheme}}&\multicolumn{2}{|c}{recipient}& \multicolumn{3}{|c}{encrypt cycles (\#intended)}\\
        \cline{2-6}
        &intended&unintended&20&50&100\\
        \hline
        \name &$35.32$&$0.42$&$140.91$&$253.31$&$439.46$ \\
        \hline
        HVE &$40.69$&$40.67$&$193.58$&$477.29$&$946.83$ \\
        \hline
        P-A-KP-ABE & $17.77$ & $0.82$&$78.15$&$143.93$&$254.31$\\
        \hline
        FABEO & $12.97$ & $0.84$&$62.36$&$127.67$&$238.16$\\
        \hline
       \multirow{2}{*}{PKE}& \multirow{2}{*}{$0.43$} & $38.8$&\multirow{2}{*}{$32.19$}&\multirow{2}{*}{$78.99$}&\multirow{2}{*}{$157.6$}\\
        &  & 100recipients&&&\\
        \hline
    \end{tabular}
    \label{tab:dec_total}
    \vspace{-2mm}
\end{table}

\paragraph*{CPU cycles}
The utilization of the encryption CPU is similar to the encryption processing time as shown in Table~\ref{tab:dec_total}.
HVE takes the most cycles during encryption. It takes around 946M cycles for 100 recipients. \name is the next. For 100 recipients, the encryption process requires approximately 410M CPU cycles while the payload anonymous KP-ABE (P-A-KP-ABE) requires 254M cycles. FABEO KP-ABE is next, requiring 238M cycles. In contrast, PKE group communication requires the fewest CPU cycles, with 157M cycles for 100 recipients.

\subsection{Performance of \name with Tags}
The Diffie-Hellman based tags could dramatically reduce the decryption processing time at recipients. However, \name adds encryption processing at the sender. To quantitatively measure these overheads of \name, we compare it with an approach that only contains the PKE ciphertext in the vanilla encryption block. Note that the decryption savings apply to (potentially) a very large number of unintended recipients, while the encryption overhead is only for a single sender. 

\paragraph*{Encryption Overhead:} 
Fig.~\ref{fig:enc_no_tag} shows the encryption time for \name with and without the use of the tag. For adding a tag, \name incurs a slightly higher encryption time. The overhead added for encryption varies from 10-30ms based on the number of intended recipients. The overhead added for 20 recipients is 11\%, and is about 20\% for 100 recipients.  

\paragraph*{Decryption Savings:} 
Table~\ref{tab:dec_RV} shows the cost of decryption for \name with and without the tag. 
For an intended recipient with a tag, the decryption cost has three components: recipient verification and $\pke$ decryption, which together take $0.77$\,ms; ABE payload decryption, which takes $34.31$\,ms; and the search time for locating the correct tag, denoted as $ST$ in the table. The search time is typically negligible. For large vanilla encryption blocks, we use binary search to reduce it, as shown in Appendix~\ref{sec:BS}. For an unintended recipient with a tag, only recipient verification, which is comparable to a single $\pke$ decryption, is required.
Without tags, both intended and unintended recipients incur substantially higher overhead. Suppose the $\pke$ ciphertext is stored in the $t$-th entry. An intended recipient must perform $t$ $\pke$ decryptions, each taking about $0.42$\,ms, before running a full KP-ABE decryption, which takes $17.6$\,ms, to recover the message. For an unintended recipient, the cost depends on the vanilla encryption block size $M$. Since none of the $\pke$-encrypted attribute sets can be decrypted successfully, the recipient must try all entries, resulting in a cost of $M$ $\pke$ decryptions.

\begin{figure*}[t]
\centering

\begin{minipage}[t]{0.34\linewidth}
\vspace{0pt}
\centering
\includegraphics[width=0.9\linewidth]{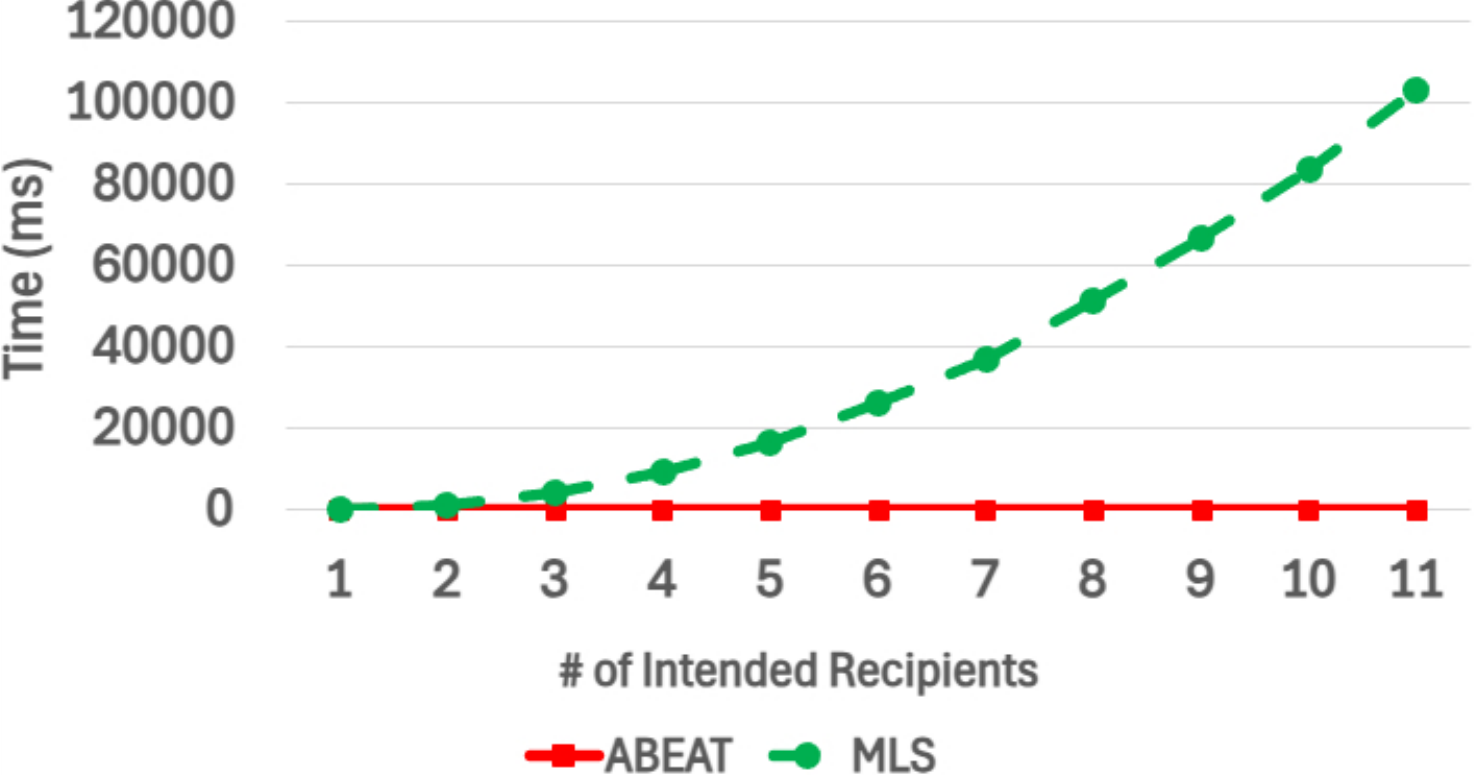}
\vspace{-2mm}
\captionof{figure}{{\bf \name\!\vs\!MLS: Group Setup \& Enc}}
\label{fig:mls}
\end{minipage}
\hfill
\begin{minipage}[t]{0.62\linewidth}
\vspace{-2mm}
\centering

\begin{subfigure}[t]{0.48\linewidth}
    \centering
    \includegraphics[width=\linewidth]{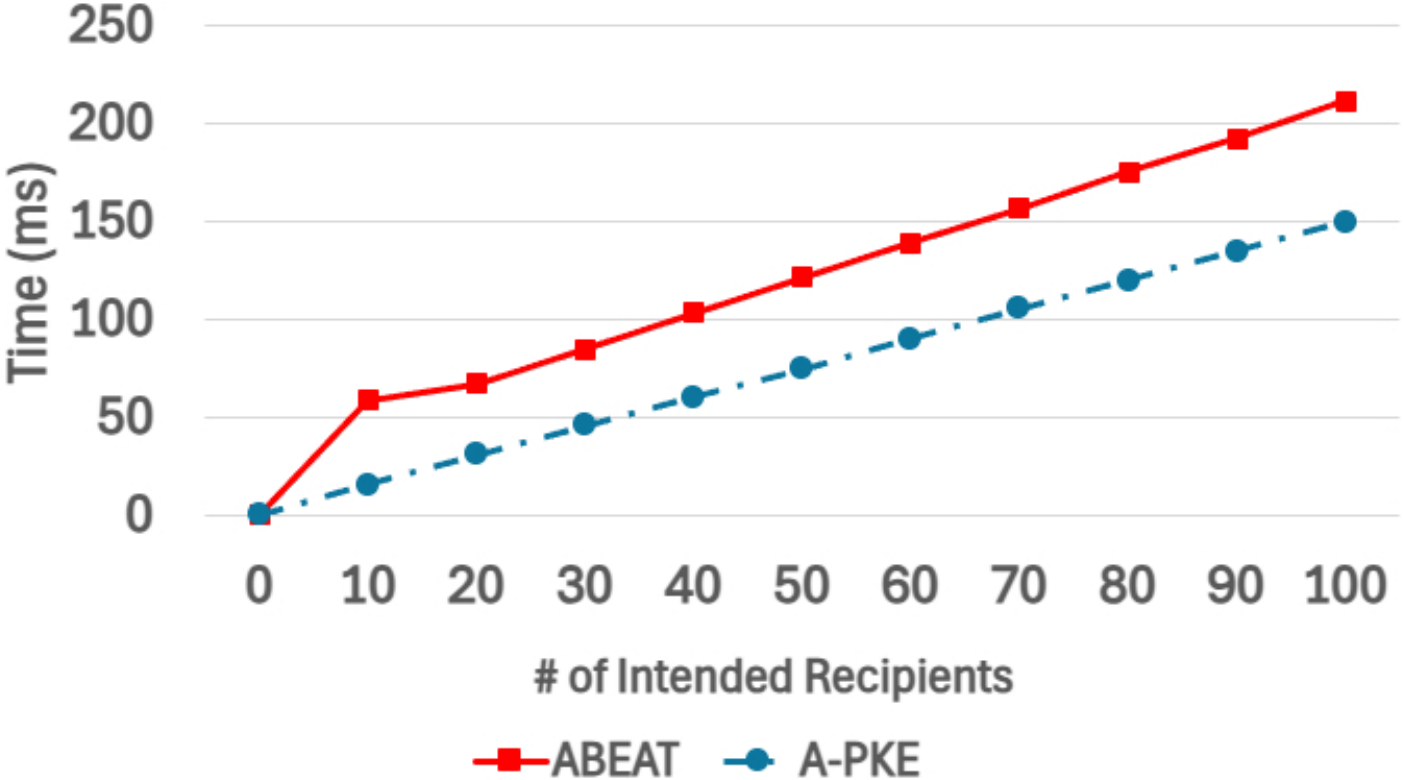}
    \caption{{\bf Encryption Time.}}
    \label{fig:PKE_enc}
\end{subfigure}
\hfill
\begin{subfigure}[t]{0.48\linewidth}
    \centering
    \includegraphics[width=\linewidth]{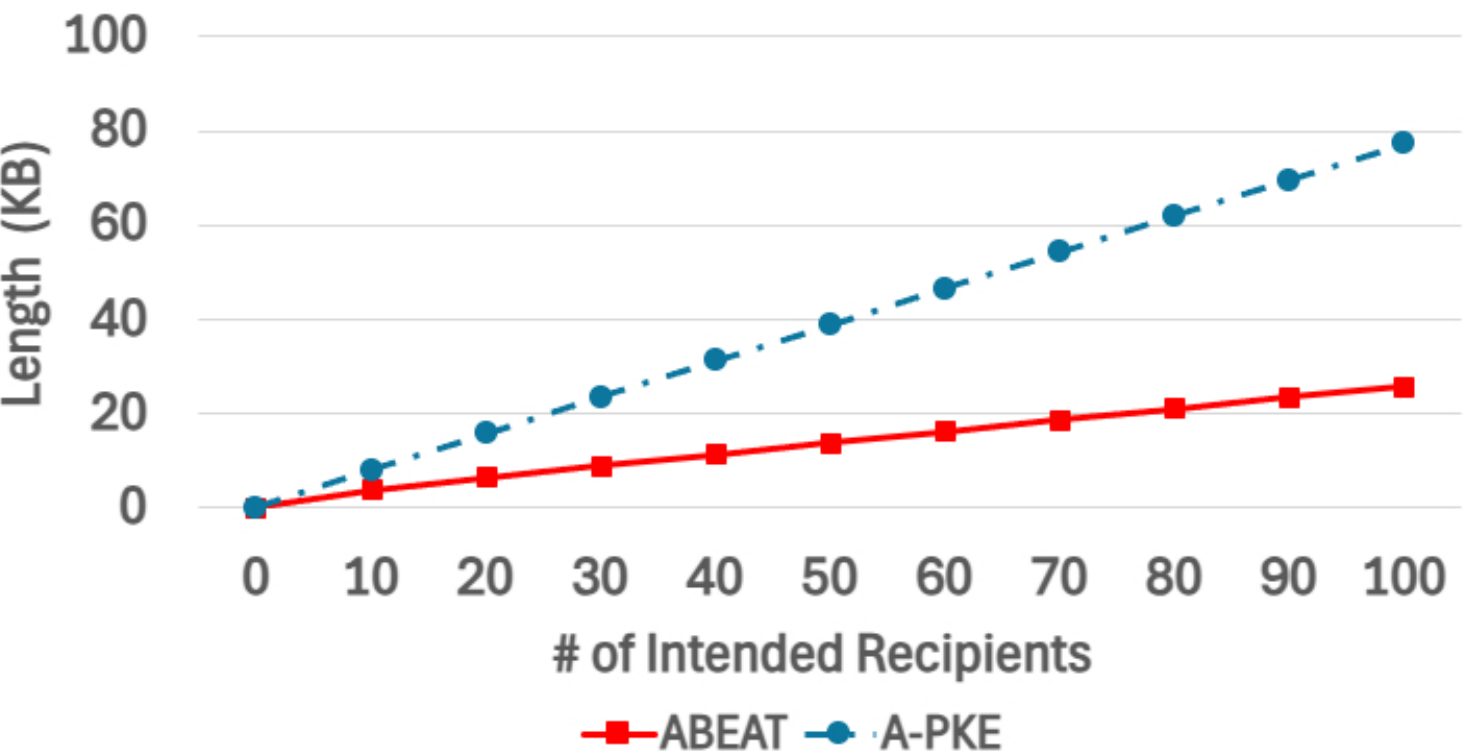}
    \caption{{\bf Ciphertext Length.}}
    \label{fig:PKE_len}
\end{subfigure}

\vspace{-2mm}
\caption{{\bf Comparison between \name and PKE}}
\label{fig:ABEAT_APKE}
\end{minipage}
\end{figure*}

\subsection{\name w/  Access Control and Revocation}
\label{sec:access-control}
We implement the timeout-based revocation in~\cite{SAFE} by adding one timestamp attribute. 
Our evaluation result shows it adds negligible overhead to encryption, as seen in Table~\ref{tab:access_control}, even supporting 100 intended recipients with the timestamp.

The decryption cost for adding fine-grained access control is shown in Table~\ref{tab:access_control}. Adding a single fine-grained access control attribute adds just around 1\% overhead to \name. 

\begin{table}[h]
\centering
    \vspace{-2mm}
    \captionsetup{justification=centering}
    \caption{{\bf \name Decryption: w \& w/o Tag ($ST$: searching time; $t$: \# of probes; $M$: size of vanilla encryption block) (ms)}} 
    \vspace{-2mm}
    \begin{tabular}{c|c|c}
        \hline
        \textbf{Scheme}&intended recipient& unintended recipient\\
        \hline
        \name w/ Tag&$0.77 + 34.31 + ST$&$0.42$ \\
        \hline
        \name w/t Tag & $t \times 0.42 + 34.31$&$M \times 0.42$\\
        \hline
    \end{tabular}
    \label{tab:dec_RV}
    \vspace{-1mm}
\end{table}

\begin{table}[h]
\vspace{-2mm}
\captionof{table}{ {\bf \name w/ Fine-grained Access Control (ms) }}
\vspace{-2mm}
    \footnotesize
    \begin{tabular}{c|c|c|c|c|c}
        \hline
        \multirow{2}{*}{\textbf{Scheme}}&\multicolumn{3}{|c|}{encryption (\# intended rec.)}& \multicolumn{2}{|c}{ recipient}\\
        \cline{2-6}
        &20&50&100&intended&unintended\\
        \hline
        \name &$67.17$&$120.37$&$209.94$&$35.17$&$0.42$ \\
        \hline
        \name &\multirow{2}{*}{$68.31$}&\multirow{2}{*}{$121.81$}&\multirow{2}{*}{$210.71$}&\multirow{2}{*}{$35.52$} &\multirow{2}{*}{$0.39$}\\
        w/ Access Control &&&&&\\
        
        \hline
    \end{tabular}
    \label{tab:access_control}
    \vspace{-2mm}
\end{table}

\subsection{Key Generation Cost in \name}
We now analyze the key generation time in \name. Key generation is involved in three operations: namespace initialization, adding a new name, and subscribing to a name. Table~\ref{tab:keygen_cost} reports the cost of each operation. During initialization, the two KAs generate their master public/private key pairs, taking $\sim 46.50$\,ms in total. In addition, each initial name in the namespace requires one $\pke$ key pair, which takes $0.33$\,ms to generate. Thus, for $N$ initial names, the total namespace initialization cost is $N \times 0.33 + 46.50$\,ms. Adding a new name only requires generating one $\pke$ key pair, taking $0.33$\,ms. When a subscriber subscribes to a name, both authorities generate an ABE decryption key. For one attribute, generating the two partial ABE decryption keys takes $24.35$\,ms in total.

\begin{table}[h]
\centering
\vspace{-2mm}
\captionsetup{justification=centering}
     \captionof{table}{{\bf Cost (time) for Key Generation (ms)}}
     \vspace{-2mm}
    \footnotesize
    \begin{tabular}{c|c|c|c}
        \hline
        \textbf{Scheme}&initialization & adding a name & subscribing\\
        \hline
        \name&$N \times 0.33 + 46.50$&$0.33$&$24.35$ \\
        \hline
    \end{tabular}
    \label{tab:keygen_cost}
    \vspace{-5mm}
\end{table}

\subsection{\name vs. PKE: Is ABE really necessary?}

A common concern with ABE is its overhead. However, using only PKE introduces significant key management complexity. In a PKE-only design, all members of a relatively static group must agree on a common group key pair, which also raises the traditional problem of group key revocation. In contrast, ABE allows each \name message to define a distinct, dynamically formed group based on its attributes. Although \name still requires each recipient to hold a PKE key pair, this key pair is managed through the namespace. Recipients obtain it when subscribing to a name, avoiding complex key management.

We also quantify the cost of anonymity. Fig.~\ref{fig:ABEAT_APKE} compares \name with anonymous PKE (A-PKE), where A-PKE uses the same vanilla encryption block as \name. The encryption time of \name is about $1.4\times$ that of A-PKE for $100$ recipients. However, as shown in Fig.~\ref{fig:PKE_len}, the ciphertext length of \name is about $66\%$ shorter than that of A-PKE. Given the key management complexity and longer ciphertexts of A-PKE, we argue that building \name on ABE is a better choice for anonymous group communication than using PKE alone.

\subsection{Comparison with Message Level Security}
MLS~\cite{mls} targets commercial group messaging over stable networks
and does not provide recipient anonymity.  We compare \name with an open-source implementation of MLS,
OpenMLS~\cite{openmls}, under a highly dynamic setting in which group
membership changes for every message. MLS therefore performs group
formation before each transmission, whereas \name requires no explicit
group formation. In our experiment, all group members are hosted on the same machine, to eliminate network transmission delays for control messages during group formation. This setup obviously favors MLS. We measure the combined group-formation and encryption time in
Fig.~\ref{fig:mls}. Even without network delays, OpenMLS is over
$200\times$ slower than \name for 50 members. System load remained well below CPU capacity, confirming that the difference results from
protocol overhead rather than system overload.

\subsection{Large Scale Evaluation}
\begin{table}[t]
    \captionof{table}{ {\bf Total Cost for Large Scale Evaluation (ms) }}
    \footnotesize
     \begin{tabular}{c|c|c|c|c}
        \hline
        \textbf{Setting}&\textbf{Scheme}&sender&intended recipients & unintended recipients\\
        &&&(total cost)& (total cost) \\
        \hline
        $2\times$&\name&$566.41$&$8382.85$&$371.80$ \\
        \cline{2-5}
        Intended &Baseline&$275.65$&$4211.68$&$8418.09$ \\
        \hline
        \multirow{2}{*}{Broadcast}&\name&$564.23$&$8370.85$&$899.04$ \\
        \cline{2-5}
        &Baseline&$272.97$&$4210.99$&$19649.00$ \\
        \hline
    \end{tabular}
    \label{tab:large_unintended}
    \vspace{-2mm}
\end{table}

We generate a random namespace graph with 1337 nodes and select a random node as the recipient name. From this node, we expand toward the leaves to form a recipient group, resulting in 236 intended recipients. For unintended recipients, we randomly select 472 nodes (twice the number of intended recipients) from the graph.
We compare \name to a baseline anonymous ABE approach that is a combination of a mechanism using bloom filters as described in~\cite{HideCP19} and the payload anonymous KP-ABE we described previously~\cite{FEASE}. Table~\ref{tab:large_unintended}   the overall results. Although the total cost for the sender is higher with \name, the cost for all 
unintended recipients (summed together) is dramatically lower, by about a factor of 20x. The total computation for all intended recipients together is around 2x the bloom filter/payload anonymous KP-ABE baseline. However, \name provides the dual KA feature that the baseline approach does not have.
Thus, when we consider the overall system overhead, \name is much lower.

We also test the same large scale experiment using 
broadcast in the underlying network layer. Here, all nodes in the network that are not intended recipients will be considered unintended recipients. Since set size increases for broadcast, the cost for the baseline approach increases considerably, whereas 
\name still maintains the low overhead for unintended recipients.

\subsection{Overall Solution Evaluation}
This section evaluates the end-to-end performance of \name when integrated into the name-based group messaging framework. Our experiments are conducted using the network simulator provided by POISE~\cite{networksimulator}, with simulation settings comparable to those adopted in its original evaluation. The simulated environment includes multiple forwarding routers and end hosts serving as publishers and subscribers. For every published message, we extend the simulation to account for the corresponding encryption and decryption operations.

\begin{figure}[t]
    \centering
    \includegraphics[width=\linewidth]{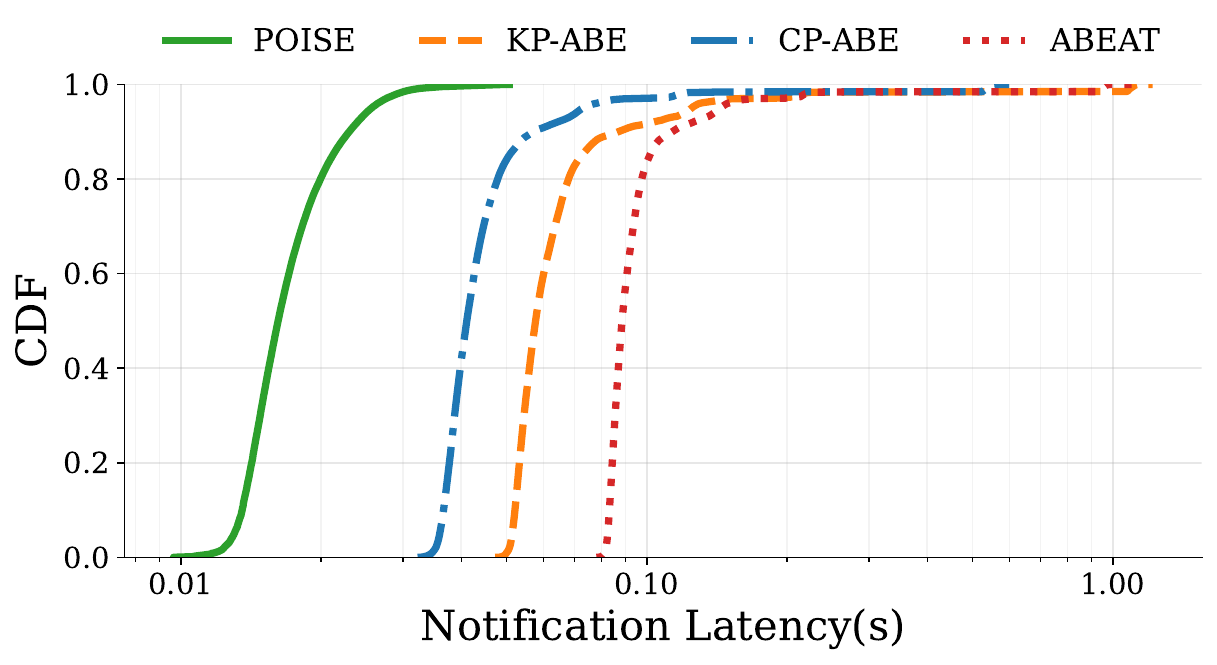}
    \vspace{-2mm}
    \caption{Overall Solution Evaluation}
    \label{fig:overall}
    \vspace{-2mm}
\end{figure}

The simulated network is based on the Rocketfuel 1221 Telstra topology~\cite{Rocketfuel}. For the naming structure, we construct a graph-based ``Disaster Management'' category namespace from the Wikipedia dataset~\cite{nameWiki}. The resulting hierarchy comprises 1,468 names distributed across six levels. Using the corresponding Wikipedia pages and files, we create a workload of 95,417 publications and shuffle their transmission order before running the simulation.

Our primary metric is notification latency, defined as the total time spent delivering a publication and performing its associated encryption and decryption operations. Fig.~\ref{fig:overall} shows that incorporating these cryptographic operations increases the end-to-end notification latency. Without confidentiality and anonymity protection, POISE delivers 90\% of publications within 0.02,s. At the same percentile, the KP-ABE and CP-ABE solutions require 0.05,s and 0.08,s, respectively, while \name requires 0.1,s. Although the anonymity mechanisms introduce noticeable overhead relative to the network delay alone, the resulting latency remains acceptable for disseminating messages to first responders.

\vspace{-1mm}
\section{Conclusion}
\vspace{-1mm}
In this paper, we proposed \name, an ABE-based system for confidential dynamic group communication in practical settings such as emergency response. \name enables flexible group management by incorporating the namespace graph into the ABE setup. It also provides two key security guarantees through new cryptographic modules. First, \name achieves recipient anonymity, preventing adversaries who monitor network traffic from identifying the intended recipients, by adding a block of standard vanilla encryptions to the ABE ciphertext. These vanilla ciphertexts are tagged to support efficient recipient verification, allowing users to cheaply check whether a message is intended for them before performing costly ABE decryption. Second, \name reduces the risk of master-secret-key leakage by splitting the ABE key authority into two dual authorities. Our evaluation showed that \name reduces computation time for unintended recipients by more than $90\times$ compared to an anonymous scheme such as HVE, and by $40\%$ compared to even the non-anonymous FABEO scheme.

\section{Acknowledgement}
\label{sec:ack}
\vspace{-1mm}
This work was supported by NSF grants CNS-2434009 and CAREER-2441313.

\bibliographystyle{plain}

\bibliography{bib/references}

\clearpage
\appendices 

\section{Security Proof}
\subsection{Security of the Vanilla Encryption Block}
\label{app:vanilla}
In this section we describe the corrupt or target DDH assumption and use it to prove security of our vanilla encryption block scheme.

We first define the security game as showed in Fig.~\ref{fig:vanillasecurity}.
\begin{figure}[!h]
    \centering
    \begin{mdframed}[font=\small]
\begin{itemize}
    \item[{\bf 1. Setup:}] $\calC$ draws $\big\{(\pk_p,\sk_p)\big\}_{p\in\calN}\sim\keygen(\calN)$ and sends all public keys to $\calA$: $\{\pk_p\}_{p\in\calN}$.  Additionally, $\calC$ chooses a random bit $b\sim\{0,1\}$.

    \item[{\bf 2. Queries:}] The game now enters the query phase and remains here until $\calA$ chooses to proceed.  During each round of the query phase, $\calA$ will issue one of two types of queries to $\calC$ and $\calC$ will respond.
    \begin{itemize}
        \item[(i) \underline{Secret Key Queries} $-$] $\calA$ sends $p\in\calN$ to $\calC$; $\calC$ sends $\sk_p$ to $\calA$.
        \item[(ii) \underline{Ciphertext Queries} $-$] $\calA$ sends $(P,\msg)$ to $\calC$, where $P\subset\calN$; $\calC$ returns $\bbB$ to $\calA$ where $\bbB$ is prepared as follows:
        \begin{itemize}
            \item[$\cdot$] if $b=0$: $\bbB\sim\enc\bigl(P,\{\pk_p\}_{p\in P},\msg\bigr)$;
            \item[$\cdot$] if $b=1$: $\bbB\sim\$$ is drawn uniformly at random.
        \end{itemize}        
    \end{itemize}

    \item[{\bf 3. Guess:}] $\calA$ sends $b'\in\{0,1\}$ to $\calC$, and the game ends.
    
    \item[{\bf Outcome:}] $\calA$ wins if b'=b and additionally if:
    \begin{itemize}
        \item[$-$] $p\notin P$ for all $p\in\calN$ sent by $\calA$ during a secret key query and $P\subset\calN$ sent by $\calA$ during a ciphertext query.         
    \end{itemize}
\end{itemize}

    \end{mdframed}
    \vspace{-1mm}
    \caption{Vanilla Encryption Block Security Game}
    \vspace{-1mm}
    \label{fig:vanillasecurity}
\end{figure}

\paragraph{Decisional Diffie-Hellman.} Given a cyclic group $G$ and generator $g\in G$, the \emph{decisional Diffie-Hellman} (DDH) problem is the following.  Given $(a,b,c)\in G^3$ where $(a,b)=(g^x,g^y)$ for random exponents $x,y\sim\$$, and either:
\begin{itemize}
    \item[$\cdot$] $c=g^{xy}$; or
    \item[$\cdot$] $c=g^z$ for another random exponent $z\sim\$$;
\end{itemize}
determine which is the case.  We say that an algorithm \emph{breaks DDH} if it can solve this problem with probability at least $1/2+\epsilon$ for non-negligible $\epsilon>0$.

\paragraph{Rerandomizing DDH Triples.} We will need a common trick for using a DDH tuple $(a,b,c)$ to generate a ``fresh'' DDH tuple $(a,b',c')$ ``of the same type''.  Specifically, there is an efficient procedure which takes $(a=g^x,b=g^y,c)$ as input and outputs $(b'=g^{y'},c')$ such that $y'\sim\$$ is a uniformly random exponent and:
\begin{itemize}
    \item[$\cdot$] if $c=g^{xy}$ then $c'=g^{xy'}$;
    \item[$\cdot$] if $c\neq g^{xy}$ then $c'=g^{z'}$ for a uniformly random exponent $z'\sim\$$.
\end{itemize}
Such a procedure could work as follows: draw uniformly random exponents $r,s\sim\$$ and set \[b'=b^r\cdot g^{rs};\text{  }c'=c^r\cdot a^{rs}.\]

\paragraph{The ``Corrupt or Target'' DDH Game.} Our new security assumption is characterized by a security game between an adversary $\calA$ and challenger $\calC$.  Let $n\in\mathbb{N}$ be an integer parameter.
\begin{itemize}
    \item $\calC$ draws random exponents $x_1,\dots,x_n\sim\$$, sends $(g^{x_1},\dots,g^{x_n})$ to $\calA$, and chooses a random bit $b\sim\{0,1\}$.
    \item[$\bullet$ {\bf Corrupt or Target:}] The game now enters the ``corrupt or target'' phase and stays here until either $\calA$ chooses to proceed, or until every $i\in[n]$ has either been corrupted or targeted.  Specifically, a set $S$ is initialized to $[n]$, and then $\calA$ sends queries of the following type:
    \begin{itemize}
        \item[$\cdot$] $(i,\underline{\sf corrupt})$, for $i\in S$; in this case $\calC$ returns $x_i$ and removes $i$ from $S$, \emph{i.e.}, $S=S\setminus\{i\}$;
        \item[$\cdot$] $(i,\underline{\sf target})$, for $i\in S$; in this case $\calC$ chooses a random exponent $y_i\sim\$$ and sends $(g^{y_i},c_i)$ where $c_i=g^{x_iy_i}$ if $b=0$ and $c_i=g^{z_i}$ for another random exponent $z_i\sim\$$ if $b=1$; again $i$ is removed from $S$, \emph{i.e.}, $S=S\setminus\{i\}$.
    \end{itemize}
    \item Finally, $\calA$ sends a bit $b'\in\{0,1\}$ and wins if $b'=b$.
\end{itemize}

\begin{clm}
\label{clm:vanilla_block}
Suppose an efficient adversary $\calA$ breaks the security of the vanilla encryption block scheme above (\emph{i.e.}, $\calA$ wins the security game in Fig.~\ref{fig:vanillasecurity} with non-negligible advantage).  Then there exists another efficient adversary who can win the corrupt or target DDH game with non-negligible advantage.
\end{clm}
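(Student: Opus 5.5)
The plan is a hybrid argument over the names in $\calN$. Each hybrid step is embedded in a single DDH instance, and the rerandomization procedure above turns that one instance into fresh tuples for every ciphertext query. Fix an ordering $p_1,\dots,p_n$ of $\calN$ with $n=|\calN|$, and let $q$ be a polynomial bound on the number of ciphertext queries. For $0\le k\le n$ define hybrid $H_k$ as the real game with $b=0$, except for one change. In every ciphertext response $\bbB=\bigl(g^\beta,g^\gamma,\{(t_p,h_p)\}_{p\in P}\bigr)$, each entry with $p=p_j$ and $j\le k$ has its tag $t_p$ replaced by a fresh uniform group element and $h_p$ replaced by a fresh uniform group element. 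Later I will split this into two half-steps, first the tag and then the second coordinate, to keep each DDH embedding clean.

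The endpoints work out as follows. $H_0$ is the $b=0$ game. $H_n$ is identical to the $b=1$ game. In $H_n$, $g^\beta$ and $g^\gamma$ are uniform because $\beta,\gamma$ are uniform exponents, and every $(t_p,h_p)$ is uniform and independent; multiplying a uniform element by $\msg$ keeps it uniform. Hence a non-negligible advantage $\epsilon$ for $\calA$ yields some index $k$, and some half-step, across which $\calA$'s output distribution changes by at least $\epsilon/(2n)$. The reduction $\calB$ guesses $k$ and the half-step uniformly, losing a factor $2n$.

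To distinguish adjacent hybrids for name $p^*=p_k$, the tag half-step, $\calB$ plays the corrupt or target DDH game as follows. It generates $(\pk_p,\sk_p)$ itself for every $p\neq p^*$. It sets $\pk_{p^*}=g^{x}$ from the challenger, issues $(p^*,\underline{\sf target})$ to receive $(g^{y},c)$, and never corrupts $p^*$ (or corrupts every other index, which it does not need). Secret key queries for $p\ne p^*$ are answered directly. A query for $p^*$ ends the attempt, because the admissibility condition of Fig.~\ref{fig:vanillasecurity} forbids it whenever $p^*$ appears in some challenge set $P$. If $p^*$ never appears in any $P$, the two hybrids coincide, so the distinguishing index can be assumed to be one where $p^*$ is queried.

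For each ciphertext query $(P,\msg)$, $\calB$ runs the rerandomization on $(g^x,g^y,c)$ to obtain $(g^{\beta},c')$ with a fresh uniform $\beta$. It picks $\gamma$ itself and sets the entry for $p^*$ to $(c',g^{x\gamma}\msg)$, where $g^{x\gamma}=(g^x)^\gamma$ is computable. Entries for $p_j$ with $j<k$ are made uniform. Entries for $p_j$ with $j>k$ are computed honestly as $\bigl((g^\beta)^{\alpha_{p_j}},g^{\alpha_{p_j}\gamma}\msg\bigr)$ using the exponents $\calB$ chose. If $c=g^{xy}$, each $c'=g^{x\beta}$ and the view is exactly the hybrid with the $p^*$-tag real. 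If $c$ is random, each $c'$ is an independent uniform element and the view is the hybrid with the $p^*$-tag random. So $\calB$ outputs $\calA$'s guess and inherits its advantage.

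The second half-step, replacing $h_{p^*}$, is identical with the roles of $\beta$ and $\gamma$ swapped: the rerandomized $(g^\gamma,c')$ gives $h_{p^*}=c'\cdot\msg$, and $\calB$ now chooses $\beta$ itself so that it can compute the tag $(g^x)^\beta$ or a uniform value, as the half-step requires.

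I expect the main obstacle to be exactly this embedding. Within one block, all recipients share the same $g^\beta$ and $g^\gamma$, whereas a target query hands out an independent $y_i$ per index. This is why I do not embed many targets simultaneously, and instead use one target per hybrid step together with rerandomization across queries. The price is the polynomial $2n$ guessing loss, so the advantage of $\calB$ is at least $\epsilon/(2n)$, which is non-negligible.
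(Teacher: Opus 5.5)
Your proof is correct, but it takes a different route from the paper's.

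\textbf{The paper's route.} The paper gives a direct, loss-free reduction that uses the full corrupt-or-target game:
\begin{itemize}
\item $\calB$ forwards all $|\calN|$ challenge keys as the $\pk_p$;
\item it turns each secret-key query into a $(p,\underline{\sf corrupt})$ query;
\item it issues $(p,\underline{\sf target})$ the first time $p$ appears in a ciphertext set;
\item it builds each entry from two independent rerandomizations $(B_p,C_p)$ and $(B'_p,C'_p)$ of that name's tuple.
\end{itemize}
This needs no hybrid and no guessing, which is why the nonstandard assumption is introduced. However, the simulated block $\{(B_p,C_p,B'_p,C'_p\cdot\msg)\}_{p\in P}$ carries fresh randomness per recipient. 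By contrast, $\veb.\enc$ in Fig.~\ref{fig:vanillaconstruction} shares $g^\beta,g^\gamma$ across the whole block. So, as written, the paper's simulation matches a per-recipient-randomness variant rather than the scheme as defined. That is exactly the obstacle you identify.

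\textbf{Your route.} You run a hybrid over names, embed a single DDH instance per half-step, and rerandomize across queries. This simulates the shared $g^\beta,g^\gamma$ faithfully. $\calB$ never needs the rerandomized exponent itself, only powers such as $(g^\beta)^{\alpha_{p_j}}$ for names whose keys it generated, and $(g^x)^\gamma$ with $\gamma$ chosen by itself. It also reduces to plain DDH (one target, no corruptions), at the cost of a $2|\calN|$ loss.

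\textbf{The abort step.} Your sentence that ``the distinguishing index can be assumed to be one where $p^*$ is queried'' is muddled and should be made precise. When $\calA$ requests $\sk_{p^*}$, $\calB$ should count the run as a loss. It outputs $1$ only if no abort occurred, $\calA$ was admissible, and $b'=1$. Justify this by coupling the two adjacent hybrids. On runs where $\sk_{p^*}$ is requested, either $\calA$ is inadmissible and loses in both hybrids, or $p^*$ never appears in a challenge set and the hybrids coincide. Either way these runs contribute equally to the winning event, so the gap survives. Outputting a random bit on abort is not safe in general, because the probability that $\calA$ makes that query can itself differ between the hybrids.
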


\begin{proof}
The adversary $\calB$ plays the corrupt or target DDH game against a challenger $\calC$.  Additionally, $\calB$ plays the vanilla encryption block security game as the challenger against the adversary $\calA$.  $\calB$ works as follows.
\begin{itemize}  
    \item On receiving $\{g^{\alpha_p}\}_{p\in\calN}$ from $\calC$, $\calB$ forwards $\{g^{\alpha_p}\}_{p\in\calN}$ to $\calA$.

    \item Now both games enter their respective query phases.  $\calB$ initializes the set $S=\calN$.
    \begin{itemize}
        \item[$\cdot$] Whenever $\calA$ sends the secret key query $p\in\calN$ to $\calB$, $\calB$ issues the query $(p,\underline{\sf corrupt})$ to $\calC$ and receives $\alpha_p$ from $\calC$ which it forwards to $\calA$; $\calB$ updates $S=S\setminus\{p\}$.
        \item[$\cdot$] Whenever $\calA$ sends the ciphertext query $(P,\msg)$ to $\calB$, $\calB$ does the following for each $p\in P$:
        \begin{itemize}
            \item[$-$] if $p\in S$, $\calB$ issues the query $(p,\underline{\sf target})$ to $\calC$ and receives the DDH challenge pair $(b_p,c_p)$;
            \item[$-$] if $p\notin S$ because $\calA$ has already issued a secret key query for $p$, $\calB$ immediately halts and ouputs a random bit;
            \item[$-$] if $p\notin S$ because $\calA$ has already issued a ciphertext query involving $p$, then $\calB$ has already obtained a challenge DDH pair $(b_p,c_p)$ for $p$;
            \item[$-$] at this point (assuming $\calB$ has not already halted), $\calB$ has a DDH challenge tuple $(g^{\alpha_p},b_p,c_p)$; $\calB$ next rerandomizes this tuple twice obtaining $(B_p,C_p)$ and $(B'_p,C'_p)$.
        \end{itemize}
        Finally, $\calB$ sends \[\bbB=\Big\{\bigl(B_p,C_p,B'_p,C'_p\cdot\msg\bigr)\Big\}_{p\in P}\] to $\calA$.
    \end{itemize}
    \item When $\calA$ indicates that the query phase is over by sending the bit $b'\in\{0,1\}$ to $\calB$, $\calB$ forwards this bit to $\calC$ and halts.
\end{itemize}
It can be checked that $\calB$ and $\calA$ are playing the many-way semantic security game with active corruptions using $\calC$'s secret bit $b$.  Thus, $\calB$ wins the corrupt or target DDH game whenever $\calA$ wins.  The claim follows.
\end{proof}

\subsection{Security of ABE with Payload Anonymity}
\label{appendix:FABEO}
\subsubsection{Quick Overview of FABEO}
In~\cite{FABEO}, a key-policy ABE scheme called FABEO is given.  We describe a very simplified version of the encryption/decryption mechanism of FABEO here for readers who are unfamiliar.

\vspace{.1mm}
\paragraph{The Encryption$/$Decryption Mechanism} We describe a stripped-down version of the scheme from FABEO~\cite{FABEO} in order to illustrate the encryption and decryption mechanics used in our payload anonymity scheme. FABEO builds a fraction of pairings to help decrypt the message and achieve efficiency. Let $\alpha\in\mathbb{Z}_p$ be the secret key and $e(g_1,g_2)^\alpha$ the public key.  Let $H:\calU\rightarrow\G_1$ be a hash function mapping attributes into $\G_1$.  Consider the policy which simply checks whether a single attribute $u^*\in\calU$ is satisfied.  The secret key for this policy will be $\bigl(g_1^\alpha\cdot H(u^*)^r,g_2^r\bigr)$; the ciphertext is $\bigl(H(u^*)^s,g_2^s,e(g_1,g_2)^{\alpha s}\cdot\msg\bigr)$.  Decryption works by computing \[\frac{e\bigl(g_1^\alpha\cdot H(u^*)^r,g_2^s\bigr)}{e\bigl(H(u^*)^s,g_2^r\bigr)}=e(g_1,g_2)^{\alpha s},\] and using this to recover $\msg$.

\subsubsection{Correctness of KP-ABE with Split KA and PA}
\label{appendix:correctness}
The correctness of KP-ABE with Split KA and PA shows as follows.  Consider each factor in the quantity ${\sf val}$ computed during decryption.  The numerator is \[e\bigl((\sk_{1,i}^{(j)})^{\gamma_i},\ct_2^{(j)}\bigr)=e\bigl(g_1^{\gamma_i\bM_i\cdot{\bf v}_j},g_2^{s_j}\bigr)\cdot e\bigl(H(\pi(i)),g_2\bigr)^{\gamma_ir_js_j},\] 
and the denominator is $e\bigl(H(\pi(i)),g_2\bigr)^{\gamma_ir_js_j}$.  Thus, we can cancel and we see that
\begin{align*}
    {\sf val}
    &=
    \prod_{i\in I_S}\prod_{j=1,2} e\bigl(g_1^{\gamma_i\bM_i\cdot{\bf v}_j},g_2^{s_j}\bigr)=\prod_{j=1,2} e\Bigl(g_1^{(\sum_i\gamma_i\bM_i)\cdot{\bf v}_j},g_2^{s_j}\Bigr)\\
    &=
    \prod_{j=1,2}e\bigl(g_1^\alpha,g_2^{s_j}\bigr)=e(g_1,g_2)^{\alpha s},
\end{align*}
and so decryption outputs $\msg$.

\subsubsection{Security Game}

We define the security game as Fig.~\ref{fig:abepasecurity}.
\begin{figure}[!h]
    \centering
    \begin{mdframed}[font=\small]
\begin{itemize}
    \item[{\bf 1. Setup:}] $\calC$ draws $(\mpk,\msk^{(1)},\msk^{(2)})\sim\setup(1^\lambda)$, and also a secret bit $b\sim\{0,1\}$, and sends $\mpk$ to $\calA$.
    
    \item[{\bf 2. Queries:}] The game now enters the query phase and remains here until $\calA$ chooses to proceed.  During each round, $\calA$ will issue one of the following types of queries to $\calC$ and $\calC$ will respond.
    \begin{itemize}
        \item[(i) \underline{Key Authority Query} $-$] $\calA$ sends $j\in\{1,2\}$ to $\calC$, who returns $\msk^{(j)}$.
        
        \item[(ii) \underline{Secret Key Queries} $-$] $\calA$ sends $\mathbb{A}$ to $\calC$, who returns $(\sk_{\mathbb{A}}^{(1)},\sk_{\mathbb{A}}^{(2)})$, where $\sk_{\mathbb{A}}^{(j)}\sim\keygen^{(j)}(\msk^{(j)},\mathbb{A})$, for $j=1,2$.
        
        \item[(iii) \underline{Ciphertext Queries} $-$] $\calA$ sends $(S_0,\msg_0),(S_1,\msg_1)$ to $\calC$ with $|S_0|=|S_1|$; $\calC$ draws $\ct\sim\enc(\mpk,S_b,\msg_b)$, where $\ct=(S_b,\ct_{\sf PL})$.  $\calC$ returns $\ct_{\sf PL}$ to $\calA$.
    \end{itemize}

    \item[{\bf 3. Guess:}] $\calA$ sends $b'\in\{0,1\}$ to $\calC$, and the game ends.
    
    \item[{\bf Outcome:}] $\calA$ wins if $b'=b$ and additionally if:
    \begin{itemize}
        \item[$-$] $\mathbb{A}(S)=0$ for all $\mathbb{A}$ and $S$ sent by $\calA$ during the secret key and ciphertext queries; and
        \item[$-$] $\calA$ asked at most one key authority query (\emph{i.e.}, $\calA$ did not receive both $\msk^{(1)}$ and $\msk^{(2)}$).  Moreover, if $\calA$ asked for one of $\msk^{(1)},\msk^{(2)}$, then $S_0=S_1$ for all ciphertext queries.        
    \end{itemize}
\end{itemize}

    \end{mdframed}
    \vspace{-2mm}
    \caption{Security Game for KP-ABE with Split KA and PA}
    \label{fig:abepasecurity}
\end{figure}

\subsubsection{Security Proof in the GGM}
\label{GGM_security}
In this section we prove that the KP-ABE scheme with split key-authority and payload anonymity is secure in the generic group model.  To get started, we simplify the security game.  Recall the game begins with $\calC$ sending $\calA$ $\mpk$ and then $\calA$ enters the query phase and can ask a key-authority query, secret key queries and ciphertext queries.  Since $\calA$ can only ask one key-authority query, we can selectively prepare and simplify the game a bit.  Specifically, our starting point is the following game:
\begin{itemize}
    \item $\calC$ chooses $\alpha,\beta,\delta\sim\$$ and sends $[\beta]_2$, $[\delta]_2$ and $[\alpha]_T$ to $\calA$.  Additionally, $\calC$ chooses a random bit $b\sim\{0,1\}$.
    \item Now $\calA$ enters the query phase and asks two types of queries:
    \begin{itemize}
        \item[$-$] $\calA$ can ask for a secret key query by sending an access structure $(\bM,\pi)$, and $\calC$ chooses a random ${\bf v}_1,{\bf v}_2\sim\mathbb{Z}_p^{n_2-1}$, $r_1,r_2\sim\mathbb{Z}_p$, and returns \[\Big\{\bigl[\bM_i(\alpha,{\bf v}_1)/\beta+ h_{\pi(i)}r_1/\beta\bigr]_1,\bigl[\bM_i(\alpha,{\bf v}_2)+h_{\pi(i)}r_2\bigr]_1,\]\[[h_{\pi(i)}r_1/\delta]_1, [h_{\pi(i)}r_2]_1\Big\}_{i\in[n_1]},\], and $[r_1]_2,[r_2]_2$.

        \item[$-$] $\calA$ can ask for a ciphertext query by sending $S\subset\calU$ and $\calC$ chooses $s_1,s_2\sim\mathbb{Z}_p$ and returns \[\Bigl(\big\{[h_us]_1\big\}_{u\in S},[s_1\beta]_2,[s_2]_2,[s_2\delta]_2,[s_1]_2\Bigr),\] and $[\alpha s]_T$ where $s=s_1+s_2$ if $b=0$ and $[z]_T$ for a random $z\sim\mathbb{Z}_p$ if $b=1$. 
    \end{itemize}
    \item Finally $\calA$ outputs $b'$ and wins if $b'=b$.
\end{itemize}

It is easy to show that if $\calA$ can win the payload anonymity game for the ABE scheme (see Fig.~\ref{fig:abepasecurity}) then it can also win this game.

     Now, we prove that no adversary can win this game in the GGM using the symbolic security framework of prior work~\cite{FABEO,AC17,ABGW17}.  For this purpose, we interpret $\calC$'s responses to $\calA$'s queries as polynomial evaluations.  GGM security then follows by establishing that the span of the query responses is disjoint from the span of the challenges.  This is because the GGM restricts how $\calA$ is allowed to manipulate group elements, so the only way $\calA$ could try to use the query responses to decide whether $\calC$ is sending $[\alpha s]_T$ or random is by linearly converting the query responses to something in the span of $\alpha s$ and then testing for equality; however this is impossible if the spans are disjoint. We will prove our scheme satisfies strong symbolic security from~\cite{FABEO} against an adversary who asks many times of ciphertext query and secret key query in order to derive GGM security. Now, let us zoom into the values that such an adversary receives from its queries.

First consider the level one encodings that $\calA$ gets for $k \in Q_{ct}, j \in Q_{sk}$:
\begin{itemize}
    \item[$\cdot$] $\Big\{\bigl[{\bf M}_i^{(j)}(\alpha,{\bf v}_1^{(j)})/\beta+h_{\pi(i)}^{(j)}r_1^{(j)}/\beta\bigr]_1\Big\}_{i\in[n_1]}$;
    \item[$\cdot$] $\Big\{\bigl[{\bf M}_i^{(j)}(\alpha,{\bf v}_2^{(j)})+h_{\pi(i)}^{(j)}r_2^{(j)}\bigr]_1\Big\}_{i\in[n_1]}$;
    \item[$\cdot$] $\Big\{\bigl[h_{\pi(i)}^{(j)}r_1^{(j)}/\delta\bigr]_1\Big\}_{i\in[n_1]}$;
    \item[$\cdot$] $\Big\{\bigl[h_{\pi(i)}^{(j)}r_2^{(j)}\bigr]_1\Big\}_{i\in[n_1]}$;
    \item[$\cdot$] $\Big\{\bigl[h_u^{(k)}s_1^{(k)}\bigr]_1\Big\}_{u\in S}$.
\end{itemize}
Next, consider all of the second level encodings $\calA$ gets for $k \in Q_{ct}, j \in Q_{sk}$: \[[\beta]_2,[\delta]_2,[r_1^{(j)}]_2,[r_2^{(j)}]_2,[s_1^{(k)}\beta]_2,[s_2^{(k)}]_2,[s_2^{(k)}\delta]_2,[s_1^{(k)}]_2.\]
We can then show
\begin{clm}
    For all $Q_{ct}, Q_{sk} \in \mathbb{N}$, such that $k \in Q_{ct}, j \in Q_{sk}$, we have \[\sfspan\big(\alpha\big(s_1^{(k)}+s_2^{(k)}\big)\big) \cap \sfspan\big(\Big\{\alpha, \Big\{1, \sigma\Big\}\otimes\Big\{1, \tau\Big\}\Big\}\big)=\{0\},\]
    where $\sigma$ is an encoding from the level one and $\tau$ is an encoding from the level two. 
\end{clm}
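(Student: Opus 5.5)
The plan is to argue symbolically, in the style of the strong symbolic security proofs of~\cite{FABEO,ABGW17}. All secrets $\alpha,\beta,\delta,h_u,r_1^{(j)},r_2^{(j)},{\bf v}_1^{(j)},{\bf v}_2^{(j)},s_1^{(k)},s_2^{(k)}$ are treated as independent formal variables, so every encoding becomes a Laurent polynomial, with $\beta,\delta$ allowed in denominators. I will suppose, toward a contradiction, that some linear combination
\[c_0\alpha+\sum a_{\sigma}\sigma+\sum b_\tau\tau+\sum c_{\sigma,\tau}\sigma\tau\]
equals $\lambda\,\alpha(s_1^{(k)}+s_2^{(k)})$ with $\lambda\neq 0$. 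The goal is to force $\lambda=0$ by comparing coefficients of monomials.

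The first step is a degree and variable bookkeeping pass. The terms $c_0\alpha$, $\sigma$ and $\tau$ alone never contain a monomial in which $\alpha$ is multiplied by an $s$-variable. Comparing the $\alpha s_1^{(k)}$ and $\alpha s_2^{(k)}$ monomials therefore involves only the products $\sigma\tau$. I will list exactly which products produce each of these two monomials.
\begin{itemize}
\item The monomial $\alpha s_1^{(k)}$ arises only from $[{\bf M}_i(\alpha,{\bf v}_1)/\beta+h_{\pi(i)}r_1/\beta]_1\cdot[s_1\beta]_2$.
\item The monomial $\alpha s_2^{(k)}$ arises only from $[{\bf M}_i(\alpha,{\bf v}_2)+h_{\pi(i)}r_2]_1\cdot[s_2]_2$.
\end{itemize}
In the first case the product also creates the cross term $h_{\pi(i)}r_1s_1$, and the ${\bf v}_1$-terms ${\bf M}_{i,\ge 2}{\bf v}_1 s_1$ appear as well. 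Since $s_1^{(k)}$ and $s_2^{(k)}$ are independent variables, the single target identity splits into two coefficient identities, one for each share.

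The second step handles each share separately. The ${\bf v}$-monomials must cancel, which forces the coefficients $\gamma_i$ attached to the rows of key $j$ to satisfy $\sum_i\gamma_i{\bf M}_i^{(j)}\in\sfspan(e_1)$. The $\alpha$-coefficient then equals $\lambda$, so in fact $\sum_i\gamma_i{\bf M}_i=\lambda e_1$. Next, the cross terms $h_{\pi(i)}r_1s_1$ must be cancelled. The only other sources of monomials of the form $h_u r_1 s_1$ are the ciphertext elements $[h_us]_1$ paired with $[r_1]_2$. That pairing also produces $h_ur_1s_2$, which in turn can be cancelled only through $[h_ur_1/\delta]_1\cdot[s_2\delta]_2$; this is the mirror image of honest decryption. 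Because the ciphertext carries $h_u$ only for $u\in S^{(k)}$, every row $i$ with $\gamma_i\neq0$ must satisfy $\pi(i)\in S^{(k)}$. Hence $\lambda e_1$ lies in the span of the rows $\{{\bf M}_i:\pi(i)\in S\}$, and $\lambda\neq0$ contradicts $\mathbb{A}(S)=0$. I will also check that mixing different keys $j$ or different ciphertexts $k$ cannot help. The randomness $r^{(j)}$ and ${\bf v}^{(j)}$ is fresh per key and $s^{(k)}$ is fresh per ciphertext, so monomials from distinct queries never cancel one another.

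I expect the main obstacle to be the $s_2$ share. As the simplified game is written, the adversary receives $[h_{\pi(i)}r_2]_1$ and $[s_2]_2$ in the clear. The cross term $h_{\pi(i)}r_2s_2$ can then apparently be cancelled directly, without any reference to $S$, and the argument of the second step would break for the $\alpha s_2$ monomial. My first move will be to re-derive the simplified game carefully from Fig.~\ref{fig:abescheme}, where the second authority's keys are $H(\pi(i))^{r_2/\beta_2}$, $H(\pi(i))^{r_2/\delta_2}$ and the ciphertext has $g_2^{\beta_2 s_2}$, $g_2^{\delta_2 s_1}$. With those encodings the $s_2$ share becomes symmetric to the $s_1$ share, and the same $\beta$/$\delta$ cancellation argument goes through. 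I would fix the encoding lists to match the scheme before carrying out the coefficient comparison.
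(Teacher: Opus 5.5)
The obstacle you flag in your last paragraph cannot be worked around: with the encodings the claim refers to, it is a counterexample. Take one key for the single-attribute policy $u^*\notin S$, so $\bM=(1)$. Then $\alpha+h_{u^*}r_2$ and $h_{u^*}r_2$ are level-one encodings, $s_1,s_2$ are level-two encodings, and
\[(\alpha+h_{u^*}r_2)(s_1+s_2)-h_{u^*}r_2(s_1+s_2)=\alpha(s_1+s_2).\]
So the intersection is nonzero whenever $Q_{sk}\ge 1$. For a general key, combine all rows with $\sum_i\gamma_i\bM_i=(1,0,\dots,0)$; no row needs $\pi(i)\in S$. Your first bullet already misses a source: $\alpha s_1$ also arises from $[\bM_i(\alpha,{\bf v}_2)+h_{\pi(i)}r_2]_1\cdot[s_1]_2$.

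Nor is the paper's list a transcription slip that re-deriving from Fig.~\ref{fig:abescheme} would fix. It is exactly the adversary's view after the one key-authority query the game allows, $\msk^{(2)}=(\beta_2,\delta_2)$. To get it, raise $\sk_{1,i}^{(2)}$ to $\beta_2$ and $\sk_{3,i}^{(2)}$ to $\delta_2$, and raise $\ct_2^{(2)},\ct_3^{(2)}$ to $1/\beta_2,1/\delta_2$. Swapping in the symmetric all-secret list proves only the no-corruption case, which is a different and weaker statement. The corrupted case cannot be rescued: in the actual scheme $e\bigl(g_1^\alpha,(\ct_2^{(2)})^{1/\beta_2}(\ct_3^{(2)})^{1/\delta_2}\bigr)=e(g_1,g_2)^{\alpha s}$ with $g_1^\alpha\in\mpk$, so $\msk^{(2)}$ alone decrypts every ciphertext.

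The paper's own proof does not avoid this. After the substitution along ${\bf w}$, its simplified level-one list keeps $[h_ur_2]_1$ and $[h_ur_1/\delta]_1$ only for $u\in S$, silently dropping them for rows with $\pi(i)\notin S$. Step~2 then asserts that $r_2$ cannot be cancelled, which is exactly the cancellation you found. So neither route establishes the statement. What you can prove is a no-corruption variant, and you should present it as that, not as a proof of this claim.

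Even for that variant, your inference that every row with $\gamma_i\neq0$ has $\pi(i)\in S$ needs $\pi$ to be injective. With a single $r_j$ per key, cancelling the $h_ur_1s_1$ monomials only forces $\sum_{i:\pi(i)=u}\gamma_i=0$. For example, rows $(1,1)$ and $(0,1)$ both labelled $u^*$ give $\sk_{1,1}^{(j)}/\sk_{1,2}^{(j)}=g_1^{\alpha/\beta_j}$, and hence $e(g_1,g_2)^{\alpha s}$ for any $S\not\ni u^*$. You would need a one-use restriction on $\pi$, or per-occurrence randomness $r_{\tau(i)}$ as FABEO uses.
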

\begin{proof}
    The proof proceeds by contradiction. Suppose there exist some $z$ such that \[z=\sfspan\big(\alpha\big(s_1^{(k)}+s_2^{(k)}\big)\big) \cap \sfspan\big(\Big\{\alpha, \Big\{1, \sigma\Big\}\otimes\Big\{1, \tau\Big\}\Big\}\big),\]
    We can have 
    \begin{equation}\label{eq:z_left}
        z=\sum_{k}c_k\alpha\big(s_1^{(k)}+s_2^{(k)}\big)
    \end{equation}
    and 
    \begin{equation}\label{eq:z_right}
        z=d\alpha + \sum_{X\in\Big\{1, \sigma\Big\}, Y\in\Big\{1, \tau\Big\}}d_{XY}XY
    \end{equation}
    for some $c_k, d, d_{X,Y}$.

    Since for all ciphertext queries and secret key queries, we have $\mathbb{A}(S)=0$. That means there exists some ${\bf w}$ such that $\bM_i{\bf w}=0$ for all $i$ such that $\pi(i)\in S$, and moreover the first coordinate of ${\bf w}$ is non-zero.  So consider evaluating $(\alpha,v_1)$ and $(\alpha,v_2)$ to the multiple of ${\bf w}$ such that the first coordinate is $\alpha$. Simplifying in this way gives us $\sigma$ equals to:
    \begin{itemize}
    \item[$\cdot$] $\Big\{\bigl[h_u^{(j)}r_1^{(j)}/\beta\bigr]_1,\bigl[h_u^{(j)}r_2^{(j)}\bigr]_1,\bigl[h_u^{(j)}r_1^{(j)}/\delta\bigr]_1,[h_u^{(k)}s_1^{(k)}]_1\Big\}_{u\in S}$;
    \item[$\cdot$] $\Big\{\bigl[\bM_i^{(j)}(\alpha,{\bf v}_1^{(j)})/\beta+h_{\pi(i)}^{(j)}r_1^{(j)}/\beta\bigr]_1\Big\}_{i:\pi(i)\notin S}$;
    \item[$\cdot$] $\Big\{\bigl[\bM_i^{(j)}(\alpha,{\bf v}_2^{(j)})+h_{\pi(i)}^{(j)}r_2^{(j)}\bigr]_1\Big\}_{i:\pi(i)\notin S}$.
    \end{itemize}
    Thus we can consider $i:\pi(i)\in S$ and $i:\pi(i)\notin S$ separately.

    \paragraph{\underline{\sf Step 1:}}
    We first consider $i:\pi(i)\in S$, thus we have \[\sigma\in\Big\{\bigl[h_u^{(j)}r_1^{(j)}/\beta\bigr]_1,\bigl[h_u^{(j)}r_2^{(j)}\bigr]_1,\bigl[h_u^{(j)}r_1^{(j)}/\delta\bigr]_1[h_u^{(k)}s_1^{(k)}]_1\Big\}_{u\in S}\]

    Then we can evaluate $X\in\{1,\sigma\}$ and get Equation~\ref{eq:z_right} equals to:
    \begin{multline*}
        d\alpha + \sum_{j_1,j_2,k}e_{j_1,j_2,k}h_u^{(j_1)}r_1^{(j_1)}\beta^{-1}\cdot Y^{(j_2,k)} \\
        +  \sum_{j_1,j_2,k}f_{j_1,j_2,k}h_u^{(j_1)}r_2^{(j_1)}\cdot Y^{(j_2,k)} \\
        +      \sum_{j_1,j_2,k}g_{j_1,j_2,k}h_u^{(j_1)}r_1^{(j_1)}\delta^{-1}\cdot Y^{(j-2,k)} 
        +   \\\sum_{j,k_1,k_2}O_{j,k_1,k_2}h_u^{(k_1)}s_1^{(k_1)}\cdot Y^{(j,k_2)} 
        + \sum_{j,k}p_{j,k} Y^{(j,k)}
    \end{multline*}
    
    Equation~\ref{eq:z_left} only contains the product of $\alpha s_1^{(k)}$ and $\alpha  s_2^{(k)}$. However, there is no $\alpha$ contained in $Y$. Althrough $Y$ contains $s_1$ and $s_2$, no product of $Y$ contains $\alpha$ in our reduction of Equation~\ref{eq:z_right}. We can know Equation~\ref{eq:z_right} does not have the product of $\alpha s_1$ and $\alpha  s_2$. Thus, Equation~\ref{eq:z_left} and Equation~\ref{eq:z_right} can not be equal when $X\in\{1,\sigma\}$.

    \paragraph{\underline{\sf Step 2:}} We then consider $i:\pi(i)\notin S$, thus we have \[\sigma\in\Big\{U^{(j)},V^{(j)}\Big\}_j\] where 
    \begin{itemize}
    \item[$\cdot$] $U^{(j)}\in\Big\{\bigl[\bM_i^{(j)}(\alpha,{\bf v}_1^{(j)})/\beta+h_{\pi(i)}^{(j)}r_1^{(j)}/\beta\bigr]_1\Big\}_{i:\pi(i)\notin S}$; \\
    \item[$\cdot$] $V^{(j)}\in\Big\{\bigl[\bM_i^{(j)}(\alpha,{\bf v}_2^{(j)})+h_{\pi(i)}^{(j)}r_2^{(j)}\bigr]_1\Big\}_{i:\pi(i)\notin S}$.
    \end{itemize}
    Then we can evaluate $X\in\{1,\sigma\}$ and get Equation~\ref{eq:z_right} equals to
    \begin{multline*}
        d\alpha + \sum_{j_1,j_2,k}e_{j_1,j_2,k}U^{(j_1)}\cdot Y^{(j_2,k)} 
        + \sum_{j_1,j_2,k}f_{j_1,j_2,k}V^{(j_1)}\cdot Y^{(j_2,k)} \\
        + \sum_{j}g_{j}U^{(j)} + \sum_{j}l_{j}V^{(j)}
    \end{multline*}
    Since  Equation~\ref{eq:z_left} only contains the product of $\alpha s_1^{(k)}$ and $\alpha  s_2^{(k)}$, we can cancel the term that does not contains $\alpha s_1^{(k)}$ or $\alpha  s_2^{(k)}$. Then we get
    \begin{multline*}
        \sum_{j,k}e_{j,k}\bM_i^{(j)}(\alpha,{\bf v}_1^{(j)})\cdot s_1^{(k)} + \sum_{j,k}e_{j,k}(h_{\pi(i)}^{(j)}r_1^{(j)})\cdot s_1^{(k)} \\ 
        + \sum_{j,k}e_{j,k}\bM_i^{(j)}(\alpha,{\bf v}_1^{(j)})\cdot s_2^{(k)} / \beta + \sum_{j,k}e_{j,k}(h_{\pi(i)}^{(j)}r_1^{(j)})\cdot s_2^{(k)} / \beta \\
        + \sum_{j,k}e_{j,k}\bM_i^{(j)}(\alpha,{\bf v}_1^{(j)})\cdot s_2^{(k)} \delta/ \beta + \sum_{j,k}e_{j,k}(h_{\pi(i)}^{(j)}r_1^{(j)})\cdot s_2^{(k)}\delta / \beta \\
        + \sum_{j,k}e_{j,k}\bM_i^{(j)}(\alpha,{\bf v}_1^{(j)})\cdot s_1^{(k)} / \beta + \sum_{j,k}e_{j,k}(h_{\pi(i)}^{(j)}r_1^{(j)})\cdot s_1^{(k)} / \beta \\
        + \sum_{j,k}f_{j,k}\bM_i^{(j)}(\alpha,{\bf v}_2^{(j)})\cdot s_1^{(k)}\beta + \sum_{j,k}f_{j,k}(h_{\pi(i)}^{(j)}r_2^{(j)})\cdot s_1^{(k)}\beta\\
        + \sum_{j,k}f_{j,k}\bM_i^{(j)}(\alpha,{\bf v}_2^{(j)})\cdot s_2^{(k)} + \sum_{j,k}f_{j,k}(h_{\pi(i)}^{(j)}r_2^{(j)})\cdot s_2^{(k)}\\
        + \sum_{j,k}f_{j,k}\bM_i^{(j)}(\alpha,{\bf v}_2^{(j)})\cdot s_2^{(k)}\delta + \sum_{j,k}f_{j,k}(h_{\pi(i)}^{(j)}r_2^{(j)})\cdot s_2^{(k)}\delta\\
        + \sum_{j,k}f_{j,k}\bM_i^{(j)}(\alpha,{\bf v}_2^{(j)})\cdot s_1^{(k)} + \sum_{j,k}f_{j,k}(h_{\pi(i)}^{(j)}r_2^{(j)})\cdot s_1^{(k)}\\
    \end{multline*}
    ${\bf v}_1^{(j)}, {\bf v}_2^{(j)}$, $\beta$, $\beta^{-1}$,$r_1$.$r_2$ and $\delta
    $ cannot be canceled, thus Equation~\ref{eq:z_left} does not contains the product of $\alpha s_1^{(k)}$ and $\alpha  s_2^{(k)}$. Equation~\ref{eq:z_left} and Equation~\ref{eq:z_right} cannot be equal when $i:\pi(i)\notin S$.

    For both $i:\pi(i)\in S$ and $i:\pi(i)\notin S$, Equation~\ref{eq:z_left} and Equation~\ref{eq:z_right} cannot be equal. Thus, we show the contradiction. 
\end{proof}

This shows that our scheme has strong symbolic security, which by the arguments of~\cite{FABEO}, implies that it is secure in the GGM.

\subsection{Anonymity Proof in the GGM}
Similar as the security proof, we prove the anonymity of our scheme also in GGM. Consider the same game that used for security proof. Note that our anonymity adversary model intend to prevent attackers from outside the system, thus we only allow $\calA$ do ciphertext queries. The game them can be simplify as follows:
\begin{itemize}
    \item $\calC$ chooses $\alpha,\beta_1,\beta_2,\delta_1,\delta_2\sim\$$ and sends $[\beta_1]_2$, $[\beta_2]_2$, $[\delta_1]_2$,$[\delta_2]_2$ and $[\alpha]_T$ to $\calA$.  Additionally, $\calC$ chooses a random bit $b\sim\{0,1\}$.
    \item Now $\calA$ enters the query phase and can ask for a ciphertext query by sending $S\subset\calU$ and $\calC$ chooses $s_1,s_2\sim\mathbb{Z}_p$ and returns \[\Bigl(\big\{[h_us]_1\big\}_{u\in S},[s_1\beta_1]_2,[s_2\beta_2]_2,[s_2\delta_1]_2,[s_1\delta_2]_2\Bigr),\] and $[\alpha s]_T$ where $s=s_1+s_2$ if $b=0$ and $[z]_T$ for a random $z\sim\mathbb{Z}_p$ if $b=1$. 
    \item Finally $\calA$ outputs $b'$ and wins if $b'=b$.
\end{itemize}

Now, we prove that no adversary can win this game in GGM as security proof in \ref{GGM_security}. Let's consider the level one encodings that $\calA$ gets for $k\in Q_{ct}$:
\begin{itemize}
    \item[$\cdot$] $\Big\{\bigl[h_u^{(k)}s^{(k)}\bigr]_1\Big\}_{u\in S}$;
    \item[$\cdot$] $[h_u^{(k)}]_1$
\end{itemize}
Note that $\calA$ would also calculate $[h_u^{(k)}]_1$ by itself.

Next, consider all of the second level encodings $\calA$ gets for $k \in Q_{ct}, j \in Q_{sk}$: \[[\beta_1]_2,[\beta_2]_2,[\delta_1]_2,[\delta_2]_2,[s_1^{(k)}\beta_1]_2,[s_2^{(k)}\beta_2]_2,[s_2^{(k)}\delta_1]_2,[s_1^{(k)}\delta_2]_2.\]
We can then show
\begin{clm}
    For all $Q_{ct} \in \mathbb{N}$, such that $k \in Q_{ct}$, we have \[\sfspan\big(\alpha\big(s_1^{(k)}+s_2^{(k)}\big)\big) \cap \sfspan\big(\Big\{\alpha, \Big\{1, \sigma\Big\}\otimes\Big\{1, \tau\Big\}\Big\}\big)=\{0\},\]
    where $\sigma$ is an encoding from the level one and $\tau$ is an encoding from the level two. 
\end{clm}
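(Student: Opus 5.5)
The plan is a monomial-comparison argument in the symbolic framework of \cite{FABEO,AC17,ABGW17}, in the same style as the security claim of Section~\ref{GGM_security}. I will treat every secret as a formal variable and every encoding the adversary holds as a polynomial, namely a Laurent polynomial since $\beta$ and $\delta$ may appear in denominators. The variables are $\alpha,\beta_1,\beta_2,\delta_1,\delta_2$, the values $\{s_1^{(k)},s_2^{(k)}\}_{k\in Q_{ct}}$ and the hash exponents $\{h_u\}_u$. I then suppose some $z$ lies in both spans, writing $z=\sum_k c_k\alpha(s_1^{(k)}+s_2^{(k)})$ as in Equation~\ref{eq:z_left} and $z=d\alpha+\sum_{X,Y} d_{XY}XY$ as in Equation~\ref{eq:z_right}. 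The goal is to show that every $c_k=0$, so that $z=0$.

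\textbf{Step 1.} I list the available monomials. The level-one encodings are $h_u^{(k)}s^{(k)}=h_u^{(k)}s_1^{(k)}+h_u^{(k)}s_2^{(k)}$ and $h_u^{(k)}$. The level-two encodings are $\beta_j$, $\delta_j$, $s_1^{(k)}\beta_1$, $s_2^{(k)}\beta_2$, $s_2^{(k)}\delta_1$ and $s_1^{(k)}\delta_2$. None of these polynomials involves $\alpha$. Since $[\alpha]_T$ lives in the target group, it can only enter linearly and cannot be paired further. Hence every product $XY$ with $X\in\{1,\sigma\}$ and $Y\in\{1,\tau\}$ is $\alpha$-free. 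So the only term on the right-hand side whose monomials contain $\alpha$ is $d\alpha$.

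\textbf{Step 2.} I compare coefficients. The left-hand side contains the monomial $\alpha s_1^{(k)}$ with coefficient $c_k$. This monomial does not appear on the right-hand side, since the right-hand side has $\alpha$ only in the degree-one monomial $\alpha$. Because the ciphertext randomness is fresh for each query $k$, the monomials $\alpha s_1^{(k)}$ are linearly independent, so $c_k=0$ for every $k$. Hence $z=0$, which contradicts the assumption that $z$ is a nonzero element of the intersection.

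\textbf{Main obstacle.} The delicate point is whether the simplified game faithfully captures what the adversary sees. In the real $\mpk$ the element $g_1^\alpha$ is also public, so $[\alpha]_1$ should be added to the level-one list; I will redo Step 2 with it included. Then the new products $\alpha\cdot\tau$ give $\alpha$, $\alpha\beta_j$, $\alpha\delta_j$, $\alpha s_1^{(k)}\beta_1$, $\alpha s_2^{(k)}\beta_2$, $\alpha s_2^{(k)}\delta_1$ and $\alpha s_1^{(k)}\delta_2$. In each of these, every $s$-variable is accompanied by a factor $\beta_j$ or $\delta_j$, and $\beta_j,\delta_j$ appear only with positive exponent in this game, so nothing can cancel them. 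Thus $\alpha s_1^{(k)}$ and $\alpha s_2^{(k)}$ are still never produced, and the coefficient comparison goes through unchanged. This is exactly where the split randomness $s=s_1+s_2$, blinded by $\beta_j$ and $\delta_{3-j}$, does its work. If this holds, strong symbolic security follows, and GGM anonymity follows from the arguments of \cite{FABEO}, as in the previous claim.
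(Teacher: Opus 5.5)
Your argument is correct, but its decisive observation differs from the paper's. The paper expands the right-hand side into monomials and groups the terms by the blinding factors $\beta_1,\beta_2,\delta_1,\delta_2$. It then argues group by group, with some loosely justified ``has to be 0'' cancellations, that no combination reproduces the shape $s_1^{(k)}+s_2^{(k)}$.

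You instead observe that in the simplified game no level-one or level-two encoding involves $\alpha$. So the only $\alpha$-monomial on the right is $d\alpha$, and comparing coefficients of the independent monomials $\alpha s_1^{(k)}$ forces every $c_k=0$. This is shorter and fully rigorous. It also shows that the claim, as posed in the paper's simplified game, is essentially immediate.

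Your ``main obstacle'' step adds something the paper lacks. The real $\mpk$ contains $g_1^\alpha$, which the paper's simplified game omits. Once $[\alpha]_1$ is included, the products $\alpha\cdot\tau$ do combine $\alpha$ with ciphertext randomness. You check that each such product carries $s_1^{(k)}$ or $s_2^{(k)}$ only alongside a factor $\beta_j$ or $\delta_j$, and that no inverses of these are available. That is the structural fact the paper's grouping by blinding factors is gesturing at. So your version matches the real adversary view and pinpoints where the blinding is essential. Indeed, if unblinded $[s_1^{(k)}]_2$ and $[s_2^{(k)}]_2$ were available, as in the paper's single-authority-corruption game, pairing them with $[\alpha]_1$ would yield $[\alpha s^{(k)}]_T$ outright.
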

\begin{proof}
    The proof proceeds by contradiction. Suppose there exist some $z$ such that \[z=\sfspan\big(\alpha\big(s_1^{(k)}+s_2^{(k)}\big)\big) \cap \sfspan\big(\Big\{\alpha, \Big\{1, \sigma\Big\}\otimes\Big\{1, \tau\Big\}\Big\}\big),\]
    We can have 
    \begin{equation}\label{eq:z2_left}
        z=\sum_{k}c_k\alpha\big(s_1^{(k)}+s_2^{(k)}\big)
    \end{equation}
    and 
    \begin{equation}\label{eq:z2_right}
        z=d\alpha + \sum_{X\in\Big\{1, \sigma\Big\}, Y\in\Big\{1, \tau\Big\}}d_{XY}XY
    \end{equation}
    for some $c_k, d, d_{X,Y}$.

    Then we can evaluate $X\in\{1,\sigma\}$ and get Equation~\ref{eq:z2_right} equals to:    
    \begin{multline*}
        d\alpha + \sum_{k}\beta_1(e_{k,1}h_u^{(k)}+e_{k,2}h_u^{(k)}s^{(k)}) + \sum_{k}\beta_2(f_{k,1}h_u^{(k)}+f_{k,2}h_u^{(k)}s^{(k)}) \\ + \sum_{k_1k_2}\beta_1s_1^{(k_2)}(g_{k_1,k_2,1}h_u^{(k)}+g_{k_1,k_2,2}h_u^{(k)}s^{(k)}) + \\\sum_{k_1k_2}\beta_2s_2^{(k_2)}(l_{k,1}h_u^{(k)}+l_{k,2}h_u^{(k)}s_1^{(k)}) + \\\sum_{k}\delta_1(m_{k,1}h_u^{(k)}+m_{k,2}h_u^{(k)}s_1^{(k)}) + \sum_{k}\delta_2(n_{k,1}h_u^{(k)}+n_{k,2}h_u^{(k)}s_1^{(k)}) + \\\sum_{k_1k_2}\delta_1s_2^{(k_2)}(o_{k_1,k_2,1}h_u^{(k_1)}+o_{k_1,k_2,2}h_u^{(k_1)}s_1^{(k_2)}) + \\\sum_{k_1k_2}\delta_2s_1^{(k_2)}(p_{k_1,k_2,1}h_u^{(k_1)}+p_{k_1,k_2,2}h_u^{(k_1)}s_1^{(k_1)})  
    \end{multline*}
    We can group $\beta_1$ as an example, and $\beta_2$, $\delta_1$ and $\delta_2$ will have similar result. 
    \begin{multline*}           
        \sum_{k_1,k_2}\beta_1(e_{k_1,1}h_u^{(k_1)}+e_{k_1,2}h_u^{(k_1)}s^{(k_1)} + \\g_{k_1,k_2,1}h_u^{(k)}s_1^{(k_2)}+g_{k_1,k_2,2}h_u^{(k)}s^{(k)}s_1^{(k_2)} \\
    \end{multline*}
    Since the left side only contains $s_1+s_2$,  $e_{k_1,1}h_u^{(k_1)}$ and $g_{k_1,k_2,2}h_u^{(k)}s^{(k)}s_1^{(k_2)}$ has to be 0, thus we can cancle them. The formula can then be wirte as
    \begin{multline*}           
        \sum_{k_1,k_2}\beta_1(e_{k_1,2}h_u^{(k_1)}s^{(k_1)} + g_{k_1,k_2,1}h_u^{(k)}s_1^{(k_2)}) \\
        = \sum_{k_1,k_2}\beta_1((e_{k_1,2}h_u^{(k_1)} + g_{k_1,k_2,1}h_u^{(k)})s_1^{(k_2)} + e_{k_1,2}h_u^{(k_1)}s_2^{(k_2)}
    \end{multline*}
    It's easy to show this formula does not have $s_1+s_2$, and the same as the following parameters.

\end{proof}

\subsection{Security of \name}
\begin{figure*}[t]
\begin{minipage}[b]{.5\linewidth}
    \begin{minipage}[b]{.5\linewidth}
        \captionsetup{justification=centering}
        \centering
         \includegraphics[width=\linewidth]{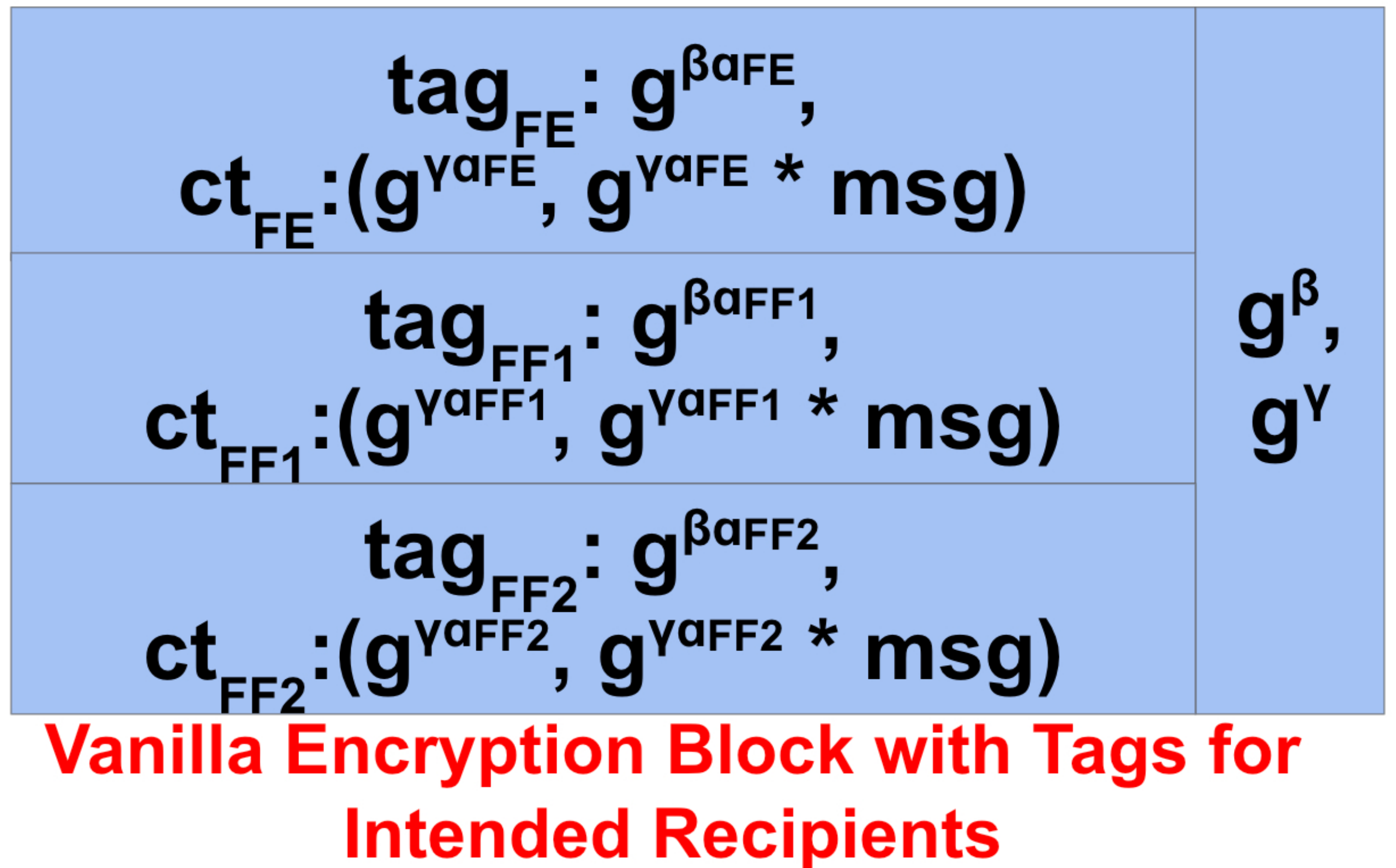}
        \subcaption{{\bf Diffie-Hellman based Construction }} 
        \label{fig:ABEAT_detail_VB}
    \end{minipage}\hfill
    \begin{minipage}[b]{.5\linewidth}
        \captionsetup{justification=centering}
        \centering
        \includegraphics[width=\linewidth]{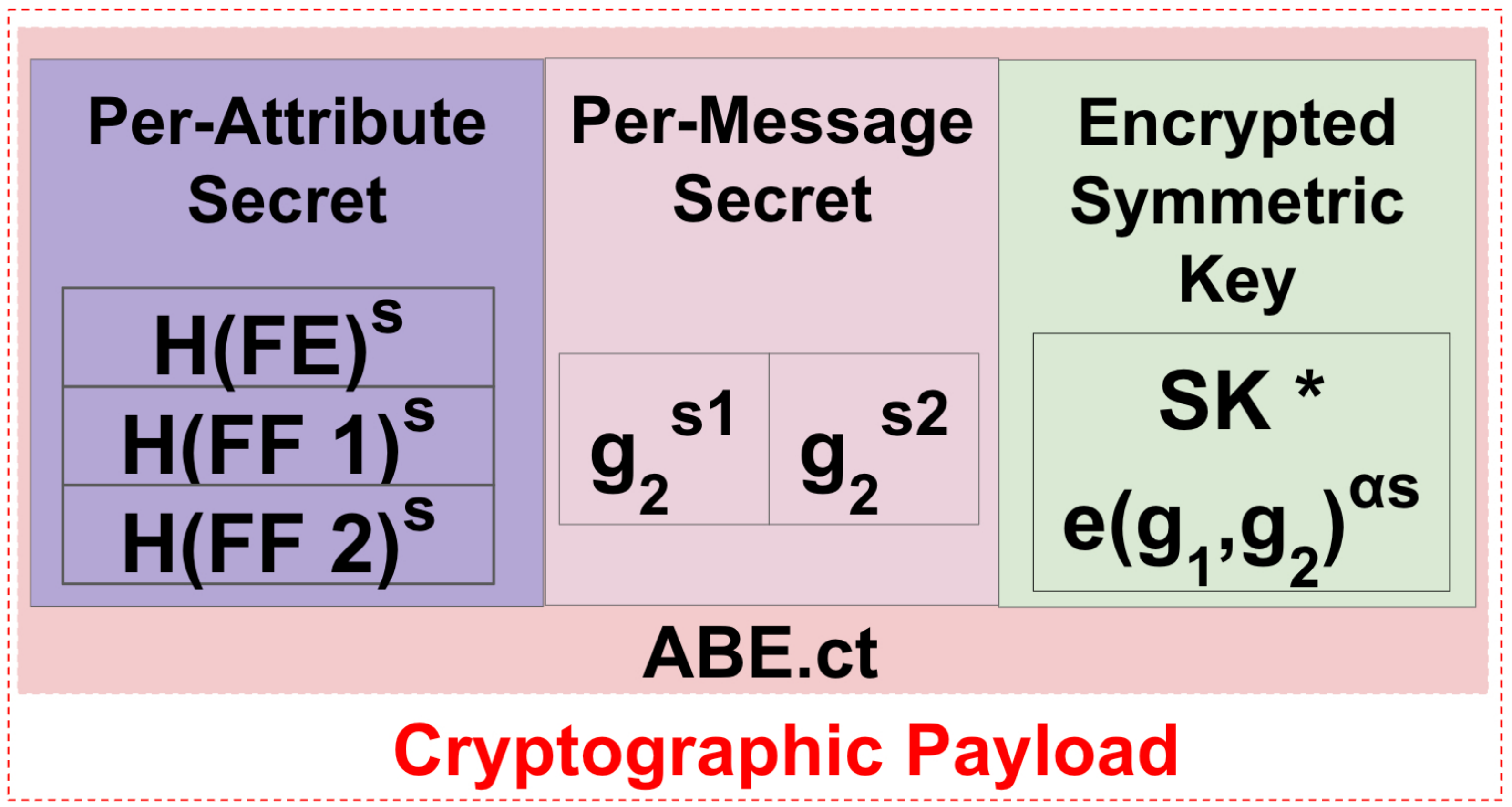}
        \vspace{.01mm}
        \subcaption{{\bf ABE based Construction}} 
        \label{fig:ABEAT_detail_CP}
    \end{minipage}
    \vspace{-5mm}
    \caption{{\bf Detailed Construction of \name}}
    \vspace{-2mm}
    \label{fig:ABEAT_detail}
\end{minipage}
\begin{minipage}[b]{.49\linewidth}
\centering
\includegraphics[width=\linewidth]{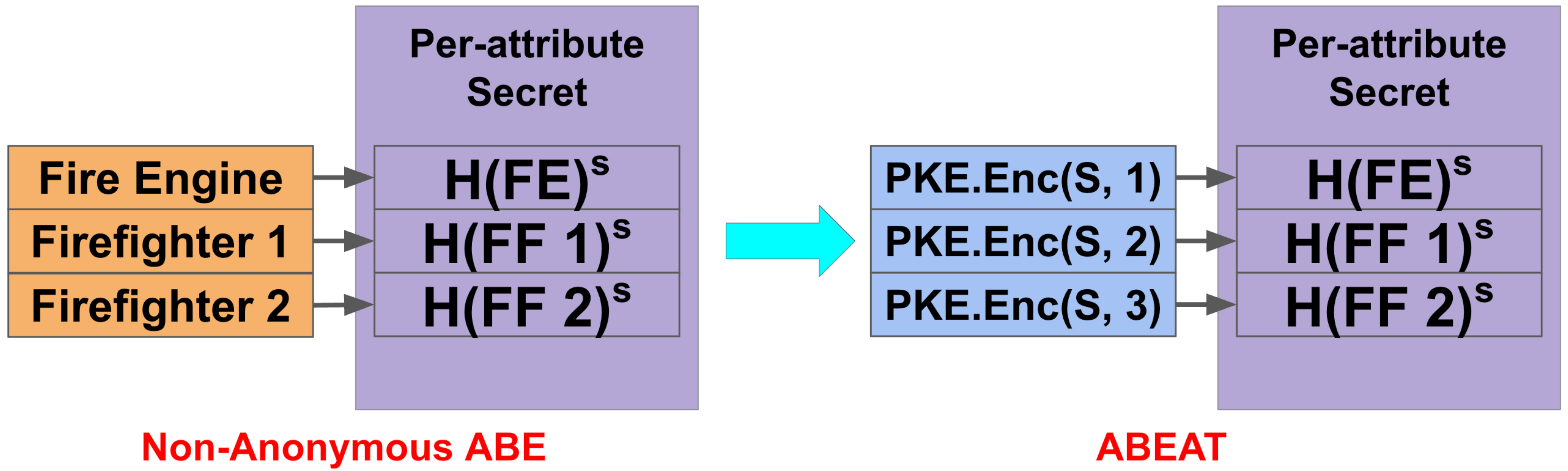}
\vspace{4mm}
\caption{{\bf Changing of Secret Structure}}
\vspace{-2mm}
\label{fig:index}
\end{minipage}
\end{figure*}
{\bf Security Game}
    We define the security game for \name as Fig.~\ref{fig:abegmsecurity}.
    \begin{figure}[!h]
        \centering
        \begin{mdframed}[font=\small]
    \begin{itemize}
        \item[{\bf 1. Setup:}] $\calC$ draws a secret bit $b\sim\{0,1\}$ and also \[\Bigl(\mpk,\msk^{(1)},\msk^{(2)},\big\{(\pk_p,\sk_p)\big\}_{p\in\calN}\Bigr)\sim\setup(\calN),\] and sends $\bigl(\mpk,\{\pk_p\}_{p\in\calN}\bigr)$ to $\calA$.
        
        \item[{\bf 2. Queries:}] The game now enters the query phase and remains here until $\calA$ chooses to proceed.  During each round, $\calA$ will issue one of the following types of queries to $\calC$ and $\calC$ will respond.
        \begin{itemize}
            \item[(i) \underline{Key Authority Query} $-$] $\calA$ sends $j\in\{1,2\}$ to $\calC$, who returns $\msk^{(j)}$.
            
            \item[(ii) \underline{Secret Key Queries} $-$] $\calA$ sends $\mathbb{A}$ to $\calC$, who returns $(\sk_{\mathbb{A}}^{(1)},\sk_{\mathbb{A}}^{(2)})$, where $\sk_{\mathbb{A}}^{(j)}\sim\keygen^{(j)}(\msk^{(j)},\mathbb{A})$, for $j=1,2$.
            
            \item[(iii) \underline{Ciphertext Queries} $-$] $\calA$ sends $(S_0,\msg_0),(S_1,\msg_1)$ to $\calC$ where $|S_0|=|S_1|$ and where $S_0$ and $S_1$ have the intended recipient sets $P_0,P_1\subset\calN$, respectively, where $|P_0|=|P_1|$.  $\calC$ draws \[\ct\sim\enc\Bigl(\mpk,S_b,\{\pk_p\}_{p\in P_b},\msg_b\Bigr)\] and sends $\ct$ to $\calA$.
    
            \item[(iv) \underline{Name Queries} $-$] $\calA$ sends $p\in\calN$ to $\calC$, who returns $\sk_p$.
        \end{itemize}
    
        \item[{\bf 3. Guess:}] $\calA$ sends $b'\in\{0,1\}$ to $\calC$, and the game ends.
        
        \item[{\bf Outcome:}] $\calA$ wins if b'=b and additionally if:
        \begin{itemize}
            \item[$-$] $\mathbb{A}(S)=0$ for all $\mathbb{A}$ and $S$ sent by $\calA$ during the secret key and ciphertext queries; and
            \item[$-$] $\calA$ asked at most one key authority query (\emph{i.e.}, $\calA$ did not receive both $\msk^{(1)}$ and $\msk^{(2)}$).  Moreover, if $\calA$ asked for one of $\msk^{(1)},\msk^{(2)}$, then $S_0=S_1$ for all ciphertext queries
            
            \item[$-$] for all $(S_0,\msg_0),(S_1,\msg_1)$ sent by $\calA$ as a ciphertext query and all $p\in\calN$ sent by $\calA$ as a name query, if $p\in P_0\cup P_1$ (where $P_i$ is the intended recipient set specified in $S_i$), then $S_0=S_1$.
        \end{itemize}
    \end{itemize}
    
        \end{mdframed}
        \vspace{-3mm}
        \caption{Security Game for KP-ABE with Split KA and A for GM}
        \vspace{-2mm}
        \label{fig:abegmsecurity}
    \end{figure}

\begin{thm}
\label{thm:abeat}
Assume $\veb$ is a scheme for generating a block of vanilla encryptions, and $\abe$ is a KP-ABE scheme with split key-authority and payload anonymity.  Then $\name$ is a KP-ABE scheme with split key-authority and anonymity for group communication.  Furthermore, the decryption costs for intended recipients and unintended recipients, respectively, are ${\sf T}_{\veb.\dec}+{\sf T}_{\abe.\dec}$ and ${\sf T}_{\veb.\dec}$, where ${\sf T}_{\veb.\dec}$ and ${\sf T}_{\abe.\dec}$ are decryption times of $\veb$ and $\abe$.
\end{thm}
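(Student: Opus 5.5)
The plan has two parts: correctness and cost, then security.

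\textbf{Correctness and cost.} Take an intended recipient subscribed to some $p\in P$ with keys $\sk_{\mathbb{A}}^{(1)},\sk_{\mathbb{A}}^{(2)}$ satisfying $\mathbb{A}(S)=1$. The $\veb$ decryption from Fig.~\ref{fig:vanillaconstruction} computes the tag $g^{\alpha_p\beta}$, finds the matching entry, and returns $S'=S$. Correctness of $\abe$, established in Appendix~\ref{appendix:correctness}, then gives $\msg$. This costs one $\veb.\dec$ followed by one $\abe.\dec$.

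For an unintended recipient $p\notin P$, the computed tag $g^{\alpha_p\beta}$ differs from every $g^{\alpha_{p'}\beta}$ with $p'\in P$, except with negligible probability. So $\veb.\dec$ outputs $\bot$, and $\dec$ in Fig.~\ref{fig:our_approach} halts before calling $\abe.\dec$. This gives the stated costs ${\sf T}_{\veb.\dec}+{\sf T}_{\abe.\dec}$ and ${\sf T}_{\veb.\dec}$. I treat the tag search as part of ${\sf T}_{\veb.\dec}$.

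\textbf{Security.} I would use a three-step hybrid argument on the game of Fig.~\ref{fig:abegmsecurity}.
\begin{itemize}
\item[$\cdot$] $H_0$ is the real game with bit $b$.
\item[$\cdot$] $H_1$ is identical except that, for every ciphertext query with $S_0\neq S_1$, the block $\bbB$ is replaced by a uniformly random block of the same shape. Here $|P_0|=|P_1|$ ensures the shape does not depend on $b$. Queries with $S_0=S_1$ keep their real blocks.
\item[$\cdot$] In $H_1$, the block carries no information about $b$, so the remaining advantage must come from $\ct_{\sf PL}$ alone.
\end{itemize}

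\textbf{Step $H_0\to H_1$.} I build a reduction $\calB$ to the vanilla block game of Fig.~\ref{fig:vanillasecurity}, which is secure by Claim~\ref{clm:vanilla_block}. $\calB$ runs $\abe.\setup$ itself, so it can answer key-authority and secret-key queries. It forwards name queries to its own challenger as secret key queries.

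For a ciphertext query, $\calB$ produces $\ct_{\sf PL}$ itself for $S_b$, with $b$ chosen by $\calB$. If $S_0\neq S_1$, it obtains $\bbB$ through a challenge query on $(P_b,S_b)$. If $S_0=S_1$, it cannot use a challenge query when some name in $P_b$ has been corrupted. In that case it needs the real encryption, which it computes directly from the public keys.

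The third winning condition of Fig.~\ref{fig:abegmsecurity} guarantees that every name $p\in P_0\cup P_1$ in a query with $S_0\neq S_1$ is never name-queried. This is exactly the restriction of the vanilla game. The main technical obstacle is that Fig.~\ref{fig:vanillasecurity} answers all ciphertext queries with the same challenge bit. To let some queries be real, I would route them through the public-key encryption algorithm rather than the challenger. The wrinkle is the shared randomness $g^\beta,g^\gamma$ inside one block; I expect rerandomization of DDH triples to handle it, as in the proof of Claim~\ref{clm:vanilla_block}.

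\textbf{Step from $H_1$.} I reduce to the payload-anonymity game of Fig.~\ref{fig:abepasecurity}, which is secure by the GGM claims of Appendix~\ref{appendix:FABEO}. The reduction samples the $\veb$ keys itself, so it can answer all name queries. It forwards key-authority and secret-key queries to its challenger. For each ciphertext query it sends $(S_0,\msg_0),(S_1,\msg_1)$ to its challenger and attaches a random block when $S_0\neq S_1$, or the real block computed from $S_0=S_1$ and its own keys otherwise.

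The constraints $\mathbb{A}(S)=0$, at most one master-key query, and $S_0=S_1$ after a master-key query transfer verbatim between the two games. So the advantage in $H_1$ is bounded by the ABE advantage, and summing the two hybrid gaps completes the theorem.
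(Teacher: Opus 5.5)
Your proposal is correct and follows the paper's proof essentially step for step. It uses the same hybrid that replaces $\bbB$ by a random block only when $S_0\neq S_1$. The reduction to $\veb$ security is the same, with $\calB$ running $\abe.\setup$ itself and computing honest blocks for $S_0=S_1$ queries from the public keys. So is the reduction to $\abe$ security, with $\calB$ sampling the $\veb$ keys itself.

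The shared-randomness ``wrinkle'' you anticipate does not arise. Each block is produced entirely by $\calB$ (when $S_0=S_1$) or entirely by the challenger (when $S_0\neq S_1$, where the third winning condition ensures no name in $P_b$ is ever name-queried), so no DDH rerandomization is needed. Such rerandomization would also be out of place in a black-box reduction to a generic $\veb$. The one detail to pin down is $\calB$'s output in the first reduction: it should report whether $\calA$'s guess equals the bit $b$ that $\calB$ chose itself, since $\calA$ is guessing $\calB$'s bit, not the $\veb$ challenger's.
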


\paragraph*{Remark.} Our implementations of the subroutines in the next section will obtain ${\sf T}_{\abe.\dec}\gg{\sf T}_{\veb.\dec}$, meaning that recipient verification is efficient compared to the decryption cost of intended recipients.

\begin{proof}
It is clear that $\name$ has the required syntax and correctness follows immediately from correctness of $\veb$ and $\abe$.  The claims about the runtime of decryption are clear by inspecting $\name$'s decryption procedure.  Therefore, it suffices to prove security.  Suppose an efficient adversary $\calA$ plays $\mathcal{G}$, the security game for KP-ABE with split key-authority and anonymity for group messaging (see Fig.~\ref{fig:abegmsecurity}).  Consider the hybrid game $\mathcal{H}$ which is the same as $\mathcal{G}$ except for how the challenger answers the ciphertext queries.  Upon receiving $(S_0,\msg_0),(S_1,\msg_1)$, $\calC$ checks whether $S_0=S_1$.  If equality holds, then $\calC$ responds to the query just as it would in $\mathcal{G}$.  If $S_0\neq S_1$, then $\calC$ responds with $\ct=(\bbB,\ct_{\sf PL})$ prepared as follows:
\begin{itemize}
    \item[$\cdot$] $(S_b,\ct_{\sf PL})\sim\abe.\enc(\mpk,S_b,\msg_b)$ is drawn;
    \item[$\cdot$] $\bbB\sim\$$ is drawn uniformly at random.
\end{itemize}

The theorem follows from Claims~\ref{clm:hybrid1} and~\ref{clm:hybrid2} below.
\end{proof}

\vspace{-4mm}
\begin{clm}
    \label{clm:hybrid1}
    For any PPT adversary $\calA$, \[\Big|{\rm Pr}\bigl[\calA\text{ wins }\mathcal{G}\bigr]-{\rm Pr}\bigl[\calA\text{ wins }\mathcal{H}\bigr]\Big|={\sf negl}(\lambda).\]
\end{clm}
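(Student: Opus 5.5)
We reduce Claim~\ref{clm:hybrid1} to the security of the vanilla encryption block (Claim~\ref{clm:vanilla_block}, i.e., the game of Fig.~\ref{fig:vanillasecurity}). Games $\mathcal{G}$ and $\mathcal{H}$ differ only on ciphertext queries with $S_0\neq S_1$. In $\mathcal{G}$ the block is $\bbB\sim\veb.\enc(\{\pk_p\}_{p\in P_b},S_b)$. In $\mathcal{H}$ it is uniformly random. The payload $\ct_{\sf PL}$ is generated identically in both games.

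Given $\calA$ whose winning probabilities in $\mathcal{G}$ and $\mathcal{H}$ differ non-negligibly, we build $\calB$ that plays the vanilla block game of Fig.~\ref{fig:vanillasecurity} with challenger $\calC'$ (hidden bit $b^\ast$) and simulates $\calA$'s game. The simulation proceeds as follows.
\begin{itemize}
\item[$\cdot$] $\calB$ runs $\abe.\setup(1^\lambda)$ itself, so it knows $\msk^{(1)},\msk^{(2)}$. It also picks its own bit $b$.
\item[$\cdot$] $\calB$ forwards $\mpk$ together with the public keys $\{\pk_p\}_{p\in\calN}$ received from $\calC'$.
\item[$\cdot$] Key-authority queries and secret key queries are answered directly using $\msk^{(1)},\msk^{(2)}$.
\item[$\cdot$] A name query $p$ is forwarded to $\calC'$ as a secret key query, and the returned $\sk_p$ is passed to $\calA$.
\item[$\cdot$] For a ciphertext query, $\calB$ computes $\ct_{\sf PL}$ itself via $\abe.\enc(\mpk,S_b,\msg_b)$. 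If $S_0=S_1$, it builds $\bbB$ honestly from the public keys. If $S_0\neq S_1$, it sends $(P_b,S_b)$ to $\calC'$ as a ciphertext query and uses the returned block as $\bbB$.
\item[$\cdot$] At the end, $\calB$ outputs $b^{\ast\prime}=0$ if $\calA$'s guess equals $b$ (i.e., $\calA$ wins), and $1$ otherwise.
\end{itemize}

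When $b^\ast=0$, $\calA$'s view is exactly $\mathcal{G}$. When $b^\ast=1$, it is exactly $\mathcal{H}$. So $\calB$'s advantage equals half of $|\Pr[\calA\text{ wins }\mathcal{G}]-\Pr[\calA\text{ wins }\mathcal{H}]|$. One technicality is that $\calA$'s win condition also requires its legality constraints, but $\calB$ can check these itself since it sees all queries. Claim~\ref{clm:vanilla_block} together with the DDH-type assumption then gives the negligible bound.

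The main obstacle is making sure $\calB$'s queries to $\calC'$ are legal. The vanilla game requires that no secret-key-queried $p$ lies in any ciphertext-queried $P$. By the third outcome condition of Fig.~\ref{fig:abegmsecurity}, any name query $p\in P_0\cup P_1$ forces $S_0=S_1$, and $\calB$ never forwards such queries. Hence every forwarded $P_b$ avoids all name-queried $p$, provided $\calA$ is legal. If $\calA$ is illegal, it loses both games, and $\calB$ can abort with a random output; one should note this changes the win probabilities identically in $\mathcal{G}$ and $\mathcal{H}$.

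A second subtlety is that the vanilla game lets queries be adaptive in either order. If $\calA$ later name-queries some $p$ that appeared in an earlier forwarded $P_b$, it is illegal by the same condition, so the abort is handled as above. Another point is that "uniformly random $\bbB$" must have the same shape (length $|P_b|$) in both games. This holds because $|P_0|=|P_1|$.
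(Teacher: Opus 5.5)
Your reduction is the one the paper uses: $\calB$ runs $\abe.\setup$ itself and answers key-authority and secret-key queries with $\msk^{(1)},\msk^{(2)}$. It forwards name queries to the $\veb$ challenger, builds $\bbB$ honestly when $S_0=S_1$, and forwards $(P_b,S_b)$ when $S_0\neq S_1$, so the challenger's two cases give exactly $\mathcal{G}$ and $\mathcal{H}$.

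Your version is more careful than the paper's on two points. First, the paper has $\calB$ forward $\calA$'s guess bit, which is a guess about $\calB$'s own $b$ rather than $\calC$'s bit. Second, the paper never checks that $\calB$'s $\veb$ queries are legal. The only nit: when $\calA$ becomes illegal, $\calB$ should stop before forwarding the offending query and output $1$, as in your first description, not a random bit. This matters because $\Pr[\calA\text{ illegal}]$ need not be identical in $\mathcal{G}$ and $\mathcal{H}$.
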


\begin{proof}
    This is by reduction to the security of $\veb$.  Specifically, consider the following adversary $\calB$ who plays the security game of $\veb$ (see Fig.~\ref{fig:vanillasecurity}).  On receiving $\{\pk_p\}_{p\in\calN}$ from $\calC$, $\calB$ runs $\abe.\setup(1^\lambda)$ and invokes $\calA$, sending $\mpk$ and $\{\pk_p\}_{p\in\calN}$.  $\calB$ then chooses $b\sim\{0,1\}$ and enters the query phase with $\calA$, responding to $\calA$'s queries as follows:
    \begin{itemize}
        \item $\calB$ responds to key-authority and secret-key queries honestly using $(\msk^{(1)},\msk^{(2)})$.

        \item $\calB$ responds to the name query $p\in\calN$ by forwarding $p$ to $\calC$, receiving $\sk_p$ and forwarding $\sk_p$ back to $\calA$.

        \item On receiving the ciphertext query $(S_0,\msg_0),(S_1,\msg_1)$ from $\calA$, $\calB$ returns $(\bbB,\ct_{\sf PL})$, prepared as follows:
        \begin{itemize}
            \item[$-$] if $S_0=S_1$, then let $S=S_0=S_1$ and let $P\subset\calN$ be the intended recipient set inside $S$.  $\calB$ draws: 
            \begin{itemize}
                \item[$\cdot$] $\bbB\sim\veb.\enc\bigl(P,\{\pk_p\}_{p\in P},S\bigr)$; and
                \item[$\cdot$] $(S,\ct_{\sf PL})\sim\abe.\enc(\mpk,S,\msg_b)$.
            \end{itemize}

            \item[$-$] if $S_0\neq S_1$, $\calB$ draws $(S_b,\ct_{\sf PL})\sim\abe.\enc(\mpk,S_b,\msg_b)$ and sends the ciphertext query $(P_b,S_b)$ to $\calC$ where $P_b\subset\calN$ is the intended recipient set inside $S_b$, receiving $\bbB$.
        \end{itemize}
    \end{itemize}
    Finally, $\calB$ forwards $\calA$'s guess bit to $\calC$.  When $\calC$ answers $\calB$'s ciphertext queries honestly, $\calA$ is playing $\mathcal{G}$, while when $\calC$ is answering $\calB$'s ciphertext queries with random strings, $\calA$ is playing $\mathcal{H}$.  Thus, the claim follows from security of $\veb$. 
\end{proof}

\vspace{-3mm}
\begin{clm}
    \label{clm:hybrid2}
    \vspace{-1mm}
    For any PPT adversary $\calA$, \[{\rm Pr}\bigl[\calA\text{ wins }\mathcal{H}\bigr]=\frac{1}{2}+{\sf negl}(\lambda).\]
\end{clm}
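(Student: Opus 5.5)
We will prove Claim~\ref{clm:hybrid2} by a direct reduction to the security of $\abe$, the KP-ABE scheme with split key-authority and payload anonymity (game of Fig.~\ref{fig:abepasecurity}). The point is that in $\mathcal{H}$ the vanilla block carries no information about $b$ whenever $S_0\neq S_1$, so the only remaining dependence on $b$ is through $\ct_{\sf PL}$. We will build an adversary $\calB$ that plays the $\abe$ game as adversary and acts as challenger to $\calA$ in $\mathcal{H}$.

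\textbf{The reduction.} $\calB$ receives $\mpk$ from the $\abe$ challenger. It runs $\veb.\keygen(\calN)$ itself to obtain all pairs $\{(\pk_p,\sk_p)\}_{p\in\calN}$, and sends $\bigl(\mpk,\{\pk_p\}_{p\in\calN}\bigr)$ to $\calA$. Key-authority queries and secret-key queries are forwarded verbatim to the $\abe$ challenger, and the replies are relayed back. Name queries are answered locally with $\sk_p$.

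On a ciphertext query $(S_0,\msg_0),(S_1,\msg_1)$, $\calB$ forwards the pair to the $\abe$ challenger and receives $\ct_{\sf PL}$, an encryption of $\msg_b$ under $S_b$. It then builds $\bbB$ in one of two ways:
\begin{itemize}
\item[$\cdot$] if $S_0\neq S_1$, it draws $\bbB\sim\$$ uniformly;
\item[$\cdot$] if $S_0=S_1=S$, it computes $\bbB\sim\veb.\enc(P,\{\pk_p\}_{p\in P},S)$, which is possible since $S$ and the $\pk_p$ are known and do not depend on $b$.
\end{itemize}
It returns $(\bbB,\ct_{\sf PL})$ and finally outputs $\calA$'s guess.

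\textbf{Perfect simulation.} By construction, $\calA$'s view is distributed exactly as in $\mathcal{H}$, with $\calB$'s hidden bit equal to the $\abe$ challenger's bit. Moreover, whenever $\calA$'s transcript satisfies the winning conditions of Fig.~\ref{fig:abegmsecurity}, $\calB$'s transcript satisfies those of Fig.~\ref{fig:abepasecurity}. The conditions $\mathbb{A}(S)=0$, at most one key-authority query, and $S_0=S_1$ after an authority query are inherited verbatim, and name queries impose no constraint in the $\abe$ game. Hence $\Pr[\calA\text{ wins }\mathcal{H}]=\Pr[\calB\text{ wins}]\le 1/2+{\sf negl}(\lambda)$, using the $\abe$ security established in the GGM (the symbolic-security claims of the previous subsection).

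\textbf{Main obstacle.} The subtle step is the case $S_0=S_1$ with $\msg_0\neq\msg_1$. Here the vanilla block is an honest encryption of $S$, so an adversary who has made a name query for some $p\in P$ can recover $S$. This does not help it: $S$ is identical in both worlds, and distinguishing $\msg_0$ from $\msg_1$ still requires breaking the confidentiality of $\ct_{\sf PL}$, which the $\abe$ game permits only under $\mathbb{A}(S)=0$.

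A second check is that the $\abe$ game's constraint $|S_0|=|S_1|$ is satisfied, since it is imposed in Fig.~\ref{fig:abegmsecurity} as well. I also need that ``$\bbB\sim\$$'' means a sample from a fixed distribution determined only by $|P_0|=|P_1|$. Otherwise the length of $\bbB$ could leak $b$, so I would explicitly fix this distribution to depend on $|P_b|$ alone.
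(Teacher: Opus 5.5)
Your proposal is correct and follows the paper's proof essentially step for step. In both, $\calB$ runs $\veb.\keygen$ itself and answers name queries locally, and it forwards key-authority, secret-key and ciphertext queries to the $\abe$ challenger. It then attaches a uniformly random $\bbB$ when $S_0\neq S_1$, or an honest $\veb.\enc(P,\{\pk_p\}_{p\in P},S)$ when $S_0=S_1$.

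Your extra checks are sensible refinements that the paper leaves implicit: tying the length of the random $\bbB$ to $|P_0|=|P_1|$, and carrying the winning conditions over to the $\abe$ game. One small correction: because $\mathcal{H}$ has the additional name-query condition, you only get ${\rm Pr}[\calA\text{ wins }\mathcal{H}]\le{\rm Pr}[\calB\text{ wins}]$, not equality. That inequality is all the claim needs.
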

\begin{proof}
    This is by reduction to the security of $\abe$.  Specifically, consider the following adversary $\calB$ who plays the security game for $\abe$ (see Fig.~\ref{fig:abegmsecurity}).  On receiving $\mpk$ from $\calC$, $\calB$ runs $\veb.\keygen(\calN)$ and invokes $\calA$, sending $\mpk$ and $\{\pk_p\}_{p\in\calN}$.  $\calB$ then enters the query phase with $\calA$, responding to $\calA$'s queries as follows:
    \begin{itemize}
        \item $\calB$ responds to the name query $p\in\calN$ by sending $\sk_p$ to $\calA$.
        
        \item $\calB$ handles key-authority and secret-key queries by forwarding the query to $\calC$ and forwarding $\calC$'s response back to $\calA$.

        \item $\calB$ responds to the ciphertext query $(S_0,\msg_0),(S_1,\msg_1)$ with $(\bbB,\ct_{\sf PL})$, prepared as follows:
        \begin{itemize}
            \item[$-$] $\calB$ forwards the ciphertext query to $\calC$ and receives $\ct_{\sf PL}$ back;
            \item[$-$] if $S_0\neq S_1$, then $\calB$ draws $\bbB\sim\$$ uniformly at random;
            \item[$-$] if $S_0=S_1$, then let $S=S_0=S_1$ and let $P\subset\calN$ be the intended recipient set inside $S$.  $\calB$ draws $\bbB\sim\veb.\enc\bigl(P,\{\pk_p\}_{p\in P},S\bigr)$.
        \end{itemize}
    \end{itemize}
Finally, when $\calA$ outputs its guess bit $b'$, $\calB'$ simply copies $b'$ to $\calC$.  If $\calA$ wins $\mathcal{H}$ with non-negligible advantage, then $\calB$ breaks the security of $\abe$.
\end{proof}

\section{Supplemental Material}
\subsection{How \name Meets the Requirements of Special Operations}
We summarize the key features of \name that make it suitable for secure communication among dynamically formed groups that is suitable for a number of situations allowing subsequent verification and auditing of the information exchanged. 
\paragraph*{Anonymization and fast recipient verification}
\name provides recipient anonymization and fast verification through the Vanilla Encryption Block.
This protects the identity of the recipients during
transmission. In addition, fast verification lets unintended recipients efficiently determine whether a message is for them without heavy computation. Together, these features
significantly reduce system overhead and allows \name the flexibility to use a variety of underlying
communication channels (\eg multicast or broadcast) without incurring high cost for each receiver.
\paragraph*{Group messaging on a per-message basis}
\name supports group communication on a per-message basis, eliminating the need for explicit and
costly group formation protocol. For example, with MLS~\cite{mls}, establishing a group typically requires multiple
rounds of interaction among members and relies on strict message ordering and consistency across
participants---a requirement that is difficult to satisfy in disruption-prone networks.
In contrast, \name defines the message group at the time of encryption by the sender,
allowing communication without having to maintain information of the static group. This makes \name well-suited for
highly dynamic environments where membership changes frequently and where the sequence of message arriving is not guaranteed.
\paragraph*{Split key authorities to mitigate risk of compromised keys}
\name employs a split-authority design to reduce the security risk associated with compromised keys.
No single authority holds enough information to decrypt ciphertexts, preventing unilateral abuse or
leakage of a single master secret key. However, when auditing is required, a higher-level authority may combine
the partial master secret keys held by each authority to decrypt and review messages as necessary. This
design balances security against accountability, providing operational security without sacrificing
auditability.

\begin{figure}[t]
\centering
\includegraphics[width=\linewidth]{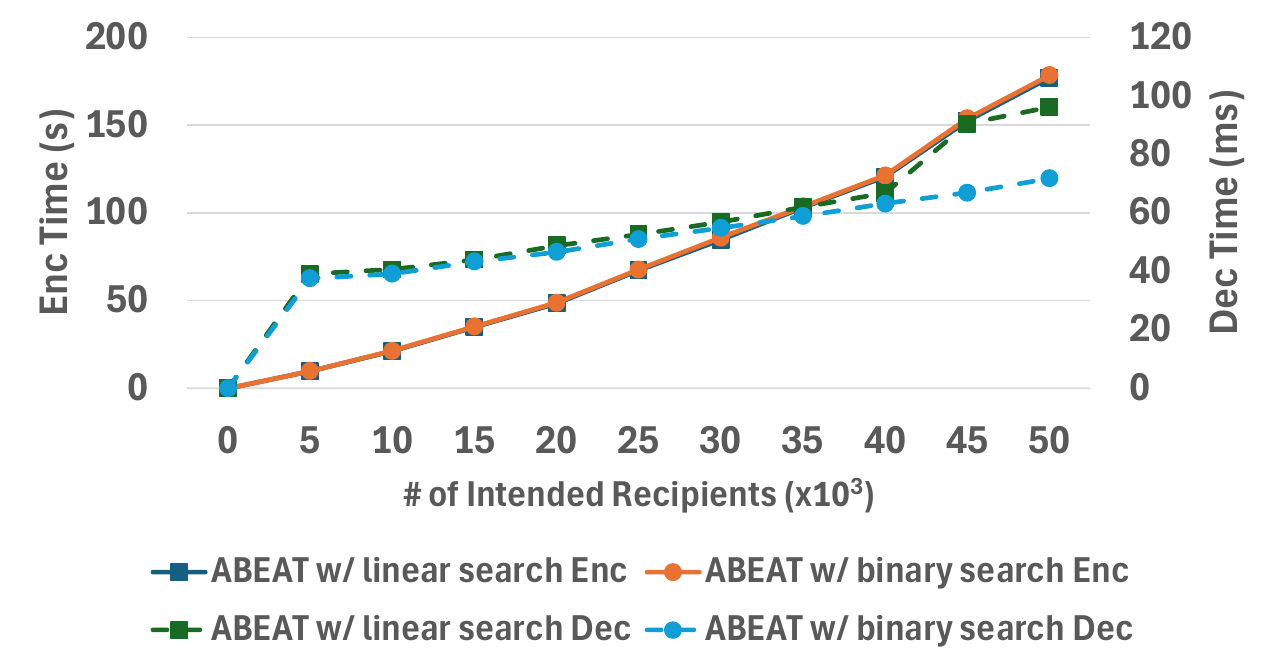}
\caption{{\bf \name w/ Binary Search}}
\label{fig:ABEAT_BS}
\end{figure}

\subsection{Implementation Details}
Fig.~\ref{fig:ABEAT_detail} shows an example construction of \name that has ``Fire Engine'',  ``Firefighter 1'', and ``Firefighter 2'' as intended recipients. An important part of $\abe.\ct$ is the set of per-attribute secrets, since recipients are required to locate the corresponding per-attribute secret to enable decryption.
As Fig.~\ref{fig:index} shows, in non-anonymous KP-ABE, $S$ will indicate the set of per-attribute secrets. However, in \name, $S$ is concealed, thus making it challenging for a recipient to identify the corresponding secret. To address this, while maintaining efficiency, we implement the vanilla encryption block by also encrypting an index that is used to map to the corresponding secret for each intended recipient.

\subsection{Performance of \name with Binary Search}
\label{sec:BS}
When the number of entries in the vanilla encryption block is large, a linear search for the tag could be inefficient. 
One solution is to use binary search instead of linear search when the number of entries in the vanilla encryption block is large. We use the x-coordinate as the key for binary search.

\paragraph*{Encryption} 
We insert entries into the vanilla encryption block in sorted order, based on the tag. The solid line using the left y-axis in Fig.~\ref{fig:ABEAT_BS} shows the encryption time of \name for both linear and binary search. The additional overhead for encryption for supporting binary search is only $1\%$.
\vspace{2mm}
\paragraph*{Decryption} The dotted line using the right y-axis in Fig.~\ref{fig:ABEAT_BS} shows the decryption time of \name with linear and binary search under a large intended recipients (vanilla encryption block size). Binary search considerably reduces the time of locating the correct tag. It is approximately $3\%$ to $35\%$ faster than linear search when the intended recipient set is large.

\end{document}